\newcommand{\ldbrack}{\llbracket}
\newcommand{\rdbrack}{\rrbracket}
\title{Timed games and deterministic separability}
\author{Lorenzo Clemente}{University of Warsaw, Poland}{clementelorenzo@gmail.com}{https://orcid.org/0000-0003-0578-9103}{Partially supported by the Polish NCN grant 2017/26/D/ST6/00201}
\author{Sławomir Lasota}{University of Warsaw, Poland}{sl@mimuw.edu.pl}{https://orcid.org/0000-0001-8674-4470}
{Partially supported by the ERC project Lipa (grant agreement No. 683080)}
\author{Radosław Piórkowski}{University of Warsaw, Poland}{r.piorkowski@mimuw.edu.pl}{https://orcid.org/0000-0002-9643-182X}
{Partially supported by the Polish NCN grant 2017/27/B/ST6/02093}
\authorrunning{Lorenzo Clemente, Sławomir Lasota, and Radosław Piórkowski} %TODO mandatory. First: Use abbreviated first/middle names. Second (only in severe cases): Use first author plus 'et al.'
\keywords{Timed automata, separability problems, timed games.} %TODO mandatory; please add comma-separated list of keywords
\newcommand{\proj}[1]{\mathsf{proj}(#1)}
\newcommand{\projinv}[1]{\mathsf{proj}^{-1}(#1)}
\newcommand{\dom}[1]{\textsf{dom}(#1)}
\newcommand{\one}[1]{\mathbf{0}(#1)}
\newcommand{\tick}{\boxempty} % {\odot}
\newcommand{\para}[1]{\subparagraph*{#1.}}
\renewcommand{\thanks}{r_\emptyset}%{\square}
\newcommand{\successor}[2]{\text{\sc succ}_{#1}(#2)}
\newcommand{\xsuccessor}[3]{\successor{#1}{#2, #3}}
\newcommand{\A}{\mathcal{A}}
\newcommand{\B}{\mathcal B}
\renewcommand{\S}{\mathcal{S}}
\newcommand{\startletter}{\rhd} %\triangleright
\newcommand{\ignore}[1]{}
\newcommand{\goesto}[1]{\xrightarrow{#1}}
\newcommand{\lang}[1]{L(#1)}
\newcommand{\elang}[1]{L_\varepsilon(#1)}
\newcommand{\omegalang}[1]{L^\omega(#1)}
\newcommand{\omegaelang}[1]{L^\omega_\varepsilon(#1)}
\newcommand{\tr}{\mathsf{tr}}
\newcommand{\langts}[1]{\lang{#1}}         %{\langtsa{}{#1}}
\newcommand{\true}{\mathbf{true}}
\newcommand{\false}{\mathbf{false}}
\newcommand{\N}{\mathbb N}
\newcommand{\Z}{\mathbb Z}
\newcommand{\R}{\mathbb R}
\newcommand{\Rpos}{\R_{\geq 0}}
\newcommand{\X}{\mathtt X}
\newcommand{\Y}{\mathtt Y}
\newcommand{\T}{\mathtt T}
\newcommand{\x}{\mathtt x}
\renewcommand{\r}{\mathtt r}
\newcommand{\f}{\mathtt f}
\newcommand{\y}{\mathtt y}
\renewcommand{\L}{\mathtt L}
\newcommand{\Ii}{\mathtt I}
\newcommand{\F}{\mathtt F} % SL: \renew... --> \new...
\newcommand{\op}{\mathtt {op}}
\newcommand{\automaton}{\tuple{\Sigma, \L, \X, \Ii, \F, \Delta}}
\newcommand{\extend}[3]{#3[#1 \mapsto #2]}
\newcommand{\untime}[1]{\textsf{untime}(#1)}
\newcommand{\untimeinv}[1]{\textsf{untime}^{-1}(#1)}
\newcommand{\transition}[5]{\tuple{#1, #2, #3, #4, #5}}
\newcommand{\acc}{\ensuremath{\mathsf{acc}}}
\newcommand{\rej}{\ensuremath{\mathsf{rej}}}
\newcommand{\Acc}{{\mathsf{Acc}}}
\newcommand{\Rej}{{\mathsf{Rej}}}
\newcommand{\zeroregion}{\r_0}%{0\textsf{region}}
\newcommand{\fract}[1]{\mathsf{fract}({#1})}
\newcommand{\req}[2]{\sim_{#1, #2}}
\newcommand{\class}[2]{[#2]_{#1}}
\newcommand{\region}[3]{\class {#1, #2} {#3}}
\newcommand{\regions}[2]{\textsf{Reg}(#1,#2)}
\newcommand{\fregions}[1]{\textsf{FReg}(#1)}
\newcommand{\OK}{\mathsf{OK}}
\newcommand{\ERROR}{\mathsf{ERROR}}
\newcommand{\DTA}{\textsf{\sc dta}\xspace}
\newcommand{\kDTA}[1]{\ensuremath{#1\textsf{-\sc dta}}\xspace}
\newcommand{\kmDTA}[2]{\ensuremath{#1, #2\textsf{-\sc dta}}\xspace}
\newcommand{\NTA}{\textsf{\sc nta}\xspace}
\newcommand{\kNTA}[1]{$#1$-\textsf{\sc nta}\xspace}
\newcommand{\kmNTA}[2]{$#1,#2$-$\textsf{\sc nta}$\xspace}
\newcommand{\NTAe}{\ensuremath{\textsf{\sc nta}^{\varepsilon}}\xspace}
\newcommand{\sem}[1]{\left\ldbrack#1\right\rdbrack}
\newcommand{\sep}{\ | \ }
\newcommand{\tuple}[1]{\left(#1\right)}
\newcommand{\set}[1]{\left\{ #1 \right\}}
\newcommand{\setof}[2]{\set{#1 \; \middle| \; #2}}
\newcommand{\LCM}{\textsf{LCM}\xspace}
\newcommand{\kLCM}[1]{$#1$-\textsf{LCM}\xspace}
\newcommand{\incr}[1]{#1\,\texttt{++}}
\newcommand{\decr}[1]{#1\,\texttt-\texttt-} % interestingly \texttt{--} does not work
\newcommand{\ztest}[1]{#1\stackrel ? {\texttt=} 0}
\newcommand{\reachset}[1]{\text{Reach}(#1)}
\newif\ifstartedinmathmode
\renewcommand*{\st}{%s.t.~
  \relax\ifmmode\startedinmathmodetrue\else\startedinmathmodefalse\fi
  \ifstartedinmathmode{\;\cdot\;}\else{s.t.~}\fi%
}
\newcommand{\wlg}{w.l.o.g.~}
\newcommand{\wrt}{w.r.t.~}
\newcommand{\card}[1]{|{#1}|}
\newcommand{\pspace}{{\sc PSpace}\xspace}
\newcommand{\game}[3]{G_{#1, #2}(#3)}
\newcommand{\M}{\mathcal M}
\newcommand{\Play}[1]{\mathsf{Play}(#1)}
\newcommand{\Run}[1]{\mathsf{Run}(#1)}
\newcommand{\iRun}[1]{\mathsf{Run_\omega}(#1)}
\newcommand{\I}{\text{I}\xspace}
\newcommand{\II}{\text{II}\xspace}
\newcommand{\rtop}[1]{\mathsf{r2p}(#1)}
\crefname{claim}{Claim}{Claims}
\Crefname{claim}{Claim}{Claims}
\crefname{lemma}{Lemma}{Lemmas}
\Crefname{Lemma}{Lemma}{Lemmas}
\crefname{theorem}{Theorem}{Theorems}
\Crefname{Theorem}{Theorem}{Theorems}
\crefname{fact}{Fact}{Facts}
\Crefname{fact}{Fact}{Facts}
\begin{document}

\maketitle

\begin{abstract}
	We study a generalisation of Büchi-Landweber games to the timed setting.
	The winning condition is specified by a non-deterministic timed automaton
	with epsilon transitions
	and only Player I can elapse time.
	We show that for fixed number of clocks and maximal numerical 
	constant available to Player II,
	it is decidable whether she has a winning timed controller using these
	resources.
	More interestingly, we also show that the problem remains decidable
	even when the maximal numerical constant is not specified in advance,
	which is an important technical novelty
	not present in previous literature on timed games.
	We complement these two decidability result
	by showing undecidability when the number of clocks available to Player II is not fixed.

	As an application of timed games, and our main motivation to study them,
	we show that they can be used to solve 
	the deterministic separability problem for nondeterministic timed automata with epsilon transitions.
	This is a novel decision problem about timed automata which has not been studied before.
	We show that separability is decidable when the number of clocks of the separating automaton is fixed
	and the maximal constant is not.
	The problem whether separability is decidable without bounding the number of clocks of the separator
	remains an interesting open problem.
\end{abstract}

%Style guidelines:
%
%\begin{enumerate}
%	\item Mathematical operators appear by themselves \verb|+| and not sorrounded by braces \verb|{+}|.
%	\item Use standard lipics formatting constructs and not special-tailored ones.
%	\item Double quotation marks ``\_'' and not '\_'.
%	\item The order of words in ``elements of $X$ we call \emph{name}'' and similarly ``by \dots we mean \dots'' is innatural. Better:
%	``A \emph{name} is an element of $X$''.
%	\item Favour using ``since'' instead of ``as'', as (!) the latter has also other meanings.
%\end{enumerate}

\newpage

% !TEX root = main.tex

\section{Introduction}

\subparagraph{Separability.}

Separability is a classical problem in theoretical computer science and mathematics.
A set $S$ \emph{separates} two sets $L, M$
if $L \subseteq S$ and $S \cap M = \emptyset$.
%\footnote{Our definition of separability is symmetric since the class of separators that we will consider is closed under complement.}.
%
Intuitively, a separator $S$ provides a certificate of disjointness,
yielding information on the structure of $L, M$ up to some granularity.
There are many elegant results in computer science and mathematics
showing that separators with certain properties always exist,
such as Lusin's separation theorem in topology
(two disjoint analytic sets are separable by a Borel set),
Craig's interpolation theorem in logic
(two contradictory first-order formulas can be separated by one containing only symbols in the shared vocabulary),
in model theory (two disjoint projective classes of models are separable by an elementary class),
in formal languages (two disjoint B\"uchi languages of infinite trees are separable by a weak language,
generalising Rabin's theorem \cite{Rabin:Weak:1970}),
in computability (two disjoint co-recursively enumerable sets are separable by a recursive set),
in the analysis of infinite-state systems
(two disjoint languages recognisable by well-structured transition systems are regular separable
\cite{CzerwinskiLasotaMeyerMuskallaKumarSaivasan:CONCUR:2018}), etc.

When separability is not trivial,
one may ask whether the problem is decidable.
Let $\mathcal C$ and $\mathcal S$ be two classes of sets.
The \emph{$\mathcal S$-separability} problem for $\mathcal C$
amounts to decide whether, for every input sets $L, M \in \mathcal C$
there is a set $S \in \mathcal S$ separating $L, M$.
Many results of this kind exist
when $\mathcal C$ is the class of regular languages of finite words over finite alphabets,
and $\mathcal S$ ranges over piecewise-testable languages
\cite{PlaceRooijenZeitoun:MFCS:2013,CzerwinskiMartensMasopust:ICALP:2013}
(later generalised to context-free languages \cite{CzerwinskiMartensVanRooijenZeitoun:FCT:2015}
and finite trees \cite{Goubault-Larrecq:Schmitz:ICALP:2016}),
locally and locally threshold testable languages \cite{PlaceRooijenZeitoun:LMCS:2014},
first-order logic definable languages \cite{PlaceZeitoun:LMCS:2016}
(generalised to some fixed levels of the first-order hierarchy \cite{PlaceZeitoun:ICALP:2014}).
For classes of languages $\mathcal C$ beyond the regular ones,
decidability results are more rare.
For example, regular separability of context-free languages is undecidable
\cite{SzymanksiWilliams:SIAM:1976,Hunt:JACM:1982,Kopczynski:LICS:2016}.
Nonetheless, there are positive decidability results for separability problems on several infinite-state models,
such as Petri nets \cite{ClementeCzerwinskiLasotaPaperman:STACS:2017},
Parikh automata \cite{ClementeCzerwinskiLasotaPaperman:Parikh:ICALP:2017},
one-counter automata \cite{CzerwinskiLasota:LMCS:2019},
higher-order and collapsible pushdown automata \cite{HagueKochemsOng:POPL:2016,ClementeParysSalvatiWalukiewicz:Diagonal:LICS16},
and others.

In this paper, we go beyond languages over finite alphabets,
and we study the separability problem for timed languages,
which we introduce next.

\subparagraph{Timed automata.}

Nondeterministic timed automata are one of the most widespread model of real-time reactive systems.
They consist of finite automata extended with real-valued clocks
which can be reset and compared by inequality constraints.
Alur and Dill's seminal result showed \pspace-completeness of the reachability problem \cite{AD94},
for which they received the 2016 Church Award \cite{church:award}.
This paved the way to the automatic verification of timed systems,
eventually leading to mature tools such as UPPAAL \cite{Behrmann:2006:UPPAAL4},
UPPAAL Tiga (timed games) \cite{CassezDavidFleuryLarsenLime:CONCUR:2005},
and PRISM (probabilistic timed automata) \cite{KwiatkowskaNormanParker:CAV:2011}.
%KRONOS \cite{Yovine:KRONOS:1997}.
%
The reachability problem is still a very active research area to these days
\cite{FJ15,
HerbreteauSrivathsanWalukiewicz:IC:2016,
AkshayGastinKrishna:LMCS:2018,
GastinMukherjeeSrivathsan:CONCUR:2018,
GastinMukherjeeSrivathsan:CAV:2019,
GovindHerbreteauSrivathsanWalukiewicz:CONCUR:2019},
as well as expressive generalisations thereof,
such as the binary reachability problem
\cite{ComonJurski:TA:1999,
Dima:Reach:TA:LICS02,
KrcalPelanek:TM:FSTTCS:2005,
FranzleQuaasShirmohammadiWorrell:IPL:2020}.

\emph{Deterministic timed automata} form a strict subclass of nondeterministic timed automata
where the next configuration is uniquely determined from the current one and the timed input symbol.
This class enjoys stronger properties,
% not shared with nondeterministic timed automata,
such as decidable universality/inclusion problems and complementability \cite{AD94},
and it is used in several applications,
such as test generation \cite{NielsenSkou:STTT:2003},
fault diagnosis \cite{BouyerChevalierDSouza:FOSSACS:2005},
learning \cite{VerwerWeerdtWitteveen:Benelearn:2007,TapplerAichernigLarsenLorber:FOMATS:2019};
% \cite{TapplerAichernigLarsen:arXiv:2018,AnChenZhan2:arXiv:2019}
defining winning conditions in timed games
\cite{AsarinMaler:HSCC:1999,JurdzinskiTrivedi:ICALP:2007,BrihayeHenzingerPrabhuRaskin:ICALP:2007},
and in a notion of recognisability of timed languages \cite{Maler:Pnueli:FOSSACS:04}.

The \emph{$k,m$-deterministic separability} problem asks,
given two nondeterministic timed automata $\A$ and $\B$ with epsilon transitions,
whether there exists a deterministic timed automaton $\S$ with $k$ clocks and maximal constant bounded by $m$
\st $\lang \S$ separates $\lang \A, \lang \B$.
Likewise one defines \emph{$k$-deterministic separability}, where only $k$ is fixed but not $m$.
%Separability strengthens disjoitness by additionally requesting for a simple reason of disjoitness.
%
We can see $\A$ as recognising a set of good behaviours which we want to preserve
and $\B$ recognising a set of bad behaviours which we want to exclude;
a deterministic separator, when it exists,
provides a compromise between these two conflicting requirements.
%
%\Cref{example:nonseparable} presents regular timed languages
%which are not determinstically separable.
%
To the best of our knowledge, separability problems for timed automata have not been investigated before.
%
%If we take $k=0$, then we obtain the regular separability problem,
%whereby we look for a finite-state separator which does not look at the timing information.
%
Our first main result is decidability of $k, m$ and $k$-deterministic separability.
\begin{restatable}{theorem}{thm-km-separability}
    \label{sec:k:m:separability}
    \label{sec:k:separability}
    %Fix $k, m \in \N$.
    The $k, m$ and $k$-deterministic separability problems are decidable.
\end{restatable}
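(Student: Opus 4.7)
The plan is to reduce both separability problems to instances of the timed games announced in the paper's main game-theoretic result. Given two nondeterministic timed automata $\A$ and $\B$ with epsilon transitions, I would construct a two-player timed game $\Gkm{k}{m}{\A}{\B}$ whose winning strategies for Player II are in bijective correspondence with deterministic timed separators using $k$ clocks and maximal constant $m$. Player I (the input player) both elapses time and proposes nondeterministic runs of $\A$ or $\B$; Player II (the separator player) must deterministically commit, after every timed letter, to a state of a hypothetical DTA $\S$ together with its clock resets, using at most $k$ clocks and comparing to constants in $\set{0, 1, \dots, m}$. The winning condition is a safety/reachability requirement encoding the two separation constraints: if Player I exhibits an accepting run of $\A$, then Player II's current state must be $\acc$; if Player I exhibits an accepting run of $\B$, then Player II's state must be $\rej$.

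The key steps, in order, would be:
\begin{enumerate}
\item Define the game arena so that a positional deterministic strategy for Player II literally \emph{is} a DTA with $k$ clocks and maximal constant $m$. Crucially, Player II must commit her decision (target location, resets, and the guard she effectively uses) as a function only of her current abstract configuration and the timed letter just read, not of Player I's hidden choices in $\A$ or $\B$; this is what enforces determinism of the induced separator.
\item Prove the correspondence in both directions: a winning Player II strategy yields a DTA separator by reading off states and resets from her moves, and conversely any DTA separator $\S$ induces a winning strategy by letting Player II play according to $\S$'s unique run on the timed word built so far.
\item Invoke the decidability of the game $\Gkm{k}{m}{\A}{\B}$ to settle $k,m$-separability, and invoke the stronger decidability result for the game $\Gk{k}{\A}{\B}$ (where the maximal constant of Player II is not bounded a priori) to settle $k$-separability.
\end{enumerate}

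The main obstacle is step~1, specifically the $k$-separability case: one must design the game so that Player II's resource usage is genuinely finite even though her constants range over all of $\N$. The natural idea is to let Player II play over an abstract alphabet $\alphaSEP$ of symbolic moves (pick a clock, reset or snooze it, choose an acceptance status), and to encode the quantitative content of her guards implicitly through the game dynamics, deferring all numerical comparisons to Player I's side. Making this abstraction tight---so that no information is lost and strategies still correspond to bona fide DTAs---is precisely the technical payoff of the earlier game result, which is already established in the paper and can be cited here. Once the game is set up correctly and the correspondence is proved, decidability of separability follows immediately from decidability of the game, so the statement reduces to constructing the right reduction and verifying soundness and completeness.
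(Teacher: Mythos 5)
Your proposal is correct and follows essentially the same route as the paper: both reduce $k,m$- (resp.\ $k$-) separability to the $k,m$- (resp.\ $k$-) timed synthesis game in which Player~II answers each timed letter with $\acc$ or $\rej$, the winning condition penalises answering $\rej$ on a prefix in $\lang\A$ or $\acc$ on a prefix in $\lang\B$, and winning $k,m$-controllers are read off as \kmDTA{k}{m} separators and vice versa. The only cosmetic difference is that the paper never has Player~I ``propose runs'' of $\A$ or $\B$ explicitly --- the nondeterministic runs are quantified away inside the \NTAe winning condition $W_0$, which sidesteps the information-hiding issue you flag in step~1; likewise the clock-request protocol you sketch for the unbounded-constant case lives entirely inside the synthesis theorems being cited, and the separability reduction itself is literally the same for both variants since it does not mention $m$.
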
%
% \begin{restatable}{theorem}{thm-k-separability}
%     Fix $k \in \N$.
%     The $k$-deterministic separability problem is decidable.
% \end{restatable}
%
\noindent
Decidability of deterministic separability should be contrasted with undecidability
of the corresponding membership problem %deciding whether a nondeterministic timed language is deterministic is undecidable
\cite{Finkel:FORMATS:2006,Tripakis:IPL:2006}.
This is a rare circumstance,
which is shared with languages recognised by one-counter nets \cite{CzerwinskiLasota:LMCS:2019},
and conjectured to be the case for the full class of Petri net languages%
\footnote{
    All these classes of languages have a decidable disjointness problem,
    however regular separability is not always decidable in this case
    \cite{ThinniyamZetzsche:FSTTCS:2019}.
}.
We solve the separability problem by reducing to an appropriate timed game
(c.f.~\cref{thm:km:synthesis,thm:k:synthesis} below).
This forms the basis of our interest in defining and studying a non-trivial class of timed games,
which we introduce next.

\subparagraph{Timed games.}
We consider the following timed generalisation of Büchi-Landweber games \cite{BuchiLandweber:AMS:1969}.
There are two players, called Player \I and Player \II,
which play taking turns in a strictly alternating fashion.
At the $i$-th round, Player \I selects a letter $a_i$ from a finite alphabet and a nonnegative timestamp $t_i$ from $\Rpos$,
and Player \II replies with a letter $b_i$ from a finite alphabet.
At doomsday, the two players have built an infinite play
$\pi = \tuple{a_1, b_1, t_1} \tuple{a_2, b_2, t_2} \cdots$,
and Player \I wins if, and only if, $\pi$ belongs to her winning set, %$W,
which is a timed langauge recognised by a nondeterministic timed automaton with $\varepsilon$-steps.
For a fixed number of clocks $k \in \N$ and maximal constant $m \in \N$,
the \emph{$k, m$-timed synthesis problem} asks whether there is a finite-memory timed controller for Player \II
using at most $k$ clocks and guards with maximal constant bounded by $m$ in absolute value,
ensuring that every play $\pi$ conform to the controller is winning for Player \II.
%(i.e., $\pi \not \in W$).
%
Our second contribution is decidability of this problem.
\begin{restatable}{theorem}{thm-km-synthesis}
    \label{thm:km:synthesis}
    For every fixed $k, m \in \N$, the $k, m$-timed synthesis problem is decidable.
\end{restatable}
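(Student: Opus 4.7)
The plan is to reduce the $k,m$-timed synthesis problem to an $\omega$-regular two-player game on a finite arena, which is then solvable by classical B\"uchi--Landweber techniques.

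\textbf{Joint region abstraction.} Let $m_\A$ denote the maximal constant of the input \NTAe $\A$ defining Player \I's winning set, and set $M := \max(m, m_\A)$. Consider the region equivalence over $\Rpos^{k + k_\A}$ at granularity $M$, where $k_\A$ is the number of clocks of $\A$. It has finitely many classes, it determines $\A$'s acceptance, and every $k,m$-bounded controller acts in a way that is region-measurable with respect to its own $k$ clocks at granularity $m$.

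\textbf{Finite game with a belief subset construction.} Build a finite two-player game $G$ whose positions couple (i) the current joint region, and (ii) the $\varepsilon$-closed subset $S$ of $(\A\text{-location}, \A\text{-region})$ pairs compatible with the history so far. A round of $G$ mimics a round of the timed game: Player \I picks an input letter $a_i \in \Sigma$ and an abstract time-elapse (a region successor), updating both the joint region and $S$ via $\varepsilon$-closure followed by $a_i$-reading; Player \II then picks her output letter $b_i$ and a subset of her $k$ clocks to reset. Since Player \II observes only the region of her own $k$ clocks (at granularity $m$) and the letters exchanged, apply the standard belief subset construction to turn $G$ into a finite perfect-information game $G'$, whose positions are sets of $G$-positions consistent with Player \II's observation. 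The winning condition for Player \II in $G'$ is the negation of $\A$'s B\"uchi acceptance on every $S$-component in the belief---an $\omega$-regular condition on a finite arena.

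\textbf{Solve and synthesise.} As a finite $\omega$-regular game, $G'$ is determined and solvable in finite time, and when Player \II wins she admits a finite-memory winning strategy $\sigma$. Translate $\sigma$ into a deterministic timed controller $\S_\sigma$: its locations are the memory states of $\sigma$ (encoding the tracked beliefs), its $k$ clocks are the abstract controller-clocks of the arena, and its guards are the region conditions of granularity $m$ queried by $\sigma$, which are expressible as clock constraints with $k$ clocks and max constant $\le m$ by construction. Conversely, any winning $k,m$-timed controller projects to a winning strategy in $G'$, yielding the equivalence needed for decidability.

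\textbf{Main obstacle.} The delicate point is the imperfect information available to a $k,m$-controller: it observes its own clock regions only at granularity $m$, while correctly updating $\A$'s acceptance requires the finer joint region at granularity $m_\A \ge m$. The belief subset construction is designed precisely to handle this mismatch, by tracking the set of possible hidden configurations in the discrete memory of $\S_\sigma$, without requiring additional clocks or larger constants. Verifying that the extracted controller remains deterministic, respects the resource bounds $k, m$, and that the region updates of $S$ are well-defined given only the controller's observation, is the key technical step of the proof.
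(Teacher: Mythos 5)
Your route is genuinely different from the paper's: the paper does not build a region arena over the joint clock set at all. Instead it reduces the $k,m$-game to a $0,0$-game (and thence to Büchi--Landweber) by a \emph{communication protocol}: Player \II, who keeps no clocks, submits ``tracking requests'' for abstract clocks, and the winning condition forces Player \I to announce fractional-region information and to signal expiry exactly one time unit after each non-cancelled request; the bound $m$ is enforced by bounding the length of request chains. This keeps the whole construction a perfect-information game and, importantly, is what later makes the $m$-unbounded problem (\cref{thm:k:synthesis}) tractable by a pumping argument on request chains. Your plan instead confronts the imperfect information head-on with a knowledge/belief construction, which is a legitimate alternative strategy in principle, but as written it has two concrete gaps.

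First, the winning condition of your belief game $G'$ is not well defined. You declare it to be ``the negation of $\A$'s Büchi acceptance on every $S$-component of the belief,'' but $\A$ is a nondeterministic Büchi automaton (with $\varepsilon$-transitions), and Büchi acceptance is \emph{not} determined by the sequence of reachable-state sets: the powerset construction does not preserve Büchi acceptance, so no $\omega$-regular condition on the sequence of beliefs $S_1 S_2 \cdots$ can express ``some run of $\A$ consistent with the history visits $\F$ infinitely often.'' Repairing this requires an extra idea --- determinising the region abstraction of $\A$ into a parity automaton before the knowledge construction, or a Miyano--Hayashi breakpoint construction for the universal co-Büchi objective of Player \II --- and you invoke neither. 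Second, and compounding the first gap, the equivalence between an imperfect-information game and its knowledge-game is only guaranteed for \emph{observable} objectives; here membership in $W$ depends on the exact joint region at granularity $M$, which Player \II does not observe. The standard fix is to replace the objective by the $\omega$-regular language of observation sequences all of whose consistent concretisations are winning, and to combine a deterministic automaton for \emph{that} language with the knowledge construction; you flag exactly this mismatch as ``the key technical step'' but do not carry it out, and with the objective evaluated on beliefs as you state it, the converse direction of your equivalence (a winning strategy in $G'$ yields a winning \kmDTA{k}{m} controller) would fail. Until these two points are resolved, the proposal does not constitute a proof.
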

\noindent
We reduce to an untimed finite-state game
with an $\omega$-regular winning condition
%and then applying the seminal decidability result of Büchi and Landweber
\cite{BuchiLandweber:AMS:1969}.
This should be contrasted with undecidability of the same problem
when the set of winning plays for Player \II
is a nondeterministic timed language
(c.f.~\cite{DsouzaMadhusudan:STACS:2002} for a similar observation).
%
%This can be proved by reducing from the universality problem for nondeterministic timed automata,
%which has also been observed in \cite{DsouzaMadhusudan:STACS:2002}
%in a more general class of games.
%\slawek{is it really more general?}
%\ilorenzo{yes, as you observed in a previous email our games are a special case of ``2-location 0-clock arenas'';
%we can introduce an arena clock to force the $B$ actions to occur with no elapse from an $A$ action.}
%
The \emph{$k$-timed synthesis problem} asks whether there exists a bound $m \in \N$
\st the $k, m$-timed synthesis problem has a positive answer for Player \II,
which we also show decidable.
%
%Our third contribution is decidability of this problem.
%of the paper is decidability
%even when the maximal constant $m$ is not specified in advance,
%but must instead be synthesised by Player \II.
%
\begin{theorem}
    \label{thm:k:synthesis}
    For every fixed $k \in \N$, % and \NTAe $\omega$-language $W$,
    the $k$-timed synthesis problem is decidable.
\end{theorem}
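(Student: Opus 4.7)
The plan is to reduce the $k$-timed synthesis problem to the $k,m$-timed synthesis problem for a sufficiently large, effectively computable bound $m^*$, and then invoke Theorem~\ref{thm:km:synthesis}. A first observation is \emph{monotonicity}: if Player~\II wins $k,m$-timed synthesis, she also wins $k,m'$-timed synthesis for every $m' \geq m$, since any controller respecting the bound $m$ trivially respects the weaker bound $m'$. Hence $k$-timed synthesis amounts to asking whether Player~\II wins $k,m$-timed synthesis for \emph{some} $m \in \N$, and it is enough to bound this $m$ effectively from above.

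To obtain such a bound, I would establish a small-model property: if Player~\II admits any finite-memory winning controller with $k$ clocks and arbitrary constants, then she admits one whose maximal constant does not exceed some $m^*$ depending computably on the winning automaton $\mathcal{A}$ and on $k$. The structural fact driving this is that $\lang{\mathcal{A}}$ is insensitive to clock values above $\mathcal{A}$'s own maximal constant $M_{\mathcal{A}}$, so Player~\II need not perform tests finer than those visible to $\mathcal{A}$. Concretely, I would coarsen the region abstraction on Player~\II's clock space above a threshold $m^*$ chosen in terms of $M_{\mathcal{A}}$ and $k$, and construct, from any given winning controller $\mathcal{S}$, a coarser controller $\mathcal{S}^*$ by replacing each test against a constant exceeding $m^*$ by a choice that is safe in the coarser abstraction. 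A strategy-transfer argument over the infinite play tree then shows that every play consistent with $\mathcal{S}^*$ remains winning, because $\mathcal{A}$ cannot distinguish the finer behaviour of $\mathcal{S}$ from the coarser behaviour of $\mathcal{S}^*$.

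The main obstacle will be the interaction between Player~\II's clocks and Player~\I's unrestricted timestamps: since Player~\I may play arbitrarily large delays, Player~\II's coarsened controller must remain consistent with the fine-grained original even in the presence of such moves. Handling this requires choosing $m^*$ large enough that the region equivalence on Player~\II's clocks refines the region equivalence induced by $\mathcal{A}$ on her component of the configuration; I expect a choice like $m^* = M_{\mathcal{A}}$, or $m^* = M_{\mathcal{A}} + k$ to also capture pairwise clock differences, to suffice. Once $m^*$ is fixed, a single call to Theorem~\ref{thm:km:synthesis} on the $k, m^*$-timed synthesis problem decides the original question. An auxiliary step will be to certify (via the small-model property) that a negative answer at $k, m^*$ indeed rules out winning controllers with arbitrarily large constants, which is where the bulk of the technical work resides.
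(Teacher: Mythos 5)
Your overall shape --- bound the constant $m$ effectively and then invoke \cref{thm:km:synthesis} --- is reasonable, and the monotonicity observation is correct. But the concrete small-model property you propose is false, and the place where it fails is exactly where the real work lies. The structural fact you invoke (that $\lang\A$ is insensitive to clock values above its own maximal constant $M_{\A}$) concerns $\A$'s \emph{clocks}, not the durations occurring in the play: an \NTAe with maximal constant $1$ can enforce arbitrarily long integer durations by chaining $\varepsilon$-resets. Concretely, for any $n$, both $L_n=\set{(a,t_1)(a,t_2): t_2-t_1\ge n}$ and $\set{(a,t_1)(a,t_2): t_2-t_1< n}$ are $1$-clock \NTAe languages with maximal constant $1$ (and $O(n)$ locations); since they partition the two-letter words, any separator agrees with $L_n$ there, and a $1$-clock deterministic controller with constants bounded by $m^*<n$ cannot distinguish $t_2-t_1=m^*+0.5$ from $t_2-t_1=n+0.5$. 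Hence no bound of the form $m^*=M_{\A}$ or $M_{\A}+k$ can work; a correct bound must depend at least on the number of control locations, and your coarsening/strategy-transfer argument, driven solely by $\A$'s numerical granularity, gives no handle on that dependence.

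The paper in fact avoids computing an a priori bound altogether. It sets up a single untimed game whose winning condition is independent of $m$ (Player \II issues clock-tracking requests, and the condition only demands that request chains be \emph{finite}), solves it via B\"uchi--Landweber, and then shows that any finite-memory winning controller with memory $\L$ is automatically winning for $m=\card{A'}\cdot\card{\L}+1$: a request chain of length $m$ forces a repetition of the pair (Player \I's letter, controller memory), and Player \I could then pump the repeated infix forever using an automorphism of $\tuple{\R,\leq,+1}$, contradicting finiteness of chains (\cref{lemma:untimed:untimed}). The bound thus comes \emph{out of} the solved game rather than going in before it. To salvage your route you would need to replace the $M_{\A}$-based coarsening by a pumping argument of this kind on candidate controllers; as written, that essential step is missing and the proposed bound is refuted by the example above.
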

This requires the synthesis of the maximal constant $m$,
which is a very interesting a technical novelty
not shared with the current literature on timed games. % (cf.~below).
%
%We hope it can be used to solve more general classes of timed games.
%
We design a protocol whereby Player \II demands Player \I to be informed
when clocks elapse one time unit.
We require that the number of such consecutive requests be finite,
yielding a bound on $m$ (when such a value exists).

Finally, we complement the two decidability results above
by showing that the synthesis problem is undecidable
when the number of clocks $k$ available to Player \II is not specified in advance
(c.f.~\Cref{thm:synUnd}).

%\slawek{we shoud conjecture decidability, and suggest application of weak-cost-mso-games}
%\ilorenzo{by the time the final version comes, we will know the answer, so no need to take risks now
%in fact, I'm not so sure how the reduction we discussed may work}

There are many variants of timed games in the literature,
depending whether the players must enforce a nonzeno play,
who controls the elapse of time,
concurrent actions, etc.
\cite{Wong-Toi-Hoffmann:CDC:1991,
MalelPnueliSifakis:STACS:1995,
AsarianMalerPnueliSifakis:SSSC:1998,
DsouzaMadhusudan:STACS:2002,
deAlfaroFaellaHenzingerMajumdarStoelinga:CONCUR:2003}.
%and there are also tools for solving them,
%such as UPPAAL-Tiga \cite{UPPAAL-Tiga:CAV:2007}.
%
In this terminology, our timed games are asymmetric (only Player \I can elapse time)
and turn-based (the two players strictly alternate).

\section{Preliminaries}

Let $\R$ be the set of real numbers and $\Rpos$ the set of nonnegative real numbers.
For two sets $A$ and $B$, let their Cartesian product be $A \cdot B$.
Let $A^0 = \set{\varepsilon}$,
and, for every $n \geq 0$, $A^{n+1} = A \cdot A^n$.
The set of finite sequences over $A$ is $A^* = \bigcup_{n \geq 0} A^n$,
$A^\omega$ is the set of infinite sequences,
and $A^\infty = A^* \cup A^\omega$.
A \emph{(monotonic) timed word} over a finite alphabet $\Sigma$ is a sequence
$w = \tuple{a_1, t_1} \tuple{a_2, t_2} \cdots \in (\Sigma \cdot \Rpos)^\infty$
\st $0 \leq t_1 \leq t_2 \leq \cdots$,
and it is \emph{strictly monotonic} if $0 \leq t_1 < t_2 < \cdots$.
A \emph{timed language} over $\Sigma$
is a set $L \subseteq (\Sigma \cdot \Rpos)^\infty$ of monotonic timed words;
it is \emph{strictly monotonic} if it contains only strictly monotonic timed words.
The \emph{untiming} $\untime w$ of a timed word $w$ as above
is the word $a_0 a_1 \cdots \in \Sigma^\infty$
obtained from $w$ by removing the timestamps,
which is extended to timed languages $L$ pointwise as
$\untime L = \setof {\untime w} {w \in L}$.

\para{Clocks, constraints, and regions}

Let $\X = \set{\x_1, \dots, \x_k}$ be a finite set of clocks.
A \emph{clock valuation} is a function $\mu \in \Rpos^\X$
%\slawek{added 'nonnegative' several times on this page}
assigning a nonnegative real number $\mu(\x)$ to every clock $\x \in \X$.
For a nonnegative time elapse $\delta \in \Rpos$,
we denote by $\mu + \delta$ the valuation assigning $\mu(\x) + \delta$ to every clock $\x$;
for a set of clocks $\Y \subseteq \X$, let $\extend \Y 0 \mu$
be the valuation which is $0$ on $\Y$ and agrees with $\mu$ on $\X \setminus \Y$.
We write $\mu_0$ for the clock valuation
mapping every clock $\x \in \X$ to $\mu_0(\x) = 0$.
A \emph{clock constraint} is a quantifier-free formula of the form
\begin{align*}
    \varphi,\psi \;::\equiv\; \true \sep \false \sep \x_i - \x_j \sim z \sep \x_i \sim z \sep \neg \varphi \sep \varphi \land \psi \sep \varphi \lor \psi,
\end{align*}
where $\sim \in \set{=, <, \leq, >, \geq}$
and $z \in \Z$.
A clock valuation $\mu$ satisfies a constraint $\varphi$, written $\mu \models \varphi$,
if interpreting each clock $\x_i$ by $\mu(\x_i)$ makes $\varphi$ true.
A constraint $\varphi$ defines the set
$\sem{\varphi} = \setof{\mu \in \Rpos^\X}{\mu \models \varphi}$
of all clock valuation it satisfies.
%
%We write $\varphi_0$ for the unique clock constraint (up to logical equivalence)
%\st $\sem{\varphi_0} = \set{\mu_0}$.
%
When the set of clocks is fixed to $\X$ % of size $k\in\N$
and the absolute value of constants is bounded by $m \in \N$,
we speak of $\X, m$-constraints.%\slawek{$\X,m$-constraints!}
%
%\subparagraph{Regions.}
%
Two valuations $\mu, \nu \in \Rpos^\X$ are \emph{$\X, m$-region equivalent}, written $\mu \req \X m \nu$,
if they satisfy the same $\X, m$-constraints.
An $\X,m$-region $\region \X m \mu \subseteq \Rpos^\X$
is an equivalence class of clock valuations \wrt $\req \X m$.
%
%Distinct regions are disjoint.
%
For fixed finite $\X$ and $m \in \N$ there are finitely many $\X, m$-regions;
let $\regions \X m$ denote this set.
Let $\mu_0 = \lambda \x . 0$ and $\zeroregion = \region \X m {\mu_0}$ be its region.
%
%For a $\X, m$-clock constraint $\varphi$ and valuation $\mu \in \Rpos^\X$,
%$\mu \models \varphi$ depends only on the equivalence class $\class {\X, m} \mu$ of $\mu$,
We write $\r \models \varphi$ for a region $\r \in \regions \X m$
whenever $\mu \models \varphi$ for some $\mu \in \r$ (equivalently, for all such $\mu$'s).
The \emph{characteristic clock constraint} $\varphi_\r$
of a region $\r \in \regions \X m$
is the unique constraint (up to logical equivalence) \st $\sem{\varphi_\r} = \r$.
When convenient, we deliberately confuse regions with their characteristic constraints.
For two regions $\r, \r' \in \regions \X m$
we write $\r \prec \r'$ whenever $\r = \region \X m \mu$,
$\r' = \region \X m {\mu + \delta}$ for some $\delta > 0$,
and $\r \neq \r'$.
%
%The \emph{time-elapse successor} of a region $\r$ is
%$\successor {\X, m} \r = \min_\prec\setof{\r' \in \regions \X m}{\r \prec \r'}$;
%
%the \emph{infinitesimal successor} $\succinf \r$ is $\successor {\X, m} \r$
%if some clock has an integral value in $\r$,
%and $\r$ otherwise;
%
%the first region reached when clock $\x_i$ becomes an integer
%is defined when $\r \models \x_i < m$ and equals
%$\xsuccessor{\X, m}{\r}{\x_i} = \min_{\prec} \setof{\r' \in \regions k m}{\r \prec \r' \text{ and } \r' \models \x_i \in \set{1, \dots, m}}$.

%
% wrong: for clocks larger than $m$, the fractional region contains more information than the region
%Also a region $\r = \region \X m \mu \in \regions \X m$ induces a unique fractional region $\bar \ZZ_\r$,
%which is defined as $\bar \ZZ_\mu$,
%there the choice of $\mu$ does not matter thanks to the definition of regions.
%

\para{Timed automata}
A (nondeterministic) \emph{timed automaton} is a tuple $\A = \automaton$,
where $\Sigma$ is a finite input alphabet,
$\L$ is a finite set of control locations,
$\X$ is a finite set of clocks,
$\Ii, \F \subseteq \L$ are the subsets of initial, resp., final, control locations,
and $\Delta$ is a finite set of transition rules of the form 
%
%\begin{align} \label{eq:trans-rule}
	$\tr = \transition p a \varphi \Y q \in \Delta$,
%\end{align}
%
with $p, q \in \L$ control locations, $a \in \Sigma_\varepsilon := \Sigma \cup \set \varepsilon$,
$\varphi$ a clock constraint to be tested and $\Y \subseteq \X$ the set of clocks to be reset to $0$.
%
%If $\Y$ is just the singleton $\set{\x}$, we sometimes write $\x$ instead of $\set{\x}$.
%
A \emph{configuration} of a timed automaton $\A$ is a pair $\tuple{p, \mu}$
consisting of a control location $p \in \L$
and a clock valuation $\mu \in \Rpos^\X$.
It is \emph{initial} if $p$ is so and $\mu = \mu_0$.
It is \emph{final} if $p$ is so.
Every transition rule $\tr$ %as in~\eqref{eq:trans-rule}
induces a discrete transition between configurations
$\tuple {p, \mu} \goesto \tr \tuple {q, \nu}$ when
$\mu \models \varphi$ and $\nu = \extend \Y 0 \mu$.
Intuitively, a discrete transition consists of a test of the clock constraint $\varphi$, reset of clocks $\Y$, and step to the location $q$.
Moreover, for every nonnegative $\delta \in \Rpos$ and every configuration $\tuple{p, \mu}$
there is a time-elapse transition
$\tuple {p, \mu} \goesto \delta \tuple {p, \mu + \delta}$.
The timed language \emph{$\varepsilon$-recognised} by $\A$,
denoted $\elang \A$,
is the set of finite timed words
$w = \tuple {a_1, t_1} \cdots \tuple{a_n, t_n} \in (\Sigma_\varepsilon \cdot \Rpos)^*$
\st there is a sequence of transitions
$\tuple{p_0, \mu_0} \goesto {\tr_1, \delta_1} \cdots \goesto {\tr_n, \delta_n} \tuple{p_n, \mu_n}$
where $p_0 \in I$ is initial,
$\mu_0(\x) = 0$ for every clock $\x \in \X$,
$p_n \in F$ is final,
and, for every $1 \leq i \leq n$,
$\delta_i = t_i - t_{i-1}$ (where $t_0 = 0$)
and $\tr_i$ is of the form $\transition {p_{i-1}} {a_i} {\_} {\_} {p_i}$.
The timed $\omega$-language $\omegaelang \A \subseteq (\Sigma_\varepsilon \cdot \Rpos)^\omega$
is defined in terms of sequences as above with the condition that $p_i \in F$ infinitely often.
We obtain the timed language
$\lang \A = \pi(\elang \A) \subseteq (\Sigma \cdot \Rpos)^*$,
resp., $\omega$-language
$\omegalang \A = \pi(\omegaelang \A) \cap (\Sigma \cdot \Rpos)^\omega \subseteq (\Sigma \cdot \Rpos)^\omega$ \emph{recognised} by $\A$,
where $\pi$ is the mapping that removes letters of the form $\tuple{\varepsilon, \_}$.

A timed automaton (without $\varepsilon$-transitions) is \emph{deterministic}
if it has exactly one initial location % $I = \set{p_0}$
%no epsilon transition rules, and
and, for every two rules 
$\transition{p}{a}{\varphi}{\Y}{q}$, $\transition{p}{a}{\varphi'}{\Y'}{q'}$ with $\sem{\varphi \land \varphi'} \neq \emptyset$,
we have $\Y = \Y'$ and $q = q'$.
%and if $a = \varepsilon$ then $a' = \varepsilon$.
%
We write \NTA, \DTA for the classes of nondeterministic, resp., deterministic timed automata
without epsilon transitions.
When the number of clocks in $\X$ is bounded by $k$ we write \kNTA k, resp., \kDTA k.
When the absolute value of the maximal constant is additionally bounded by $m \in \N$
we write \kmNTA k m, resp., \kmDTA k m.
When epsilon transitions are allowed, we write \NTAe.
% \kNTAe k, etc.
A timed language is called \NTA language, \DTA language, and so on,
if it is recognized by a timed automaton in the respective class.
A \kmDTA{k}{m} with clocks $\X$ is \emph{regionised}
if each constraint is a characteristic constraint $\varphi_\r$ of some region $\r \in \regions \X m$
and for each location $p$, input $a\in\Sigma$, and $r\in \regions \X m$ there is a (necessarily unique) transition rule of the form $\transition p a {\varphi_\r} \Y q$.
It is well-known that a \kmDTA{k}{m} can be transformed into an equivalent regionised one
by adding exponentially many transitions.

%
%The following example shows an \NTA language which is not a \DTA language.

\begin{example}[\NTA language which is not a \DTA language] \label{example:L}
	Let $\Sigma = \set a$ be a unary alphabet
	and let $L$ be the set of timed words of the form
	$(a, t_1) \cdots (a, t_n)$
	\st $t_n - t_i = 1$ for some $1\leq i < n$.
	$L = \langts A$ for the timed automaton $\A = \automaton$
	with a single clock $\X = \set \x$ three locations $\L = \set{p, q, r}$,
	of which $\Ii = \set p$ is initial and $\F = \set r$ is final,
	and transitions rules $\transition{p}{a}{\true}{\emptyset}{p}$,
	$\transition{p}{a}{\true}{\set{\x}}{q}$,
	$\transition{q}{a}{\x<1}{\emptyset}{q}$,
	$\transition{q}{a}{\x=1}{\emptyset}{r} \in \Delta$.
	Intuitively, in $p$ the automaton waits until it guesses that the next input will be $(a, t_i)$,
	at which point it moves to $q$ by resetting the clock (and subsequently reading $a$).
	From $q$, the automaton can accept by going to $r$ only if exactly one time unit elapsed since $(a, t_i)$.	
%	Finally, note that $L$ is an unambiguous language (a \kUTA{1} language),
%	since the automaton can be slightly modified
%	in order to enforce that the rightmost $(a, t_i)$ is selected
%	when taking transition $\transition{p}{a}{\true}{\x}{q}$.
%
	There is no \DTA recognising $L$,
	since in order to recognise $L$ deterministically one must store all timestamps in the last unit interval,
	and thus no bounded number of clocks suffices.
\end{example}
\begin{example}\label{example:M}
	The complement of $L$ from \cref{example:L}
	can be recognised by an \NTA with two clocks.
	Indeed, a timed word $(a, t_1) \cdots (a, t_n)$ is not in $L$
	if either of the following conditions hold:
	\begin{inparaenum}[1)]
		\item its length $n$ is at most $1$, or
		\item the total time elapsed between the first and the last letter is less than one time unit $t_n - t_1 < 1$, or
%		\item it is more than one time unit $t_n - t_{n-1} > 1$, or
		\item there is a position $1 \leq i < n$
		\st $t_n - t_i > 1$ and $t_n - t_{i+1} < 1$.
	\end{inparaenum}
	It is easy to see that two clocks suffice to nondeterministically check the conditions above.
\end{example}
\noindent
Since checking whether an \NTA recognises a deterministic language is undecidable
\cite{Finkel:FORMATS:2006,Tripakis:IPL:2006},
there is no recursive bound on the number of clocks sufficient to deterministically recognise an \NTA language
(whenever possible).
Thus \NTA can be non-recursively more succinct than \DTA w.r.t.~number of clocks.
%\slawek{I dont understand last 2 sentences}
%
However, in general such \NTA recognise timed languages whose complement is not an \NTA language.
The next example shows a timed language which is both \NTA and co-\NTA recognisable,
however the number of clocks of an equivalent \DTA is at least exponential in the number of clocks of the \NTA.
\begin{example} \label{example:Lk}
	For $k \in \N$, let $L_k$ be the set of strictly monotonic timed words
	$(a, t_1) \cdots (a, t_n)$
	s.t.~$t_n - t_i = 1$ where $i = n - 2^k$.
	The language $L_k$ can be recognised by a $(2 \cdot k + 2)$-clock \NTA $\A_k$ of polynomial size.
	There are clocks $\x_0, \x_1, \dots, \x_k$ and $\y_0, \y_1, \dots, \y_k$.
	Clock $\x_0$ is used to check strict monotonicity.
	Clock $\y_0$ is reset when the automaton guesses $(a, t_i)$.
	The automaton additionally keeps track of the length of the remaining input.
	This is achieved by implementing a $k$-bit binary counter,
	where $\x_j = \y_j$ represents that the $j$-th bit is one.
	In order to set the $j$-th bit to one, the automaton resets $\x_j, \y_j$;
	to set it to zero, it resets only $\x_j$.
	This is correct thanks to strict monotonicity.
	At the end the automaton checks $\y_0 = 1$ and that the binary counter has value $2^k$.
	Any deterministic automaton recognising $L_k$ requires exponentially many clocks
	to store the last $2^k$ timestamps.
	The complement of $L_k$ can be recognised by a $(2 \cdot k + 2)$-clock \NTA of polynomial size.
	Indeed, a timed word is not in $L_k$ if any of the following conditions hold:
	\begin{inparaenum}[1)]
		\item its length $n$ is $\leq 2^k$, or
		\item $t_n - t_i < 1$ with $i = n-2^k$, or
		\item $t_n - t_i > 1$ with $i = n-2^k$.
	\end{inparaenum}
	The automaton guesses which condition holds
	and uses a $k$-bit binary counter as above to check that position $i$ has been guessed correctly.
\end{example}

%\subparagraph{Normalisation.}
%
%Every \DTA $A$ can be easily transformed into a one which is:
%\begin{enumerate}
%%\item
%%\emph{regionised}: all constraints in transition rules define regions in $\regions{\X}{m}$, for $\X$ the set of clocks of $A$ and
%%$m$ the greatest absolute value of a constant in $A$;
%\item    
%\emph{greedily-resetting}: 
%    every transition rule of $A$ resets every clock whose value is either an integer or greater than $m$.
%    %
%    Formally, for every transition rule~\eqref{eq:trans-rule} whose constraint
%    $\varphi$
%    implies that $\x$ belongs to $\set{0, 1, \ldots, m} \cup (m, \infty)$, we have $\x \in \Y$.
%    %
%    In consequence, any two clocks with integer difference are actually equal.
%\end{enumerate}
%%
%$A$ can be easily transformed into this form by extending locations with additional book-keeping,
%but importantly preserving the set of clocks and maximal constant $m$.

%\input{problems}
% !TEX root = main.tex

\section{Timed synthesis games}
\label{sec:synthesis}

Let $A$ and $B$ be two finite alphabets of actions
and let $W \subseteq (A \cdot B \cdot \Rpos)^\omega$
be a language of timed $\omega$-words over the alphabet $A \cdot B$.
The \emph{timed synthesis game} $\game A B W$ is played by Player \I and Player \II in rounds.
At round $i \geq 0$,
Player \I chooses a timed action $a_i \cdot t_i \in A \cdot \Rpos$
and Player \II replies immediately with an untimed action  $b_i \in B$.
The game is played for $\omega$ rounds,
and at doomsday the two players have produced an infinite play
\begin{align}
    \label{eq:playinit}
    \pi = a_1 b_1 t_1 a_2 b_2 t_2 \cdots \in (A \cdot B \cdot \Rpos)^\omega.
\end{align}
Player \I wins the game if, and only if, $\pi \in W$.

Let $k \in \N$ be a bound on the number of available clocks $\X = \set{\x_1, \dots, \x_k}$,
and let $m \in \N$ be a bound on the maximal constant.
A \emph{$k,m$-controller} for Player \II in $\game A B W$
is a regionised \kmDTA{k}{m} $\M = \tuple{A, B, \L, \ell_0, \delta}$
with input alphabet $A$ and output alphabet $B$,
where $\L$ is a set of memory locations,
%$a_0 \in A$ is the first action to be played,
$\ell_0 \in \L$ is the initial memory location,
and $\delta : \L \cdot A \cdot \regions k m \to \L \cdot B \cdot 2^\X$ is the update function
mapping the current memory $\ell \in \L$,
input $a \in A$,
and region $\varphi \in \regions k m$
to $\delta(\ell, a, \varphi) = \tuple{\ell', b, \Y}$,
where 
$\ell' \in \L$ is the next memory location, $b \in B$ is an output symbol,
and $\Y \subseteq \X$ is the set of clocks to be reset.

We define by mutual induction the notion of \emph{$\M$-conform partial runs}
$\Run \M \subseteq \L \cdot \Rpos^\X \cdot (A \cdot B \cdot \Rpos \cdot \L \cdot \Rpos^\X)^*$,
and the strategy $\sem \M : \Run \M \cdot A \cdot \Rpos \to \L \cdot \Rpos^\X \cdot B$
induced by the controller on conform runs as follows:
Initially, $\tuple{\ell_0, \mu_0} \in \Run \M$,
where $\mu_0(\x) = 0$ for every clock $\x \in \X$.
Inductively,
for every $n \geq 0$ and every $\M$-conform partial run
\begin{align}
    \label{eq:conform:run}
    \rho =
        \tuple{\ell_0, \mu_0} 
        \tuple{a_1, b_1, t_1, \ell_1, \mu_1} \cdots
        \tuple{a_n, b_n, t_n, \ell_n, \mu_n} \in \Run \M,
\end{align}
and for every $\tuple{a_{n+1}, t_{n+1}} \in A \cdot \Rpos$, we define
$\sem \M (\rho \cdot a_{n+1} \cdot t_{n+1}) = \tuple{\ell_{n+1}, \mu_{n+1}, b_{n+1}}$
for the unique $(\ell_{n+1}, \mu_{n+1}, b_{n+1}) \in \L \cdot \Rpos^\X \cdot B$ \st 
$\delta(\ell_n, a_{n+1}, \varphi_{\mu_n + \delta_{n+1}}) = \tuple{\ell_{n+1}, b_{n+1}, \Y}$ and
${\mu_{n+1} = \extend \Y 0 {(\mu_n + \delta_{n+1})}}$,
where $\delta_{n+1} = t_{n+1} - t_n$ (with $t_0 = 0 $).
Moreover, $\rho \cdot a_{n+1} \cdot b_{n+1} \cdot t_{n+1} \cdot \ell_{n+1} \cdot \mu_{n+1} \in \Run \M$.
%
%(With $\varphi_\mu$ we denote the unique $k, m$-region inhabited by clock valuation $\mu$.)
%
An infinite $\M$-conform run is any sequence 
$\rho \in \L \cdot \Rpos^\X \cdot (A \cdot B \cdot \Rpos \cdot \L \cdot \Rpos^\X)^\omega$
such that every finite prefix thereof is $\M$-conform;
let $\iRun \M$ be the set of such $\rho$'s.
Let ${\rtop \rho\in (A \cdot B \cdot \Rpos)^\omega}$
be 
%the function mapping an infinite $\M$-conform run $\rho$ % as in \eqref{eq:conform:run} to 
the corresponding play $\pi = \rtop \rho$ as in \eqref{eq:playinit} obtained by dropping locations and clocks valuations.
The controller $\M$ is \emph{winning} if every infinite $\M$-conform run $\rho$
satisfies $\rtop \rho \not\in W$.
%
%Two timed games are \emph{equivalent} if the same player wins in both games.
%
A \emph{$k$-controller} is $k, m$-controller for some $m\in\N$.
%By \emph{untimed controller} we mean $0$-strategy.
%When $k=0$, the controller ignores the timestamps
%and hence can be viewed as an untimed Mealy machine
%$\M = \tuple{A, B, \L, \ell_0, \delta}$
%where $\delta : \L \cdot A \to \L \cdot B$.
%
For fixed $k, m \in \N$,
% and a winning condition $W \subseteq (A \cdot B \cdot \Rpos)^\omega$ for Player \I,
the \emph{$k, m$-timed synthesis problem}
asks, given $A, B$ and an \NTAe timed language $W\subseteq (A\cdot B\cdot \Rpos)^\omega$, 
whether Player \II has a winning $k, m$-controller in $\game A B W$;
the \emph{$k$-timed synthesis problem} asks instead for a $k$-controller;
finally, the \emph{timed synthesis problem} asks whether there exists a controller.
%when $W$ is clear from the context, we omit it.
%
The $0, 0$-timed synthesis problem is equivalent to untimed synthesis problem,
which is decidable by the B\"uchi-Landweber Theorem
\protect{\cite[Theorem $1'$]{BuchiLandweber:AMS:1969}}:
\begin{restatable}{lemma}{lemZeroSynthesis}
    \label{lem:zero:synthesis}
    The $0, 0$-synthesis problem is decidable.
\end{restatable}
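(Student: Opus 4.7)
The plan is to reduce the $0, 0$-synthesis problem to the classical Büchi-Landweber untimed synthesis problem over an $\omega$-regular winning condition, and invoke the decidability result already cited by the paper.

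First I would unpack the degenerate shape of a $0, 0$-controller. With $\X = \emptyset$ and $m = 0$, the region set $\regions 0 0$ is a singleton, and the only subset of $\X$ to reset is $\emptyset$. Hence the update function collapses to $\delta : \L \cdot A \to \L \cdot B$, so $\M$ is literally a finite-state Mealy machine over input alphabet $A$ and output alphabet $B$, whose decisions depend only on the sequence of $a_i$'s and are \emph{independent} of the timestamps $t_i$ chosen by Player \I.

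Next I would define the projected untimed $\omega$-language
\begin{align*}
    U \; = \; \setof{a_1 b_1 a_2 b_2 \cdots \in (A\cdot B)^\omega}{\exists\, 0\le t_1 \le t_2 \le \cdots,\ a_1 b_1 t_1 a_2 b_2 t_2 \cdots \in W}.
\end{align*}
The key technical ingredient is that $U$ is $\omega$-regular: since $W$ is given as an \NTAe language, a standard region-automaton construction à la Alur-Dill (followed by erasing the region letters and $\varepsilon$-transitions) yields a nondeterministic Büchi automaton for $U$. This is the step that would require the most care, but it is a routine adaptation of the standard untiming construction to timed words where timestamps are interleaved with pairs of action symbols.

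Then I would prove the equivalence: Player \II has a winning $0, 0$-controller in $\game A B W$ iff Player \II wins the untimed Büchi-Landweber game over $(A, B)$ whose winning set for Player \I is $U$. For the forward direction, a winning $0, 0$-controller $\M$ yields a Mealy strategy $\sigma$ on untimed plays; any $\sigma$-conform untimed play lying in $U$ would admit timestamps making the corresponding timed play belong to $W$, but that timed play is $\M$-conform because $\M$ ignores timestamps, contradicting the winning property of $\M$. Conversely, any winning Mealy strategy $\sigma$ against $U$ is \emph{ipso facto} a $0, 0$-controller, and for every infinite $\M$-conform run $\rho$ the untimed projection of $\rtop \rho$ is $\sigma$-conform and thus outside $U$, so no choice of timestamps (in particular, the actual one) produces a play in $W$.

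Finally, by the Büchi-Landweber Theorem \cite{BuchiLandweber:AMS:1969}, the untimed synthesis problem with an $\omega$-regular winning condition is decidable, so the $0, 0$-timed synthesis problem is decidable. The only nontrivial step is the $\omega$-regularity of $U$; once that is in hand, the reduction is entirely mechanical.
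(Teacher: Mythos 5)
Your proposal is correct and follows essentially the same route as the paper: identify $0,0$-controllers with Mealy machines, pass to the untimed language $\untime W$ (your $U$), establish the correspondence between losing plays via untiming and re-timing, and invoke the B\"uchi--Landweber theorem. The only difference is that you explicitly flag the $\omega$-regularity of $U$ via the region construction, a step the paper leaves implicit but which is indeed needed for the cited theorem to apply.
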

%
%\noindent
%(We will use this fact in \cref{sec:km:synthesis,sec:k:synthesis}.)
%
%L: already observed in the introduction
%\noindent
%Choosing the winning condition for Player \I is essential for decidability,
%since an immediate reduction from \NTA universality
%shows that the $0, 0$-timed synthesis problem is undecidable
%when $W$ is a \NTA-recognisable winning set of Player \II
%(cf.~a similar observation in~\cite[Section 6]{DsouzaMadhusudan:STACS:2002}).
% !TEX root = main.tex

\section{Deterministic separability}

In this section we prove our first main result \Cref{sec:k:m:separability}:
we show that the $k, m$ and $k$-deterministic separability problems are decidable.
We begin with a motivating example of nonseparable languages.
\begin{example}
    \label{example:nonseparable}
    Consider the \NTA language $L$ from \cref{example:L}.
    Thanks to \cref{example:M} its complement is also a \NTA language.
    Since neither $L$ nor its complement are deterministic,
    they cannot be deterministically separable.
\end{example}
\noindent
Moreover, a deterministic separator, when it exists,
may need exponentially many clocks.
\begin{example}
    \label{example:separable:exponential}
    We have seen in \cref{example:Lk} an $O(k)$-clock \NTA language
    \st 1) its complement is also an $O(k)$-clock \NTA language,
    and 2) any \DTA recognising it requires $2^k$ clocks.
    Thus, a deterministic separator may need exponentially many clocks in the size of the input \NTA.
\end{example}

In the rest of the section we show how to decide the separability problems.
We reduce the $k, m$-deterministic separability to $k, m$-timed synthesis,
and $k$-deterministic separability to $k$-timed synthesis,
for every fixed $k, m \in \N$.
Let $\A, \B$ be two \NTAe over alphabet $\Sigma$,
and let $\X$ be a set of $k$ clocks.
%
%Let $\Sigma$ be an alphabet and let $S, L, M \subseteq (\Sigma \cdot \Rpos)^*$
%be timed languages over $\Sigma$.
%
%
We build a timed synthesis game where the two sets of actions are
\begin{align*}
    A = \Sigma \quad \text{(Player \I),}
        \qquad
    B = \set{\acc, \rej} \quad \text{(Player \II)}.
\end{align*}
We define a projection function $\proj {a, b, t} = \tuple{a, t}$,
which is extended pointwise to finite and infinite timed words
$\proj {\tuple{a_0, b_0, t_0}\tuple{a_1, b_1, t_1} \cdots } = \tuple{a_0, t_0} \tuple{a_1, t_1} \cdots$
and timed languages $\proj L = \setof {\proj w} {w \in L \subseteq (A \cdot B \cdot \Rpos)^\omega}$.
Let $\Acc, \Rej \subseteq (A \cdot B \cdot \Rpos)^*$ be sets of those timed words
ending in a timed letter of the form $\tuple {\_, \acc, \_}$,
resp., $\tuple {\_, \rej, \_}$.
The winning condition for Player \I is
\begin{align}
    \label{eq:W0}
    W_0 = \big(\projinv {\lang \A} \cap \Rej \;\cup\; \projinv {\lang \B} \cap \Acc\big) \cdot (A \cdot B \cdot \Rpos)^\omega.
\end{align}
Crucially, we observe that $W_0$ is a $\NTAe$ language
since $\lang \A, \lang \B, \Rej, \Acc$ are so,
and this class is closed under inverse homomorphic images,
intersections, and unions.
The following lemma states the correctness of the reduction.
\begin{lemma}
    \label{lem:km:sep:red}
    There is a $k, m$-controller for Player \I in $\game A B{W_0}$ if, and only if,
    $\lang \A, \lang \B$ are $k, m$-deterministically separable.
\end{lemma}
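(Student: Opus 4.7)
The plan is to establish both implications by explicit translations between deterministic separators and winning controllers for Player~\II in $\game A B {W_0}$, exploiting the fact that $W_0$ precisely penalises an incorrect $\acc/\rej$ verdict after a prefix whose projection lands in $\lang \A$ or $\lang \B$.

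\textbf{From separator to controller ($\Leftarrow$).} Suppose $\S$ is a $k,m$-\DTA with $\lang \A \subseteq \lang \S$ and $\lang \S \cap \lang \B = \emptyset$. By the regionisation fact recalled in the preliminaries we may assume $\S$ is regionised, so its transition function is total on locations, input letters, and regions in $\regions \X m$. I would build a controller $\M$ with the same memory locations, initial location, clock set, and transition structure as $\S$, with each transition additionally tagged by the output letter $\acc$ when the target location is final in $\S$, and $\rej$ otherwise. Then along any $\M$-conform run the symbol $b_n$ emitted at step $n$ equals $\acc$ iff $\S$ accepts $\proj{\pi'}$, where $\pi'$ is the length-$n$ prefix of the play. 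Consequently no conform play $\pi$ can lie in $W_0$: a prefix with $\proj{\pi'} \in \lang \A \subseteq \lang \S$ forces $b_n = \acc$, ruling out the $\projinv{\lang \A} \cap \Rej$ branch of $W_0$, and symmetrically for $\proj{\pi'} \in \lang \B$.

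\textbf{From controller to separator ($\Rightarrow$).} Conversely, let $\M = \tuple{\Sigma, \set{\acc,\rej}, \L, \ell_0, \delta}$ be a winning $k,m$-controller. I would define a \kmDTA k m $\S$ with locations $\L \times \set{\acc,\rej}$, clocks $\X$, final locations $\L \times \set \acc$, and a transition $\transition{(\ell,\tau)}{a}{\varphi_\r}{\Y}{(\ell',b)}$ for every $\delta(\ell,a,\r) = (\ell',b,\Y)$ and every tag $\tau \in \set{\acc,\rej}$. A straightforward induction shows that $\S$ accepts a non-empty timed word $w$ exactly when the last output produced by $\M$ along the unique $\M$-conform run projecting to $w$ is $\acc$. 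To verify that $\S$ separates $\lang \A$ from $\lang \B$, take $w \in \lang \A$ and extend it to an arbitrary infinite play $\pi$; since $\M$ is winning we have $\pi \notin W_0$, so the length-$|w|$ prefix cannot lie in $\projinv{\lang \A} \cap \Rej$, forcing the last output to be $\acc$ and hence $w \in \lang \S$. Symmetrically, $w \in \lang \B$ forces the last output to be $\rej$, giving $w \notin \lang \S$. The empty-word boundary case is handled by choosing the initial tag of $\S$ to match membership of $\varepsilon$ in $\lang \A$; if $\varepsilon$ happens to lie in both $\lang \A$ and $\lang \B$, no separator exists and the statement is vacuous anyway.

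\textbf{Main obstacle.} The construction is essentially bookkeeping and presents no combinatorial depth. The sole mild subtlety is the mismatch between state-based \DTA acceptance and the letter-based verdict of a controller, bridged by the product with $\set{\acc,\rej}$ in the backward direction.
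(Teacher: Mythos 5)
Your proof is correct and follows essentially the same route as the paper: the same tagging of controller transitions by $\acc/\rej$ to build the separator $\S$ over $\L\times\set{\acc,\rej}$ with final locations $\L\times\set\acc$ and initial tag chosen by whether $\varepsilon\in\lang\A$, and the same reading-off of a controller from a (regionised) separator, with the key observation in both directions being that the last emitted verdict $b_n$ coincides with acceptance of the projected prefix by $\S$. Your explicit appeal to regionisation for totality of the transition function, and your remark on the empty-word boundary case, only make explicit what the paper leaves implicit.
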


\begin{proof}
    Let $\M = \tuple{A, B, \L, \ell_0, \delta}$ be a winning $k, m$-controller for Player \II in $G = \game{A}{B}{W_0}$.
    Let $\X = \set{\x_1, \dots, \x_k}$ be clocks of $\M$.
    We construct a separator 
    $\S = \tuple{\Sigma, \L \times B, \X, \I, \F, \Delta}$
    $\in$ \kmDTA{k}{m},
    where 
    %$\X$ is a set of $k$ clocks,
    $\I = \set{\tuple{\ell_0, \acc}}$ if $\varepsilon \in \lang \A$ and $\I = \set{\tuple{\ell_0, \rej}}$ otherwise,
    $\F = \L \times \set{\acc}$, and
    \begin{align}
        \label{eq:synthesis:sep}
        \transition{\tuple{\ell, b}} a \varphi \Y {\tuple{\ell', b'}} \in \Delta
            \quad \text{if, and only if,} \quad
                \delta(\ell, a, \varphi) = \tuple{\ell', b', \Y}.
    \end{align}
    We show that $\lang \S$ separates $\lang \A, \lang \B$
    using the fact that $\S$ is deterministic.
    In order to show $\lang \A \subseteq \lang \S$,
    let $w = \tuple{a_1, t_1} \cdots \tuple{a_n, t_n} \in \lang \A$
    and let Player \I play this timed word in $G$.
    Let the corresponding $\M$-conform partial play be
    $\pi = \tuple{a_1, b_1, t_1} \cdots \tuple{a_n, b_n, t_n}$.
    Since $\M$ is winning,
    $\pi$ does not extend to an infinite word in $W_0$,
    and in particular $\pi \not\in \projinv {\lang \A} \cap \Rej$.
    But $\proj \pi = w \in \lang \A$ by assumption, and thus $b_n = \acc$.
    The unique run of $\S$ on $w$ ends up in an accepting control location of the form $\tuple{\_, b_n}$,
    and thus $w \in \lang \S$, as required.
    The argument showing that $\lang \S \cap \lang \B = \emptyset$ is similar,
    using the fact that $\S$ is deterministic and must reach $b_n = \rej$
    and thus reject all words $\tuple{a_1, t_1} \cdots \tuple{a_n, t_n} \in \lang \B$.
    
    For the other direction,
    let $\S = \tuple{\Sigma, \L, \X, \set{\ell_0}, \F, \Delta} \in$ \kmDTA{k}{m} 
    be a deterministic separator.
    We construct a winning $k, m$-controller for Player \II in $G$
    of the form $\M = \tuple{A, B, \L, \ell_0, \delta}$
    where $\delta(\ell, a, \varphi) = \tuple{\ell', b, \Y}$
    for the unique $\Y, \ell', b$ \st
    $\transition \ell a \varphi \Y {\ell'} \in \Delta$
    and $b = \acc$ iff $\ell' \in \F$.
    In order to argue that $\M$ is winning in $G$,
    let $\pi = \tuple{a_1, b_1, t_1} \tuple{a_2, b_2, t_2} \cdots \in (A \cdot B \cdot \Rpos)^\omega$
    be an $\M$-conform play. By construction of $\M$ we have:
    \begin{claim}
    For every finite nonempty prefix $\pi' = \tuple{a_1, b_1, t_1} \cdots \tuple{a_n, b_n, t_n}$ of $\pi$,
    $\proj {\pi'} \in \lang \S$ if, and only if $b_n = \acc$.
    \end{claim}
    Knowing that $\lang \A \subseteq \S$, we deduce that no prefix of $\pi$ belongs to
    $\projinv {\lang \A} \cap \Rej$.
    Similarly, knowing that $\lang \S \cap \lang \B = \emptyset$, we deduce that no prefix of $\pi$ belongs to
    $\projinv {\lang \B} \cap \Acc$.
    Thus $\pi \notin W_0$ and therefore $\M$ is winning.
%    
%    Suppose, towards reaching a contradiction, that $\rho \in W_0$.  %let the induced play $\rtop\rho$ be in $W_0$.
%    %
%    There are two cases to consider.
%    %
%    In the first case, $\rtop\rho \in (\projinv {\lang \A} \cap \Rej) \cdot (A \cdot B \cdot \Rpos)^\omega$.
%    %
%    There is a prefix $\pi$ of $\rtop\rho$ which is an run in $\S$ ending in a rejecting state
%    over some timed word $\proj\pi \in \lang \A$,
%    which violates the fact that $\S$ is a separator.
%    %
%    The second case $\rtop\rho \in (\projinv {\lang \B} \cap \Acc) \cdot (A \cdot B \cdot \Rpos)^\omega$ is similar.
\end{proof}

\begin{proof}[Proof of \cref{sec:k:m:separability}]
    \Cref{lem:km:sep:red} provides a reduction from the $k,m$-deterministic separability problem
    to the $k,m$-timed synthesis problem.
    The latter problem is decidable by \cref{thm:km:synthesis}.
    Since the construction in \cref{lem:km:sep:red} is independent of $m$, it provides also 
    a reduction from the $k$-deterministic separability problem
    to the $k$-timed synthesis problem.
    The latter problem is decidable by \cref{thm:k:synthesis}.
\end{proof}
% !TEX root = main.tex

\section{Solving the timed synthesis problems}
\label{sec:timed:synthesis}

The second main result of this paper
is % the extension of 
decidability of the $k,m$-timed synthesis problem % from \cref{thm:km:synthesis} to
and of the $k$-synthesis problem, i.e., when the maximal constant $m$ is not specified in advance
(\Cref{thm:km:synthesis,thm:k:synthesis}).
This will be achieved in four steps.
In the first two steps (see \cref{sec:zero:starting,sec:strictly:monotonic})
we make certain easy simplifying assumptions that winning conditions $W$ are
strictly monotonic, and \emph{zero-starting}: all words
$\tuple{a_1, t_1} \tuple{a_2, t_2} \cdots \in W$ satisfy $t_1 = 0$.
The main technical construction is in \cref{sec:km:synthesis},
where we prove \cref{thm:km:synthesis}
in such a way that we will easily obtain \cref{thm:k:synthesis}
as a corollary thereof in \cref{sec:k:synthesis}.

The decidability results of this section are tight,
since timed synthesis is undecidable when $k$ is not fixed (c.f.~\Cref{thm:synUnd}).

%\subsection{Singleton resetting controllers}
%\label{sec:singleton:resets}
%
%It suffices to consider controllers
%which always reset \emph{at most} one clock on each transition.
%We call such controllers \emph{singleton resetting}.
%%
%Indeed, whenever a controller wants to reset more than one clock $\Y$,
%it will just reset a representative clock $\x \in \Y$
%and record in the finite control that the value of clocks in $\Y$
%should be computed from $\x$.

% !TEX root = main.tex

\subsection{Solving the $k, m$-timed synthesis problem}
\label{sec:km:synthesis}

In this section we prove \cref{thm:km:synthesis}
by reducing the $k, m$-timed synthesis problem
to a $0, 0$-timed synthesis problem,
which is decidable by \cref{lem:zero:synthesis}.
This is the most technically involved section.
The structure of the reduction will be useful in \cref{sec:k:synthesis}
to show decidability of the $k$-timed synthesis problem.
%
%Thanks to %\cref{sec:singleton:resets,
%\cref{sec:zero:starting,sec:strictly:monotonic},
%we can assume that % controllers are singleton resetting and that 
%the winning condition is both zero-starting and strictly monotonic.
%%
%For every fixed $k, m \in \N$ and \NTAe $\omega$-language $W$,
%the $k, m$-timed synthesis problem reduces to the untimed synthesis problem.

Let $\X$ be a fixed set of clocks of size $\card \X = k$
and let $m \in \N$ be a fixed bound on constants.
We reduce the $k, m$-synthesis problem to the $0,0$-synthesis problem by designing a protocol
in which Player \II, to compensate his inability to measure time elapse, can \emph{request} certain clocks to be \emph{tracked}.
In addition, we design the Player \I's winning condition that obliges her to remind whenever the value of any tracked clock is an integer,
by submitting \emph{expiry} information one time unit after a corresponding request.

%\para{Fractional regions}
Let $\fract{\x}$ stand for the fractional part of the value of a clock $\x$.
For $\Y_1, \Y_2 \subseteq \X$, two (partial) clock valuations $\mu \in \Rpos^{\Y_1}, \nu\in\Rpos^{\Y_2}$ 
are \emph{fractional region equivalent} if $\Y_1 = \Y_2$ and they exhibit the same relations
between fractional parts of clocks: $\mu \models \fract {\x} < \fract{\x'}$ iff
$\nu \models \fract {\x} < \fract{\x'}$  and
$\mu \models \fract {\x} = 0$ iff $\nu \models \fract {\x} = 0$, for all $\x, \x' \in \Y_1$.
By a (partial) fractional $\X$-region $\f$ we mean an equivalence class of this equivalence relation. 
%
%$\Y, 1$-region $\f \in \regions \Y 1$, for a (possibly empty) subset $\Y \subseteq \X$ of clocks, such that $\f \models 0 < \x \leq 1$ for all clocks
%$\x \in \Y$. 
%
All elements $\mu \in \Rpos^\Y$ in $\f$ have the same domain $\Y$,
which we denote by $\dom \f = \Y$.
Let $\one \f = \setof{\x\in\dom \f}{\f \models \fract \x = 0}$.
Let $\fregions \X$ be the set of all fractional $\X$-regions, including the empty one $\f_0$ with $\dom {\f_0} = \emptyset$.
For $\r \in \regions \X m$ and $\f \in \fregions \X$,
we say that $\f$ \emph{agrees} with $\r$ if they give the same answer for clocks 
$\x, \y \in \dom \f$:%
%\slawek{check me}
\begin{itemize}
    \item $\f \models \fract \x < \fract \y$ if, and only if,
    $\r \models \fract \x < \fract \y \vee \x > m \vee \y > m$;
    \item $\f \models \fract \x = 0$ if, and only if, $\r \models \fract \x = 0 \vee \x > m$.
\end{itemize}
%
% are $\mu \in \r$ and $\nu \in \f$ such that for every $\x\in\dom \f$ the difference $\mu(\x) - \nu(\x) \in \Z$ is an integer.
\noindent
The successor relation between regions induces a corresponding relation between fractional regions: $\f \preceq \f'$
whenever $\dom \f = \dom {\f'}$, $\f$ agrees with some $\r$, $\f'$ agrees with some $\r'$, and $\r \preceq \r'$.
The immediate successor is the minimal $\f'$ with $\f \prec \f'$.
Finally, the \emph{successor region of $\r$ agreeing with $\f$} is
%
%\begin{align*}
	$\xsuccessor {\X, m} \r {\f} = \min_\preceq \setof{\r' \succeq \r}{\f \text { agrees with } \r'}$.
%\end{align*}
%
In the sequel we apply clock resets also to regions 
$\extend \Y 0 \r$ and fractional regions.

Let the original game $G = \game A B W$ have action alphabets $A, B$
and Player \I's winning condition $W \subseteq (A \cdot B \cdot \Rpos)^\omega$.
Thanks to \cref{sec:strictly:monotonic,sec:zero:starting}
we assume that $W$ is both strictly monotonic and zero starting.
We design a new game $G' = \game {A'} {B'} {W'_{k, m}}$ as follows.
We take as the new action alphabets the sets
\begin{align}
    \label{eq:actions}
    A' \;=\; (A \cup \set{\tick}) \cdot \fregions \X
        \quad \text{and} \quad 
            B' \;=\; (B \cup \set{\tick}) \cdot 2^\X.
\end{align}
The players' action sets $A', B'$ depend only on the set of clocks $\X$
and do not depend on the maximal constant $m$.
Moves of the form $\tuple{\tick, \_}$ are \emph{improper}
and the other ones (i.e., those involving an $A$ or $B$ component) are \emph{proper}.
%
%A \emph{stop} is a Player \II's move of the form $\thanks$ or $(b, \thanks)$.
%
Let an infinite play be of the form
\begin{align}
    \label{eq:play}
    \pi =
        \tuple{a_1', b_1', t_1} \tuple{a_2', b_2', t_2} \cdots \in (A' \cdot B' \cdot \Rpos)^\omega,
    \text{ with }
        a_i' = \tuple{a_i, \f_i}
    \text{ and }
        b_i' = \tuple{b_i, \Y_i}.
\end{align}
%
%Let the fractional region $\bar \ZZ_i$ be of the form
%$\bar \ZZ_i = \ZZ_{i, 0} \cdots \ZZ_{i, n_i}$,
%and let $\T_i = \ZZ_{i, 0} \cup \cdots \cup \ZZ_{i, n_i}$ be its domain;
The domain $\T_i = \dom {\f_i}$ of a fractional region 
denotes the clocks \emph{tracked} at time $t_i$,
i.e., those for which Player \I needs to provide expiry information.
%
%Amongst those, clocks in $\one {\f_i} \subseteq \T_i$
%are \emph{expired at time $t_i$}
%in the sense that Player \I must provide immediately expiry information for them.
%
Sets $\Y_i$'s denote clocks which Player \II wants to be continued to be tracked:
by an \emph{$\x$-request at time $t_i$} we mean a Player \II's move $b_i'$ with $\x \in \Y_i$.
An $\x$-request at time $t_i$ is \emph{cancelled} if 
there is another $\x$-request for the same clock at some time $t_i < u < t_i + 1$.
%
%An \emph{$\x$-tick at time $t_i$} is a Player \I's move $a_i' = (\_, \f_i)$ with $\x \in \one {\f_i}$.
An \emph{improper $\x$-request chain} starting at time $t_i$ of length $l\geq 1$
is a sequence of improper non-cancelled $\x$-requests at times
$t_i$, $t_i + 1$, \dots, $t_i + l - 2$,
followed by an improper (but possibly cancelled) $\x$-request at time $t_i+l-1$.
%
%The chain is \emph{cancelled} if the last $\x$-request at $t_i+l-1$ is so, and \emph{non-cancelled} otherwise.
%
Likewise one defines an infinite improper $\x$-request chain starting at time $t_i$.

% !TEX root = main.tex

\newcommand{\fontsizeimg}{\normalsize}
% \definecolor{accentGray}{RGB}{71, 71, 75}
% \definecolor{accent}{RGB}{252, 199, 18}
% \tikzstyle{letter} = [inner sep=0.8mm, anchor=north, yshift=-2mm]
% \tikzstyle{dot} = [fill=accentGray, inner sep=1.0mm, draw=white]
% \tikzstyle{bgLine} = [draw=black, line width=0.2mm]
% \tikzstyle{axis} = [line width=0.5mm, draw=accent, -{Triangle[scale=1]}]
\definecolor{accentGray}{RGB}{71, 71, 75}
\definecolor{accent}{RGB}{252, 199, 18}
\tikzstyle{letter} = [inner sep=0.8mm, anchor=north west, yshift=-2mm]
\tikzstyle{bgLine} = [draw=accentGray, line width=0.2mm]
\tikzstyle{box} = [bgLine, fill=white, inner sep=0.8mm, anchor=north west, outer sep=0, text=black]
\tikzstyle{dot} = [fill=black, inner sep=0.8mm, outer sep=0]
\tikzstyle{axis} = [line width=0.5mm, draw=accent, -{Triangle[scale=1]}]

\begin{example}% [Replacing $\X,m$-constraints by chains of tracking requests]
%    This example is meant to give the intuition 
%    about the design goals for the winning set $W_{k,m}'$, 
    Before defining the winning set $W'_{k,m}$ formally, we illustrate the underlying idea.
    Consider the following partial play
    $(a_1, b_1, 0)\,(a_2, b_2, 4.2)\,(a_3, b_3, 6) \in (A \cdot B \cdot \Rpos)^*$
    in $G$:
    %
    %One can depict it on the timeline as follows:
    %
    \begin{center}\begin{tikzpicture}[xscale=1.95]
        % \node at (0,0.8) {};
        % \node at (0,-0.8) {};
        \node[text width=1.5cm, anchor=east, align=right] at (6.9,0.03) {\footnotesize \textsf{\phantom{emulated}\\time}};
        \foreach \x in {0,...,6} {
            \draw[bgLine] (\x,-0.2) -- (\x,0.2);
            \node at (\x, 0.5) {\fontsizeimg{$\x$}};
        }
        \draw[axis] (0,0) -- (7,0);
        % \foreach \letter/\time in {a/0,b/5,a/6.5} {
        %     \node[letter] at (\time,0) {$\letter$};
        %     \node[dot] at (\time,0) {};
        % }

        \foreach \letterA/\time/\letterB in {a_1/0/b_1,a_2/4.2/b_2,a_3/6/b_3} {
            \node[dot] (dot) at (\time,0) {};
            \node[box] (box) at (\time,-0.4) {
    \footnotesize{\!\!$\begin{array}{c}\letterA\\\letterB\end{array}$\!\!}
            };
            \draw[bgLine] (\time, -0.08) -- (\time, -0.4);
        }
    \end{tikzpicture}
    \end{center}
    In $G'$, %we focus on untimed Player \II's controllers;
    Player \II demands Player \I to provide clock expiry information.
    %
    %Request chains convey exactly the same information 
    %as Player \II's lost ability to measure time.
    Let $\X = \{\x, \mathtt{y}\}$ and $m=3$.
    Suppose Player \II wants to make sure that $a_2$ comes at time $> 3$.
    To this end, she makes an $\x$-request chain of length 3
     %, first one being placed as a response to $a_1$
    (we write $\overline{\x}$ instead of $\fract{\x}$;
    $\f_{\phi}$ denotes the fractional $\X$-region agreeing with $\phi$):
\vspace{-5mm}
    \begin{center}
    \begin{tikzpicture}[xscale=1.95]
        % \node at (0,0.8) {};
        % \node at (0,-0.8) {};
        \node[text width=1.5cm, anchor=east, align=right] at (6.9,0.03) {\footnotesize \textsf{emulated\\time}};
        \foreach \x in {0,...,6} {
            % \draw[bgLine,dotted] (\x,-2) -- (\x,0.2);
            \draw[bgLine] (\x,-0.2) -- (\x,0.2);
            \node at (\x, 0.5) {\fontsizeimg{$\x$}};
        }
        \draw[axis,dashed] (0,0) -- (7,0);
        \foreach \letterA/\time/\letterB/\name in {%
            (a_1, \f_0)       /0  /{(b_1, \{\x\})}/x,%
            (a_2, \f_0)       /4.2/{(b_2, \{\y\})}/y,%
            (a_3, \f_{0<\overline{\y}})/6  /{(b_3, \{\y\})}/z%
        } {
            \node[dot] at (\time,0) {};
            \node[box] (request\name) at (\time,-0.4) {
                \footnotesize{\!\!\!$\begin{array}{l}\letterA\\\letterB\end{array}$\!\!\!}
            };
            \draw[bgLine] (\time, -0.08) -- (\time, -0.4);
        }
        \foreach \time/\knowledge in {0/\x=0, 1/\x=1, 2/\x=2, 3/\x=3, 4.2/{\x>3, \y=0}, 5.2/\y=1, 6/1<\y<2} {
            \node[anchor=north west, bgLine, draw, dotted, text=accentGray] at (\time-0.005, -3.1) {\footnotesize{$\strut{{\knowledge}}$}};
            \draw[bgLine,dotted] (\time,-3.2) -- (\time,-1);
        }
        \foreach \letterA/\letterB/\time/\name in {%
            {(\tick, \f_{\overline{\x}=0})}/{(\tick, \set{\x} )}/1  /x1,%
            {(\tick, \f_{\overline{\x}=0})}/{(\tick, \set{\x} )}/2  /x2,%
            {(\tick, \f_{\overline{\x}=0})}/{(\tick, \emptyset)}/3  /x3,%
            {(\tick, \f_{\overline{\y}=0})}/{(\tick, \set{\y} )}/5.2/y1%
        } {
            \node[box] (request\name) at (\time,-1.35) {
                \footnotesize{\!\!\!$\begin{array}{l}\letterA\\\letterB\end{array}$\!\!\!}
            };
            \node[dot,fill=white,draw] at (\time,0) {};
            \draw[bgLine] (\time, -0.08) -- (\time, -1.35);
        }
        \foreach \time/\timee in {x/x1,x1/x2,x2/x3,y/y1} {
            \draw[-{Triangle},thick,rounded corners=1mm] ($(request\time.south)!0.2!(request\time.south east)$) |- ($(request\timee.south west)!0.8!(request\timee.south)-(0,0.3)$) -- ($(request\timee.south west)!0.8!(request\timee.south)$);
        }
        \draw[thick,rounded corners=1mm] ($(requesty1.south)!0.2!(requesty1.south east)$) |- (6,-2.477);
        \draw[thick,rounded corners=1mm] ($(requestz.south)!0.2!(requestz.south east)$) |- (7,-2.077);

        \draw (0,-2.7) -- (7,-2.7);
        \draw[thick,draw=red,{Triangle}-] (6,-2.477) -- (requestz.south west);
        % \node[anchor=west] at (5.6,-2.3) {\footnotesize\textsf{cancelled}};
        \node[outer sep=0.0mm, rotate=0, anchor=south west, text width=3cm] at (6,-2.7) {\footnotesize\textsf{\color{red}{cancelled}\phantom{y}}};
        \node[outer sep=0.0mm, rotate=0, anchor=south, text width=3cm] at (4.3,-2.7) {\footnotesize\textsf{what is played}};
        \node[outer sep=0.0mm, rotate=0, anchor=north, text width=3cm] at (4.3, -2.7) {\footnotesize\textsf{Player \II's knowledge}};
    \end{tikzpicture}
    \end{center}
    %
    %When Player \II reads $a_2$,
    %she has already observed a chain of requests of length $3$.
    %
    The length of an $\x$-chain at any given moment
    corresponds to the integral part of $\x$;
    the expiry information for $\x$ is provided by Player \I
    precisely when the fractional part of $\x$ is $0$.
\end{example}

In order to define $W'_{k,m}$
it will be convenient to have the following additional data extracted from $\pi$.
%
%Let $\bar \ZZ_i' = \ZZ_{i, 0}' \ZZ_{i, 1} \cdots \ZZ_{i, n_i}$ be the same as $\bar \ZZ_i$
%except that its first component becomes $ \ZZ_{i, 0}' = \Y_i$.
%
Let $\delta_i = t_i - t_{i-1} \geq 0$ be the time elapsed by Player \I at round $i$ (with $t_0 = 0$).
%For $i\geq 1$, let $\mu_i \in \Rpos^{\dom {\f_i}]$ be the valuation of clocks tracked at time $t_i$ defined by by the difference between the current timestamp $t_i$  and the timestamp of the last $\x$-request:
%$\mu_i(\x) := t_i - t_i^\x$, where 
%$t_i^\x = \argmax \setof{t_j}{j < i, \text{ the is an $\x$-tick at time } t_j}$. 
%
%
Furthermore, let $\nu_0 = \lambda\x \cdot 0$ be the initial clock valuation,
and, for $i \geq 0$, let 
%$\hat \mu_i = \mu_i + \delta_i$ be the clock valuation at round $i$ after time elapse, and 
%
\begin{align} \label{eq:nudef}
\nu_{i+1} = \extend {\Y_{i+1}} 0 {(\nu_i + \delta_{i+1})}.
\end{align}
In words, every $\x$-request is interpreted as reset of clock $\x$.
The winning condition $W'_{k,m}$ in the new game will impose, in addition to $W$,
the following further conditions to be satisfied by Player \I in order to win.
Let $W^\I_k \subseteq (A' \cdot B' \cdot \Rpos)^\omega$
be the set of plays $\pi$ as in \eqref{eq:play} which are zero-starting ($t_1 = 0$), strictly monotonic and, for every $i \geq 1$:
\begin{enumerate}

    \item For every $\x \in \X$, $\x$ is expired at time $t_i$ if, and only if, $t_i \geq 1$ and
    there is a non-cancelled $\x$-request at an earlier time $t_j = t_i-1$.
    \label{cond:x}

    \item Tracked clocks are consistent with requests:
    for every clock $\x\in \X$, 
    $\x$ is tracked $\x \in \T_i$ at time $t_i$ if, and only if, 
    %$\x$ is \emph{requested at time $t_i$}, namely 
    there is an $\x$-request at an earlier $t_j$ with
    $t_i - 1 \leq t_j < t_i$.
    \label{cond:ZZ}
%    $\bar \ZZ_i' \to \bar\ZZ_{i+1}$.
%    (In particular, $\T_{i+1} = (\T_i \setminus \ZZ_{i, 0}) \cup \Y_i$).

    \item The fractional regions are correct: $\f_i$ agrees with
    $\region \X m {(\nu_{i-1} + \delta_{i})}$.
%    for every clock $\x\in\dom {\f_i}$,     $\mu_i(\x)$ 
%    agrees with $\mu_{i-1} + \delta_i$.
%    $\bar \ZZ_i = \restrict {\ZZ_{\hat \mu_i}} {\T_i}$.
    %
    \label{cond:leq}

    \newcounter{enumTemp}
    \setcounter{enumTemp}{\theenumi}
\end{enumerate}

\noindent
Thus the conditions above assure that Player \I provides exactly all expiry information
requested by Player \II in a timely manner,
and the fractional regions $\f_i$ are consistent with the requests and time elapse.
Note that any play in $W^\I_k$ satisfies $0 < \nu_i(\x) \leq 1$ for every $i\geq 1$ and $\x\in\T_i$. Indeed,
positivity is due to strict monotonicity, and the upper bound due to the conditions \ref{cond:x}--\ref{cond:leq}.
Provided Player \I satisfies $W^\I_k$,
she wins whenever Player \II violates any of the conditions below:
Let $W^\II_{k,m} \subseteq (A' \cdot B' \cdot \Rpos)^\omega$
be the set of plays $\pi$ as in \eqref{eq:play} s.t.
\begin{enumerate}
    \setcounter{enumi}{\theenumTemp}
    
    \item Player \II plays a proper move iff Player \I does so.
    \label{cond:proper}
    
    \item Every improper Player \II's $\x$-request $b_i'$
    is a response to Player \I's expiry information for $\x$:
    $\Y_i \subseteq \one {\f_i}$.
    (Proper $\x$-requests are allowed unconditionally.)
    \label{cond:Y}

    \item For every clock $\x \in \X$,
    the length Player \II's improper $\x$-request chains is $< m$. 
    This is the only component in the winning condition which depends on $m$.
    \label{cond:chain}

\end{enumerate}

\noindent
Consider the projection function
$\phi : (A' \cdot B' \cdot \Rpos) \to (A \cdot B \cdot \Rpos) \cup \set \varepsilon$
\st $\phi(\tuple{a, \_}, \tuple{b, \_}, t) = \varepsilon$
if $a = \tick$ or $b = \tick$,
and $\phi(\tuple{a, \_}, \tuple{b, \_}, t) = \tuple{a, b, t}$
if $a \in A$ and $b \in B$,
which is extended homomorphically on finite and infinite plays.
The winning condition for Player \I in $G'$ is
\begin{align}
    \label{eq:winning:condition}
    W'_{k, m} \;=\;
        W^\I_k \cap \left(
            \phi^{-1}(W) \cup
            (A' \cdot B' \cdot \Rpos)^\omega \setminus W^\II_{k, m}\right).
\end{align}
Since $W$, $W^\I_k$, are \NTAe languages,
and $W^\I_k$ and $W^\II_{k,m}$ are \kDTA{k} languages over $A' \cdot B'$, 
thanks to the closure properties \DTA and \NTAe languages
the winning condition $W'_{k, m}$ is an \NTAe language.
%
%
%\para{Correctness of the reduction}
In what follows, an \emph{untimed controller} is a $0,0$-controller.
Then next two lemmas state the correctness of the reduction.
Our assumption on strict monotonicity facilitates the correctness proof since we
need not deal with simultaneous events.
\begin{lemma} \label{lem:km:untimed}
    If there is a winning $k, m$-controller $\M$ for $G$, then
    there is a winning untimed controller $\M'$ for $G'$.
\end{lemma}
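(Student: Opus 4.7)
The plan is to build an untimed controller $\M' = (A', B', \L', \ell_0', \delta')$ for $G'$ that internally simulates $\M$, using the request/expiry protocol of $G'$ to reconstruct $\M$'s current region without having real clocks. The memory $\L' = \L \times \set{0, 1, \ldots, m-1, \infty}^\X$ stores $\M$'s current location $\ell$ together with, for each clock $\x \in \X$, an integer $j_\x$ encoding the integer part of $\x$'s value, where $j_\x = \infty$ means ``$\x$ is no longer tracked, hence $> m$''. The initial memory is $(\ell_0, 0^\X)$, and the total size is $|\L| \cdot (m+1)^k$, which is finite.

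The transition function of $\M'$ on a proper input $(a, \f)$ from Player~\I with $a \in A$ reconstructs the current region $\r \in \regions \X m$ from the stored integer parts, the fractional information in $\f$ for tracked clocks, and the convention that untracked clocks are $> m$. It then consults $\M$'s transition $\delta(\ell, a, \r) = (\ell', b, \Y)$, outputs $(b, \Y)$, updates $\ell \gets \ell'$, and resets $j_\x \gets 0$ for each $\x \in \Y$. On an improper input $(\tick, \f)$, for every tracked clock $\x \in \one \f$ controller $\M'$ increments $j_\x$: if $j_\x$ would reach $m$ it sets $j_\x \gets \infty$ and omits $\x$ from the output set $\Y'$; otherwise it includes $\x$ in $\Y'$ to prolong the request chain. $\M'$ then plays $(\tick, \Y')$.

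Correctness rests on an invariant proved by induction on the round: for any $\M'$-conform play $\pi$ whose prefix up to round $i$ satisfies $W^\I_k$, the region computed by $\M'$ at round $i$ equals $\region \X m {\nu_i}$ with $\nu_i$ as in~\eqref{eq:nudef}. The inductive step for proper moves is immediate because $\M$ and $\M'$ reset the same clocks; the step for improper moves uses conditions~\ref{cond:x}--\ref{cond:leq} of $W^\I_k$, which force $\one \f$ to comprise exactly the clocks crossing an integer at that moment, so incrementing $j_\x$ on each such $\x$ is sound. Conditions~\ref{cond:proper} and~\ref{cond:Y} hold by construction, and condition~\ref{cond:chain} follows because $\M'$ switches $j_\x$ to $\infty$ as soon as it would reach $m$, guaranteeing every improper chain has length $< m$. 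Hence every $\M'$-conform play automatically lies in $W^\II_{k, m}$.

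To finish, let $\pi$ be any $\M'$-conform play. If $\pi \notin W^\I_k$ then $\pi \notin W'_{k, m}$ directly by~\eqref{eq:winning:condition}. Otherwise the invariant implies that $\phi(\pi)$ is an $\M$-conform play in $G$, hence $\phi(\pi) \notin W$ because $\M$ wins $G$, so $\pi \notin \phi^{-1}(W)$; combined with $\pi \in W^\II_{k, m}$ this again gives $\pi \notin W'_{k, m}$, so $\M'$ wins. The main obstacle is establishing the region-invariant: one must verify that the integer parts tracked through uncancelled improper request chains together with the fractional information in $\f_i$ genuinely determine the region of $\nu_i$, which relies on $W^\I_k$ forcing Player~\I to announce expiries at precisely the correct moments and no others.
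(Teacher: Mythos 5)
Your construction follows the paper's proof in all essential respects: the untimed controller simulates $\M$ by storing its location together with enough data to recover the current region of $\M$'s clocks, uses Player~\I's expiry announcements (forced by $W^\I_k$) to advance that data, re-requests each expired clock until its value reaches $m$ so that improper chains have length $<m$, and concludes by projecting every conform play to an $\M$-conform play of $G$. However, there is a concrete gap in your bookkeeping: you never issue the initial requests. In your controller a clock is requested only when $\M$ resets it (proper move) or when it is re-requested upon expiry (improper move), so any clock that $\M$ does not reset at round~1 is never requested at all. By conditions~\ref{cond:x} and~\ref{cond:ZZ} Player~\I then never tracks it nor announces its integer crossings, and your convention ``untracked $\Rightarrow\,>m$'' misclassifies it already at round~2, when its true value may be, say, $0.5$. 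The reconstructed region then differs from $\region \X m {\mu_i}$, the outputs of $\M'$ need not coincide with those of $\M$ on the projected timed word, and your final step ``$\phi(\pi)$ is $\M$-conform'' fails. The paper avoids this with a dedicated initial location $\rhd$ whose first transition requests \emph{all} of $\X$ at time $0$ --- this is precisely where the zero-starting normalisation is needed, and it is the reason all clocks enter the tracking protocol from the start.

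A second, smaller defect: your integer-part encoding is ambiguous when a proper move of Player~\I happens exactly at an expiry time. Then $\f$ reports $\fract{\x}=0$, but since you increment $j_\x$ only on improper moves, the stored integer part is one short of the true value at that round (and remains stale afterwards unless $\x$ is reset), so the region fed to $\delta$ is wrong. The paper's controller sidesteps this by always advancing to the successor region $\xsuccessor {\X, m} \r {\f}$ agreeing with the announced fractional region \emph{before} consulting $\delta$, on proper as well as improper moves. Both issues are repairable with local changes, but as written the region invariant on which your whole argument rests does not hold.
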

\begin{proof}
    Let $\M = \tuple{A, B, \L, \ell_0, \delta}$ be a winning $k, m$-controller $\M$ for $G$ with clocks $\X = \set{\x_1, \dots, \x_k}$ and
    update function $\delta : \L \cdot A \cdot \regions \X m \to \L \cdot B \cdot 2^\X$.
    %
    %\slawek{i've removed assumption about 'greedily reseting' as it was not important}
%    We assume w.l.o.g.~that $\M$ is \emph{greedily resetting}, namely every transition rule of $\M$
%    resets every clock whose value is either an integer $\leq m$.
%    %
%    Formally, whenever
%    $\delta(\ell, a, \r) = (\ell', b, \Y)$, and $\r \models \x = j$ for some $j \in \set{1, \ldots, m}$,
%%     or \slawek{is the latter cond needed?} $\r \models \x > m$,
%    we have $\x \in \Y$.     
%    %
%%    In consequence, any two clocks with integer difference are actually equal.
%%
%$A$ can be easily transformed into this form by extending locations with additional book-keeping,
%but importantly preserving the set of clocks and maximal constant $m$.
    %
    We define a winning untimed controller $\M' = \tuple{A', B', \L', \rhd, \delta'}$ for $G'$
    with memory locations 
    %
    %\begin{align*}
        % \tuple{\ell, \r} \in 
        $\L' = \set{\rhd} \ \cup \ \L \cdot \regions \X m$,
    %\end{align*}
    %
    where $\rhd$ is the initial memory location, and remaing memory locations 
    are of the form $(\ell, \r)$, where
    $\ell \in \L$ is the current memory location of $\M$
    and $\r \in \regions \X m$ is the current region of $\M$'s clocks.
    The update function $\delta' : \L' \cdot A' \to \L' \cdot B'$
    (we omit regions and clock resets because $\M'$ has no clocks)
    is defined as follows.
    As long as the play is in $W^\I_k$, we can assume that Player \I 
    starts with $((a, \f_0), t)$ and $t=0$,
    due to the zero-starting restriction,
    which allows Player \II to submit requests at time $0$. 
    Consequently, let
    $ \delta'(\rhd, \tuple{a, \f}) =
            \tuple{\tuple{\ell', \zeroregion}, \tuple{b, \X}}$,
    where the next location $\ell'$ and the response $b$ are determined by $\delta(\ell_0, a, \zeroregion) = (\ell', b)$, and
    the set $\X$ denotes a request to track all clocks.
    Then, for every $\ell, \r, a, \f$, let
    \begin{align} \label{eq:rhs}
        \delta'(\tuple{\ell, \r}, \tuple{a, \f}) =
            \tuple{\tuple{\ell', \r'}, \tuple{b, \Y}},
    \end{align}
    where the r.h.s.~is defined as follows.
    %
%    Let $\bar \ZZ = \ZZ_0 \cdots \ZZ_n \in \fregions \X$ be the fractional clock valuation played by Player \I,
%    where $\ZZ_0$ is the set of clocks which are just expired,
    Let $\T = \dom \f$ be the currently tracked clocks, and $\T_0 = \one \f \subseteq \T$ the currently expired ones.
    If $\f$ agrees with no successor region of $\r$ % restricted to $\T$,
    then Player \II wins immediately because Player \I is violating condition
    \ref{cond:leq}.
    Therefore, assume such a successor region $\hat \r = \xsuccessor {\X, m} \r {\f}$ exists.
    We do a case analysis based on whether Player \I plays a proper or an improper move.
    \begin{itemize}

        \item Case $a \in A$ (proper move):
        Let $\delta(\ell, a, \hat \r) = (\ell', b, \Y)$ thus defining $\ell'$ and $(b, \Y)$ in~\eqref{eq:rhs}.
        %
%        Since we assume that $\M$ is greedily resetting,
%        all expired clocks are reset: $\T_0 \subseteq \Y$.
        %
        Take as the new region $\r' = \extend \Y 0 {\hat \r}$.
        %\lorenzo{Greedy resetting can be useful here,
        %otherwise $\Y$ can be incomparable with $\T_0$ in general}

        \item Case $a = \tick$ (improper move):
        Let the response be also improper $b = \tick$,
        %(satisfying condition \ref{cond:proper}),
        the control location does not change $\ell' = \ell$,
        the new clocks to be tracked are the expired clocks with a short improper chain
        $\Y = \setof{\x \in \T_0}{\hat\r \models \x = 1 \vee \cdots \vee \x = m - 1}$,
        %(satisfying conditions \ref{cond:Y} and \ref{cond:chain},
        %thanks to condition \ref{cond:T}),
        and $\r' = \hat \r$. %the new region is 

    \end{itemize}
    
    \noindent
    Consider an infinite $\M'$-conform run in $G'$
    (omitting clock valuations since $\M'$ has no clocks)
    \begin{align*}
        \rho' \; = \;
            \rhd
            \tuple{a'_1, b'_1, t_1, \tuple{\ell_1, \r_1}}
            \tuple{a'_2, b'_2, t_2, \tuple{\ell_2, \r_2}} \cdots \in \iRun {\M'},
            %\text{ with }
            a'_i = \tuple{a_i, \f_i},
            b'_i = \tuple{b_i, \Y_i}.
    \end{align*}
    If the induced play $\pi' = \rtop {\rho'} =
            \tuple{a'_1, b'_1, t_1}
            \tuple{a'_2, b'_2, t_2} \cdots \in (A' \cdot B' \cdot \Rpos)^\omega$ 
    is not in $W^\I_k$,
    then Player \II wins and we are done.
    Assume $\pi' \in W^\I_k$, and thus conditions \ref{cond:x}--\ref{cond:leq} are satisfied.
    We argue that $\pi' \in W^\II_{k, m}$.
    The conditions~\ref{cond:proper} and~\ref{cond:Y} hold by construction.
    Aiming at demonstrating that~\ref{cond:chain} holds too,
    let $\mu_0 = \lambda\x \cdot 0$,
    and, for $i \geq 0$, let 
    \begin{align} \label{eq:mudef}
    	\mu_{i+1} = \begin{cases}
    				{\mu_i + \delta_{i+1}} & a_i = \tick \quad \text{ (improper round)} \\
    				\extend {\Y_i} 0 {(\mu_i + \delta_{i+1})} & a_i \in A \quad \text{ (proper round)}.
			   \end{cases}
    \end{align}
    Thus clock valuations $\mu_i$ are defined exactly as $\nu_i$ in~\eqref{eq:nudef} except that only proper requests are interpreted as clock resets. 
    We claim that the region information $\r_i$ % stored by $\M'$
    is consistent with $\mu_i$:
    %with the just defined clock valuations:
    %
    %\begin{align} \label{eq:rmu}
         $\r_i = \region \X m {\mu_i}$ (*).
    %\end{align}
 \noindent
 Indeed, this is due to $\pi' \in W^\I_k$, and the fact that $\M'$ updates its stored region consistently with
 time elapse: at every round $\M'$ uses the successor region agreeing with the current fractional region submitted by Player \I, and
 resets a set of clocks $\Y$ exactly
 when she plays a proper move of the form $(a, \Y)\in A \cdot 2^\X$.
 Since an $\x$-request is submitted by $\M'$ only when $\hat r \models x\leq m{-}1$,
 %improper $\x$-request chains are short and hence
 condition~\ref{cond:chain} holds.
 
    In order to show that Player \II is winning, consider an $\M'$-conform run $\rho'$. % as above. 
    It suffices to show $\pi' = \rtop {\rho'} \not\in \phi^{-1}(W)$.
    %
%    Suppose, towards contradiction, that $\pi' \in \phi^{-1}(W)$.
    %
    Let the proper moves in $\rho'$ be at indices $1 = i_1 < i_2 < \cdots$
    ($i_1 = 1$ due to zero-starting).
%    and let $j(i)$ be the largest $i_j$ \st $i_j \leq i$.
%    %
%    Observe that $j(i)$ is always defined for $i \geq 1$ as $j_0 = 0$, 
%    since plays in $W^\I_k$ are assumed to be zero-starting, and
%    $\ell_i = \ell_{i_{j(i)}}$ for $i\geq 1$.
    In particular, $\ell_{i_l} = \ell_{i}$ for $i_l \leq i < i_{l+1}$.
    Consider the run
    $    \rho \; = \;
            \tuple{\ell_{0}, \mu_0}
            \tuple{a_{i_1}, b_{i_1}, t_{i_1}, \tuple{\ell_{i_1}, \mu_{i_1}}}
            \tuple{a_{i_2}, b_{i_2}, t_{i_2}, \tuple{\ell_{i_2}, \mu_{i_2}}} \cdots$.
    Using (*) %~\eqref{eq:rmu}
    and the definition of $\M'$,
    one can prove by induction that $\rho$ is an $\M$-conform run in $G$.
    Since $\M$ is winning, the induced play
        $\pi = 
            \rtop \rho =
            \tuple{a_{i_1}, b_{i_1}, t_{i_1}}
            \tuple{a_{i_2}, b_{i_2}, t_{i_2}} \cdots \in (A \cdot B \cdot \Rpos)^\omega$,
    satisfies $\pi \notin W$.
    Again by induction one can prove that $\pi = \phi(\pi')$.
    Hence $\phi(\pi') \notin W$ as required. 
\end{proof}

\begin{restatable}{lemma}{lemSecondImpl}
	\label{lem:Second:Impl}
    If there is a winning untimed controller $\M'$ in $G'$,
    then     there is a winning $k, m$-controller $\M$ in $G$.
\end{restatable}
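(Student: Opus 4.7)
My plan is to construct a winning $k, m$-controller $\M$ for $G$ that internally simulates $\M'$, using its real clocks to emulate the request/expiry protocol of $G'$. Whenever a clock currently in chain reaches an integer value in $\M$'s real time, $\M$ synthesises a phantom $\tick$-round carrying the appropriate fractional region, feeds it to $\M'$, processes $\M'$'s response, and only then acts upon the actual move played by Player~\I.

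\textbf{Construction.} I would define $\M$ with clocks $\X$ and memory of the form $\tuple{\ell', C}$, where $\ell'$ is a memory location of $\M'$ and $C \subseteq \X$ records which clocks are currently in chain, i.e., those for which a future expiry is still expected. The update function on input $(a, \varphi)$ proceeds in two phases. First, from $\varphi$ and the previously stored region, $\M$ computes the chronological list of integer values $n \leq m-1$ crossed by clocks in $C$ strictly between $t_{i-1}$ and $t_i$; for each such crossing, $\M$ feeds $\M'$ a move $(\tick, \f)$ with $\f$ the fractional region of the currently tracked clocks at that moment (with $\fract{\x} = 0$ precisely for the expiring clocks), obtains a response $(\tick, \Y')$ from $\M'$ with $\Y' \subseteq \one{\f}$, and updates $C \leftarrow (C \setminus \one{\f}) \cup \Y'$ and $\ell'$ accordingly. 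Second, $\M$ processes the actual proper round at $t_i$ by submitting $(a, \f_i)$ to $\M'$, where $\f_i$ is the fractional region at $t_i$ (absorbing any expiry that coincides with $t_i$), receives a response $(b, \Y)$, outputs $b$ in $G$, resets the clocks in $\Y$, and updates $C$ appropriately. The bound on chain lengths from condition~\ref{cond:chain} ensures at most $k(m-1)$ simulated $\tick$-rounds between consecutive proper moves, so $\M$ is a well-defined regionised \kmDTA{k}{m}.

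\textbf{Correctness.} Every infinite $\M$-conform run $\rho$ in $G$ yields, by unfolding the simulated intermediate rounds, an infinite $\M'$-conform run $\rho'$ in $G'$; write $\pi = \rtop{\rho}$ and $\pi' = \rtop{\rho'}$, so that $\phi(\pi') = \pi$ by construction. The crucial step is to verify that $\pi' \in W^\I_k$: conditions~\ref{cond:x} and~\ref{cond:ZZ} are immediate from $\M$'s bookkeeping of $C$ and the timing of its $\tick$-rounds, but condition~\ref{cond:leq} is the main technical obstacle, as it requires the fractional region submitted by $\M$ to agree with the region of $\nu_{i-1} + \delta_i$, where the virtual valuation $\nu$ is reset on \emph{every} request while $\M$'s real clocks are reset only on proper ones. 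The resolution is the observation that each improper reset of a virtual clock happens at an integer value of the corresponding real clock, so the fractional parts of virtual and real clocks coincide throughout the play and the fractional region derived from $\M$'s real clocks is indeed correct. Once $\pi' \in W^\I_k$ is established, the winning property of $\M'$ gives $\pi' \notin W'_{k,m}$, which by~\eqref{eq:winning:condition} forces $\pi' \notin \phi^{-1}(W)$, i.e.\ $\pi = \phi(\pi') \notin W$; hence $\M$ is a winning $k, m$-controller for Player~\II in $G$.
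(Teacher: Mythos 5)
Your proposal is correct and follows essentially the same route as the paper's proof: simulate the untimed controller by inserting phantom $\tick$-rounds at the integer crossings of the real clocks (read off from the stored region and the input region), exploit the coincidence of fractional parts between the virtual valuations $\nu_i$ and the real clocks to justify condition~\ref{cond:leq}, and bound the number of inserted rounds via condition~\ref{cond:chain}. The paper merely packages the same idea differently, first normalising $\M'$ to a ``complete'' controller whose memory explicitly carries the pair $(\r,\f)$ and then defining the $k,m$-controller's update by a recursion through successor fractional regions; note that your memory $\tuple{\ell',C}$ should likewise explicitly include the current region, as your own phrase ``the previously stored region'' presupposes.
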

%
%\noindent
%(The proof is in \Cref{app:Second:Impl}.)

% !TEX root = main.tex

\subsection{Solving the $k$-timed synthesis problem}
\label{sec:k:synthesis}

In this section we prove \cref{thm:k:synthesis},
stating that the $k$-timed synthesis problem is decidable,
by reducing it to the $0, 0$-synthesis problem,
which is decidable by \cref{lem:km:untimed}.
%
%We build on the proof of \Cref{thm:km:synthesis}.
We build on the game defined in \cref{sec:km:synthesis}.
Starting from a timed game $G = \game A B W$
we define the timed game $G'' = \game {A'} {B'} {W''_k}$,
where the sets of actions $A'$ and $B'$ are as in \eqref{eq:actions},
and the winning condition $W''_k$ is defined as follows.
Let $W^\II_k \subseteq (A' \cdot B' \cdot \Rpos)^\omega$
be the set of plays where, for every clock $\x \in \X$,
improper $\x$-request chains have finite lengths:
$W^\II_k = \bigcup_{m \in \N} W^\II_{k, m}$.
(In other words, $(A' \cdot B' \cdot \Rpos)^\omega \setminus W^\II_k$
contains plays with an infinite improper $\x$-request chain, for some clock $\x \in \X$.)
Then, $W''_k$ is defined as $W'_{k,m}$ from \eqref{eq:winning:condition},
except that $W^\II_{k,m}$ is replaced by the weaker condition $W^\II_k$
(notice $W''_k$ does not depend on $m$):
\begin{align}
    \label{eq:winning:condition:two}
    W''_k \;=\;
        W^\I_k \cap \left(
            \phi^{-1}(W) \cup
            (A' \cdot B' \cdot \Rpos)^\omega \setminus W^\II_k\right).
\end{align}
\vspace{-7mm}
\begin{restatable}{lemma}{lemmaUntimedUntimed} \label{lemma:untimed:untimed}
    There is a winning untimed controller for $G''$ if, and only if,
    there is some $m\in\N$ and a winning untimed controller for $G' = \game {A'} {B'} {W'_{k,m}}$.
\end{restatable}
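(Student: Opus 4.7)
The plan is to address the two directions separately. The $(\Leftarrow)$ direction is a pure inclusion argument: since $W^\II_k = \bigcup_m W^\II_{k,m}$ by definition, we have $W^\II_{k,m} \subseteq W^\II_k$ and hence $(A' \cdot B' \cdot \Rpos)^\omega \setminus W^\II_k \subseteq (A' \cdot B' \cdot \Rpos)^\omega \setminus W^\II_{k,m}$. Comparing the definitions of $W''_k$ and $W'_{k,m}$ then gives $W''_k \subseteq W'_{k,m}$, so any untimed controller that avoids $W'_{k,m}$ on every conform play also avoids $W''_k$ and wins for $G''$.

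For the $(\Rightarrow)$ direction, let $\M''$ be a winning untimed controller for $G''$ and let $N$ be the number of its memory locations. The claim I will prove is that the same controller is already winning for $G' = \game{A'}{B'}{W'_{k,m}}$ with $m = N + 1$. Suppose for contradiction that some $\M''$-conform play $\pi$ lies in $W'_{k,N+1}$. Then $\pi \in W^\I_k$, and since $\M''$ wins $G''$ we deduce $\pi \notin \phi^{-1}(W)$ and $\pi \in W^\II_k$. The membership $\pi \in W'_{k,N+1}$ combined with $\pi \notin \phi^{-1}(W)$ now forces $\pi \notin W^\II_{k,N+1}$; hence some improper $\x$-request chain of $\pi$ has length $L \geq N + 1$, with requests at the integer-spaced times $t_i, t_i + 1, \dots, t_i + L - 1$.

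Let $\ell^j$ denote the memory location of $\M''$ just before the $j$-th chain move, for $0 \leq j \leq N$. By pigeonhole there exist $0 \leq j_1 < j_2 \leq N$ with $\ell^{j_1} = \ell^{j_2}$. Let $\sigma$ be the segment of $\pi$ consisting of all events occurring in the time interval $[t_i + j_1, t_i + j_2)$; by determinism of $\M''$, $\sigma$ starts and ends in the same memory state $\ell^{j_1}$. I will construct a new play $\pi'$ in which Player~\I plays the prefix of $\pi$ up to time $t_i + j_1$ and then plays infinitely many copies of $\sigma$, each shifted by $j_2 - j_1$ time units from the previous one. Since $\M''$ is deterministic and untimed, each copy reproduces the same memory trajectory, so $\pi'$ is $\M''$-conform and carries an infinite improper $\x$-request chain starting at time $t_i + j_1$; in particular $\pi' \notin W^\II_k$.

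The main obstacle I anticipate is verifying that $\pi' \in W^\I_k$. This will rely on the key observation that the three conditions defining $W^\I_k$ are local in time: condition 1 (expiry exactly one time unit after a non-cancelled request) and condition 2 (tracking within a one-unit window) depend only on events in the immediately preceding time unit; condition 3 constrains the fractional region only on tracked clocks, whose values lie in $(0,1]$ and are determined by resets within the last time unit. A time-shifted copy of a valid segment therefore preserves these conditions, as well as strict monotonicity and zero-starting. Consequently $\pi' \in W^\I_k$ and, together with $\pi' \notin W^\II_k$, this gives $\pi' \in W''_k$, contradicting that $\M''$ wins $G''$. This proves the claim and yields the desired bound $m = N + 1$.
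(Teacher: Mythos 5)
Your ``if'' direction is exactly the paper's argument and is fine. The ``only if'' direction follows the same overall strategy as the paper (find a repetition inside a long improper chain and pump it), but it has two genuine gaps, both located at the splice point. First, your pigeonhole is only over the $N$ memory locations of $\M''$, whereas the paper pigeonholes over pairs consisting of the memory location \emph{and} Player \I's action, taking $m = \card{A'}\cdot\card{\L}+1$. This matters: Player \I's action at a chain move is $(\tick,\f)$, and the fractional region $\f$ records which clocks are currently tracked, their ordering, and which have just expired. If $\ell^{j_1}=\ell^{j_2}$ but the fractional regions at the two splice points differ, then the pending (non-cancelled) requests in the unit window before $t_i+j_1$ need not correspond to those before $t_i+j_2$, and gluing a copy of $\sigma$ after the first copy immediately violates conditions \ref{cond:x}--\ref{cond:ZZ} of $W^\I_k$: the second copy supplies expiry information for the wrong set of clocks at the wrong times. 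So the contradiction ($\pi'\in W^\I_k$) does not go through with $m=N+1$; you need the fractional region in the repeated data.

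Second, even after fixing the pigeonhole, repeating $\sigma$ ``shifted by $j_2-j_1$ time units'' is not enough. Equality of the fractional regions guarantees only that the last-request times of the tracked clocks in $[t_i+j_1-1,\,t_i+j_1)$ and in $[t_i+j_2-1,\,t_i+j_2)$ have the same cardinality, order, and zero-pattern --- not that they differ by exactly $j_2-j_1$. A pure translation therefore places the expiry moves of the next copy at times that need not be exactly one unit after the actual requests made at the tail of the previous copy, again breaking condition \ref{cond:x}. The paper resolves this by choosing a (generally non-translation) automorphism $f$ of $\tuple{\R,\leq,+1}$ that maps $t_i-1\mapsto t_j-1$ and each last-request time $t_{i_h}$ to the corresponding $t_{j_h}$, and then iterating $\sigma\cdot f(\sigma)\cdot f(f(\sigma))\cdots$; the seam analysis (the case split on $t_s< t_j-1$ versus $t_j-1\leq t_s<t_j$ and the induction over copies) is exactly the content your ``locality'' remark gestures at but does not carry out. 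In short: right pumping idea, but the repeated datum must include Player \I's action, and the shift must be an automorphism chosen to match the pending-request times, not a translation.
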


\begin{proof}
    For the ``if'' direction,
    we observe that $W''_k \subseteq W'_{k,m}$, for every $m\in\N$.
    Hence every winning untimed controller for $G'$ is also winning for $G''$.
    For the ``only if'' direction,
    let $\M'' = \tuple{A', B', \L, \ell_0, \delta}$ be an untimed winning controller in $G''$.
        Let $m = \card {A'} \cdot \card \L + 1$.
        We claim that $\M''$ is also winning in $G' = \game {A'} {B'} {W'_{k,m}}$ for this choice of $m$.
    Towards reaching a contradiction,
    suppose $\M''$ is losing in $G'$.
    An $\M''$-conform run $\rho$ in $G'$ (or in $G''$) and its associated play $\pi$ are of the form
    \begin{align*}
        \rho \;=\; &\ell_0 \tuple{a'_1, b'_1, t_1, \ell_1}
            \tuple{a'_2, b'_2, t_2, \ell_2} \cdots \in \iRun{\M''},
            \text{ with } 
            a'_i = \tuple{a_i, \f_i} \text { and } b'_i = \tuple{b_i, \Y_i},
             \\
        \pi \;=\; &\rtop \rho = \tuple{a'_1, b'_1, t_1} \tuple{a'_2, b'_2, t_2} \cdots \in \Play{\M''}.
    \end{align*}
    Let $\rho_i \in \Run{\M''}$ be the finite prefix of $\rho$
    ending at $\tuple{a'_i, b'_i, t_i, \ell_i}$.
    %
    %The corresponding sequence of clock valuations is defined as
    %$\mu_0 = \lambda \x \cdot 0$,
    %and $\mu_{i+1} = \extend {\Y_i} 0 {(\mu_i + t_i - t_{i-1})}$ (with $t_0 = 0$),
    %and let the set of clocks being tracked at round $i$ be
    %$\T_i = \ZZ_{i, 0} \cup \cdots \cup \ZZ_{i, n_i}$.
    %
    Since $\M''$ is losing in $G'$, some $\M''$-conform play $\pi$ above is in $W'_{k, m}$.
    % from    \eqref{eq:winning:condition}.
    %
    Since $\M''$ is winning in $G''$, $\pi \not\in \phi^{-1}(W)$,
    and thus $\pi \in W^\I_k \setminus W^\II_{k, m}$.
    This means that $\pi$ contains an improper $\x$-request chain $C$ of length $m$,
    for some clock $\x \in \X$.
    %
    %We have $\restrict {\ZZ_{\mu_i}} {T_i} = \bar \ZZ_i$.
    %
    By the definition of $m$, there are indices $i < j$
    \st the the same controller memory repeats together with Player \I's action
    $\tuple{a'_i, \ell_i} = \tuple{a'_j, \ell_j}$. In particular $\f_i = \f_j$.
    Since $\M''$ is deterministic and its action depends only on Player \I's action $a'_i$
    and control location $\ell_i$,
    a posteriori we have $b'_i = b'_j$ as well.
    Moreover, as consecutive timestamps in $C$ are equal to the first one plus consecutive nonnegative integers, 
    $\Delta = t_i - t_j \in \set{1, \dots, m - 1}$.
    Consider the corresponding infix
    $\sigma = \tuple{a'_{i+1}, b'_{i+1}, t_{i+1}, \ell_{i+1}} \cdots \tuple{a'_j, b'_j, t_j, \ell_j}$
    of the run $\rho$.
    Since $\pi \in W'_{k,m}$,
    thanks to conditions \ref{cond:ZZ} and \ref{cond:leq} the fractional regions $\f_i = \f_j$ contain all tracked clocks,
    and they agree with the clock valuations $\nu_i$ and $\nu_j$, respectively, as defined in~\eqref{eq:nudef}.
    Let
    $\set{t_i-1 \leq t_{i_1} < t_{i_2} < \cdots < t_{i_l} < t_i} = \setof{t_i - \nu_i(\x)}{\x \in \dom {\f_i}}$
    be the timestamps corresponding to the last request of the clocks tracked at time $t_i$,
    and likewise let
    $\set{t_j-1 \leq t_{j_1} < t_{j_2} < \cdots < t_{j_{l'}} < t_j} = \setof{t_j - \nu_j(\x)}{\x \in \dom {\f_j}}$. 
    By assumption, $\f_i = \f_j$, and hence $l = l'$ and for $\x\in\dom{\f_i} = \dom{\f_j}$ and $1\leq h \leq l$, 
    %
    %\begin{align}
        %\label{eq:ifftntn}
        $t_{i_h} = t_i - \nu_i(\x) \text{ if, and only if, } t_{j_h} = t_j - \nu_j(\x)$ (*).
    %\end{align}
    %    
    Moreover, since $\one{\f_i} = \one{\f_j}$, we have
    %
    %\begin{align}
    %    \label{eq:ifftn}
        $t_{i_1} = t_i -1 \text{ if, and only if, } t_{j_1} = t_j - 1$ (**).
    %\end{align}
    %
    Player I will win in $G'$
    by forcing a repetition of the infix $\sigma$ \emph{ad libitum}.
    In order to do so, we need to modify its timestamps.
    An \emph{automorphism} of the structure $\tuple{\R, \leq, +1}$
    is a monotonic bijection preserving integer differences,
    in the sense that $f(x+1) = f(x) + 1$ for every $x\in\R$.
    Note that such an automorphism is uniquely defined by its action on any unit-length interval.
    We claim that there exists such an automorphism $f: \R \to \R$
    mapping $t_i-1$ to $t_j-1$ (and hence forcedly also $t_i$ to $t_j$), and each $t_{i_h}$ with $1 \leq h \leq l$ to $f(t_{i_h}) = t_{j_h}$.
    %
    %(In particular, $f(t_i) = t_j = t_i + \Delta$.)
    %
    This is indeed the case,
    %since~\eqref{eq:ifftntn} and \eqref{eq:ifftn} hold,
    by (*) and (**)
    all timestamps $t_{i_h}$'s belong to the unit half-open interval $[t_i - 1, t_i)$
    and likewise all timestamps $t_{j_h}$'s belong to $[t_j -1, t_j)$.
    We apply $f$ to a timed word $\sigma \mapsto f(\sigma)$
    by acting pointwise on timestamps.
    Consider the infinite run
    %
    %\begin{align*}
    $    \rho' \;=\;
            \rho_i \cdot
                \sigma \cdot
                    f(\sigma) \cdot
                        f(f(\sigma)) \cdots;$
    %\end{align*}
    %
    it is $\M''$-conform since the controller $\M''$ is deterministic.
%    and thus behaves the same at every iteration.
    %
    By construction, $\rho'$ contains an infinite $\x$-request chain,
    and thus $\rho' \not\in W^\II_k$.
    It remains to argue that $\rho \in W^\I_k$ implies $\rho' \in W^\I_k$ as well.
    Let there be a non-cancelled $\x$-request at time $t_s$ in $\rho'$.
    If $t_s < t_j - 1$,
    then this request must be satisfied at time $t_{s'} = t_s + 1 < t_j$,
    and thus already in $\rho_i \cdot \sigma$,
    which is the case since the latter is a prefix of $\rho \in W^\I_k$.
    Now assume $t_j - 1 \leq t_s < t_j$.
    Thus $t_s = t_{j_h}$ for some $1 \leq h \leq l$.
    By the definition of $f$,
    $f^{-1}(t_s) = t_{i_h} < t_j - 1$ and, thanks to the previous case,
    the request at $t_{i_h}$ is satisfied at $t_{i_h} + 1$ due to (*).%~\eqref{eq:ifftntn}.
    By applying $f$ we obtain $f(t_{i_h} + 1) = f(t_{i_h}) + 1 = t_s + 1$,
    and thus the request at time $t_s$ is satisfied at time $t_s + 1$ in $f(\sigma)$,
    as required.
    The general argument for % $f^n(t_j) + d - 1 \leq t_s < f^n(t_j) + d$
    $t_j + n \Delta + d - 1 \leq t_s < t_j + n \Delta + d$, where $n\geq 0$ and $0\leq d < \Delta$, is similar,
    using induction on $n$.
\end{proof}

\begin{proof}[Proof of \cref{thm:k:synthesis}]
    Due to \cref{lem:km:untimed,lem:Second:Impl,lemma:untimed:untimed},
    %lem:untimed:km,lemma:untimed:untimed},
    there is a winning untimed controller $\M''$ for $G''$ if, and only if 
    there is some $m\in\N$ and a winning $k,m$-controller $\M$ for $G$.
    Thus the $k$-synthesis problem reduces to the $0,0$-synthesis problem,
    and the latter is decidable thanks to \cref{lem:zero:synthesis}.
\end{proof}

% !TEX root = main.tex

\section{Future work}

While deterministic separators may need exponentially many clocks
(c.f.~\cref{example:separable:exponential}),
we do not have a computable upper bound on the number of clocks of the separating automaton (if one exists).
We leave the \DTA separability problem
when the number of clocks is not fixed in advance as a challenging open problem.
In this case,
we cannot reduce the separability problem to a timed synthesis problem,
since the latter is undecidable.
\begin{restatable}{theorem}{thmSynUnd}
    \label{thm:synUnd}
    The timed synthesis problem is undecidable,
    and this holds already when Player \I's winning condition is a \kNTA 1 language.
\end{restatable}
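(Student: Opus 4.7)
The plan is to reduce from the halting problem for Minsky two-counter machines to the timed synthesis problem. The source of undecidability is not the winning condition itself (since \kNTA{1} universality is decidable by Ouaknine--Worrell), but rather Player~\II's unboundedly many clocks: a DTA with $k$ clocks and maximal constant $m$ can faithfully simulate counter values only up to a threshold depending on $k$ and $m$. The game will be set up so that Player~\II is obliged to simulate a given two-counter machine $M$, and wins exactly when her finite resources suffice.

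Given $M$ (which we pad by standard techniques so that non-halting implies unbounded counter growth), I would build the game $\game{A}{B}{W_M}$ as follows. The alphabet $A$ combines the instruction alphabet of $M$ with auxiliary Alur--Dill-style event letters used to encode counter values as events placed inside unit-length time intervals, together with a distinguished letter \emph{witness} that Player~\I may play to accuse Player~\II of cheating. The alphabet $B$ contains Player~\II's answers to zero-tests plus a safe filler letter. In a normal play, Player~\I drives the simulation by prompting the next instruction of $M$ at integer timestamps (integrality is enforced by $W_M$), while Player~\II commits to the resulting counter-update events and zero-test answers; internally, her controller tracks counter values with her clocks.

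The winning condition $W_M$ for Player~\I is the \kNTA{1} language consisting of plays in which \emph{witness} has been played and the play up to that point contains a local inconsistency between two adjacent configurations of $M$ --- either the response to a zero-test contradicts the actual counter, or the pattern of increment/decrement events between consecutive unit intervals does not match the announced instruction. Exactly as in the Alur--Dill proof of NTA universality undecidability, such a local inconsistency is witnessable by a single clock: the automaton nondeterministically guesses the cheating step, resets its clock at a candidate event, and one time unit later checks for the matching or missing event in the next unit interval. A prefix check on the letter \emph{witness} keeps the overall condition in \kNTA{1}.

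For correctness, if $M$ halts, Player~\II picks a controller with enough clocks and large enough constants to store the bounded run of $M$, replies faithfully, and afterwards emits the filler letter forever --- no \emph{witness} is ever justified, so she wins. Conversely, if $M$ does not halt, the padding ensures that the counters eventually overflow the finite capacity of any fixed controller, forcing an inconsistent commitment; Player~\I then plays \emph{witness} and the 1-NTA certifies the cheat, so Player~\II loses. Hence Player~\II wins $\game{A}{B}{W_M}$ iff $M$ halts, which is undecidable. The main obstacle is the encoding: we must simultaneously guarantee that each cheat is detectable by a single clock of the winning-condition automaton, and that any fixed controller must cheat in finite time whenever the counters outrun its capacity --- both handled by the classical Alur--Dill single-clock trick combined with the counter-padding of $M$.
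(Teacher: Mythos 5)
Your high-level intuition (undecidability comes from Player \II's unbounded clock resources, with counter values encoded Alur--Dill-style as events in unit intervals) matches the paper, but the construction as you set it up does not fit the game model and has a gap that the paper's protocol is specifically designed to close. In this game only Player \I plays timed letters; Player \II replies with one \emph{untimed} letter per round. So Player \II cannot ``commit to counter-update events placed inside unit-length intervals'' --- every event letter is played, and timed, by Player \I. The consistency of the encoding (each counter event propagated one time unit later, etc.) is therefore entirely Player \I's doing, and your winning condition ``\emph{witness} has been played and the play contains a local inconsistency'' lets Player \I win by deliberately playing an inconsistent encoding and then accusing, so Player \II could never win any instance. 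To repair this you would have to intersect $W_M$ with ``Player \I's encoding is consistent'', but that is a universally quantified (co-\NTA) property --- precisely the one that Alur--Dill show is not \NTA-recognisable --- so $W_M$ would leave \kNTA{1}. Moreover, the one task you do assign to Player \II, answering zero-tests against an honest encoding, needs only one bit of finite memory per counter and no clocks at all, so unbounded counter growth does not by itself force her into an ``inconsistent commitment''; your non-halting direction therefore does not go through as stated.

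The paper's proof inverts the roles: Player \II is the \emph{verifier}. Each round she answers $\OK$ or $\ERROR_e$ for a specific error type $e$; Player \I wins iff she says $\OK$ although some error exists (a \kNTA{1} condition, by guessing the error), or says $\ERROR_e$ although no error of type $e$ occurs at the position pinned down by the timing of her move (again \kNTA{1}, checkable precisely because the accusation localises the alleged error). What requires unboundedly many clocks is verifying the propagation of \emph{all} events of a configuration block, which forces Player \II to remember as many timestamps as the counter values; accordingly the right hardness source is the \emph{finiteness} (space-boundedness) problem for lossy counter machines (Mayr), whose one-sided update conditions match the error types that can be pointed at, rather than halting of Minsky machines. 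If you rework your reduction along these lines --- Player \I produces the run encoding, Player \II polices it with an $\OK$/$\ERROR_e$ protocol in which both false accusations and missed errors are \kNTA{1}-expressible --- the argument goes through.
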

\noindent
We leave the computational complexity of separability as future work.

Deterministic separability can be considered also over infinite timed words.
We chose to present the case of finite words
because it allows us to focus on the essential ingredients of this problem.
When going to infinite words, new phenomena appear already in the untimed setting;
for instance, deterministic B\"uchi automata are less expressive than deterministic parity automata,
and thus one should additionally specify in the input which priorities can be used by the separator;
or leave them unspecified and solve a more difficult problem.

Analogous results about separability of register automata can be obtained with techniques similar to the one presented in this paper.
%Timed games with winning conditions recognised by 1-clock alternating timed automata
%(for either player)
We leave such developments for further work.

\bibliography{bib}

\appendix

% !TEX root = main.tex

\section{Missing proofs in Section~\ref{sec:synthesis}}

We first define synthesis games in the untimed setting,
and then formally show that the timed synthesis problem for $0, 0$-controllers is decidable
by reduction to the untimed setting.

\para{Synthesis games}
Let $A$ and $B$ be two finite alphabets of actions
and let $W \subseteq (A \cdot B)^\omega$
be a language of $\omega$-words over the alphabet $A \cdot B$.
The \emph{synthesis game} is played by Player \I and Player \II in rounds.
At round $i \geq 0$,
Player \I chooses an action $a_i \in A$
and then Player \II chooses a \emph{response} $b_i \in B$.
The game is played for $\omega$ rounds,
and at doomsday the two players have produced an infinite play
%
%\begin{align*}
   $ \pi = a_1 b_1 a_2 b_2 \cdots \in (A \cdot B)^\omega.$
%\end{align*}
%
Player \I wins the game if, and only if, $\pi \in W$.

A \emph{controller} for Player \II
is a Mealy machine of the form $\M = \tuple{A, B, \L, \ell_0, \delta}$
where $\L$ is a finite set of memory locations,
$\ell_0 \in \L$ is the initial memory location,
and $\delta : \L \cdot A \to \L \cdot B$ is the update function
mapping the current memory $\ell \in \L$ and input $a \in A$,
to $\delta(\ell, a) = \tuple{\ell', b}$,
where $\ell' \in \L$ is the next memory location and $b \in B$ is an output symbol.
%a
We define by mutual induction the notion of \emph{$\M$-conform partial runs}
$\Run \M  \subseteq \L \cdot (A \cdot B \cdot \L)^*$
and the strategy $\sem \M : \Run \M \cdot A \to \L \cdot B$ induced by the controller on conform runs as follows:
Initially, $\ell_0 \in \Run \M$.
Inductively, for every $n > 0$ and every $\M$-conform partial run
$\pi = \ell_0 (a_1 b_1 \ell_1) \cdots (a_n b_n \ell_n) \in \Run \M$,
for every $a \in A$,
$\sem \M (\pi \cdot a) = \tuple{\ell', b}$
for the unique $\ell', b$ \st 
%$\sem \M (a_1 b_1 \cdots a_{n-1} b_{n-1}) = \tuple{\ell, a_n}$,
$\delta(\ell_n, a) = \tuple{\ell', b}$,
and $\pi \cdot (a  b  \ell') \in \Run \M$.
An infinite $\M$-conform run is any sequence $\pi \in  \L \cdot (A \cdot B \cdot \L)^\omega$
such that every finite prefix thereof is $\M$-conform. By $\rtop \pi$ we denote the infinite play obtained from $\pi$
by dropping locations.

The \emph{synthesis problem} amounts to decide, given $A, B$ and an $\omega$-regular language
$W\subseteq (A\cdot B)^\omega$, whether there is a controller $\M$
\st every infinite $\M$-conform run $\rho$
satisfies $\rtop \rho \not\in W$.

\begin{theorem}[\protect{\cite[Theorem $1'$]{BuchiLandweber:AMS:1969}}]
    \label{thm:BuchiLandweber}
%    For every $\omega$-regular language $W \subseteq (A \cdot B)^\omega$, 
    The synthesis problem is decidable.
\end{theorem}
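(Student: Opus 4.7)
The plan is to reduce the synthesis problem over an $\omega$-regular winning condition to solving a finite-state two-player infinite-duration game of perfect information, which is known to be decidable with positional strategies under $\omega$-regular objectives. First, I would assume that the winning condition $W \subseteq (A \cdot B)^\omega$ is given by a nondeterministic $\omega$-regular device, e.g.\ a Büchi automaton $\B_W$. Applying McNaughton's theorem (or Safra's construction), I would convert $\B_W$ into an equivalent deterministic parity automaton $\mathcal D_W = \tuple{A \cdot B, Q, q_0, \delta_W, \Omega}$ recognising the same language, with a priority function $\Omega : Q \to \N$.

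Next, I would define a finite arena $\mathcal G = \tuple{V_\I \dotcup V_\II, E, \Omega'}$ whose positions encode the current state of $\mathcal D_W$ together with the whose-turn-it-is bit, and optionally the pending Player \I letter waiting to be completed. Concretely, $V_\I = Q$ are positions where Player \I must pick an action $a \in A$, and $V_\II = Q \cdot A$ are positions where Player \II, having seen $a$, must pick a response $b \in B$; the edge relation $E$ connects $q \xrightarrow{a} (q, a) \xrightarrow{b} \delta_W(q, a b)$, and the priority of a position is inherited from $\Omega$ applied to its $Q$-component (with some uniform convention to make priorities well-defined on $V_\II$-states). An infinite play in $\mathcal G$ starting from $q_0$ then induces a unique word in $(A \cdot B)^\omega$, and Player \I wins in $\mathcal G$ iff that word is accepted by $\mathcal D_W$ iff it belongs to $W$, exactly as in the synthesis problem.

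The key step is then to invoke the fact that finite-state parity games are decidable and admit \emph{positional} (memoryless) winning strategies for the winning player, a result due to Emerson--Jutla and Mostowski. Decidability follows because one can compute the winning region of $V_\II$ in $\mathcal G$ in time polynomial in $|\mathcal G|$ and exponential in the number of priorities (and indeed quasi-polynomial by recent results, though this is not needed here). From a positional winning strategy $\sigma : V_\II \to V_\I$ for Player \II restricted to her winning region, I would read off a controller $\M = \tuple{A, B, \L, \ell_0, \delta}$ by taking $\L = Q$, $\ell_0 = q_0$, and defining $\delta(\ell, a) = \tuple{\delta_W(\ell, a b), b}$ where $b$ is the response dictated by $\sigma$ at $(\ell, a)$. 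Conversely, any winning controller directly yields a winning strategy in $\mathcal G$, so Player \II wins the synthesis game if, and only if, she wins $\mathcal G$ from $q_0$.

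The main obstacle in this plan is the determinization step: converting $\B_W$ (or any nondeterministic $\omega$-regular presentation of $W$) into a deterministic parity automaton is the technically heaviest part of the argument, since Büchi automata are not closed under determinization as Büchi automata and one must pass through a more expressive acceptance condition. Everything after that is essentially bookkeeping plus the positional determinacy theorem for parity games; the two-way correspondence between finite-memory controllers and positional strategies in $\mathcal G$ is straightforward because the memory of $\M$ can be taken to be exactly the state space $Q$ of $\mathcal D_W$.
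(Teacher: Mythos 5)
The paper offers no proof of this statement at all: it is imported as a black box, citing Theorem~$1'$ of B\"uchi and Landweber (1969), and is used only as such (in the proof of \Cref{lem:zero:synthesis}). So there is nothing in the paper to compare against; what you have written is a correct, essentially complete proof of the cited result along the now-standard modern route: determinise the $\omega$-regular winning condition into a deterministic parity automaton, unfold it into a finite two-player arena alternating $A$-moves and $B$-moves, and invoke decidability and positional determinacy of parity games (Emerson--Jutla, Mostowski) both to decide the winner and to read off a Mealy-machine controller with memory $Q$. This differs from the original 1969 argument, which predates both parity games and Safra's construction and instead goes through McNaughton's determinisation into Muller-type conditions together with a direct finite-state strategy construction --- which is why the classical statement speaks of \emph{finite-state} rather than positional strategies; your route buys a cleaner memory bound (exactly the state space of the deterministic automaton) at the cost of the heavy determinisation step you correctly identify as the crux. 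The only points you leave implicit are routine: fixing a priority labelling on the intermediate positions in $Q \cdot A$ that does not perturb the parity of a play (e.g.\ recording priorities only when a $Q$-position is visited), and noting that the controller extracted must respond to \emph{every} $a \in A$ even outside the winning region, which is harmless since one may complete $\delta$ arbitrarily there.
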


\lemZeroSynthesis*
\begin{proof}
    Consider a timed synthesis game $\game A B W$ and let $W' = \untime W$.
    Winning $0, 0$-controllers in $\game A B W$
    are in one-to-one correspondence with winning controllers
    in the corresponding untimed synthesis game with winning condition $W'$.
    Indeed, the update function $\delta : \L \cdot A \cdot \regions k m \to \L \cdot B \cdot 2^\X$
    of a $k, m$-controller $\M$ when $k = m = 0$
    can equivalently be presented as a function of type $\L \cdot A \to \L \cdot B$
    (which we take as the update function in the untimed controller $\M'$),
    and all functions of the latter type arise in this way.
    If $\M$ is losing in $\game A B W$,
    then there is a $\M$-conform run $\rho \in W$,
    and thus $\untime \rho$ is a $\M'$-conform run in $W'$,
    showing that $\M'$ is losing in the corresponding untimed synthesis game.
    On the other hand, let $\rho' \in W'$ be $\M'$-conform.
    Since $\M$ does not look at the timestamps,
    we can choose them accordingly in order to find an $\M$-conform timing thereof
    $\rho \in \untimeinv {\rho'} \cap W$.
    Untimed synthesis is decidable by \cref{thm:BuchiLandweber}.
\end{proof}

\section{Missing proofs in Section~\ref{sec:timed:synthesis}}
\label{app:timed:synthesis}

\subsection{Zero-starting winning conditions}
\label{sec:zero:starting}

A timed language $W \subseteq (\Sigma \cdot \Rpos)^\omega$
is \emph{zero-starting} iff all its words
$\tuple{a_0, t_0} \tuple{a_1, t_1} \cdots \in W$
satisfy $t_0 = 0$.
We show that solving an arbitrary timed game
reduces to solving one with a zero-starting winning condition.
Let $G = \game A B W$ be a timed game,
where $W \subseteq (A \cdot B \cdot \Rpos)^\omega$.
We design an equivalent timed game $G' = \game {A'} B {W'}$,
where actions of Player \I are in $A' = A \cup \set{\startletter}$,
and the zero-starting winning condition is $W' = \setof{(\startletter, b, 0) \cdot w}{w\in W}$.
There is a winning $k,m$-controller $\M$ for $G$ if, and only if,
there is a winning $k,m$-controller $\M'$ in $G'$.
Indeed, $\M'$ is obtained from $\M$ by responding arbitrarily to every $\startletter$,
and conversely, $\M$ is obtained from $\M' = \tuple{A, B, \L', \ell_0', \delta'}$
by restricting to $A$ and letting the initial location be the unique $\ell_0$
\st $\delta'(\ell'_0, \startletter, \zeroregion) = \tuple{\ell_0, \_, \_}$.

\subsection{Strictly monotonic winning conditions}
\label{sec:strictly:monotonic}

Solving a timed game $G = \game A B W$ with a monotonic winning condition
$W \subseteq (A \cdot B \cdot \Rpos)^\omega$ 
reduces to solving one $G' = \game {A'} B {W'}$
with a strictly monotonic winning condition $W' \subseteq (A' \cdot B \cdot \Rpos)^\omega$.
We take Player \I's action to be in $A' = A \cdot \set{0, 1}$.
Consider the function $\phi$ mapping a play in $G'$ of the form
\begin{align}
    \label{eq:strict:pi'}
    \pi' =
        \tuple{\tuple{a_0, f_0}, b_0, t_0'}
            \tuple{\tuple{a_1, f_1}, b_1, t_1'}
                \cdots \in (A' \cdot B \cdot \Rpos)^\omega
\end{align}
to a corresponding play in $G$
\begin{align}
    \label{eq:strict:pi}
    \pi = \phi(\pi') = \tuple{a_0, b_0, t_0} \tuple{a_1, b_1, t_1} \cdots \in (A \cdot B \cdot \Rpos)^\omega
\end{align}
where the new sequence of timestamps $t_0t_1\cdots \in \Rpos^\omega$ is defined as
$t_0 = t_0'$ and, inductively, $t_{i+1} = t_i$ if $f_{i+1} = 0$,
and $t_{i+1} = t_{i+1}'$ otherwise.
Let $W_< = \setof{\pi'}{t_0' < t_1' < \cdots}$ be the language of strictly monotonic plays.
The winning condition in $G'$ is then $$W' = \phi^{-1}(W) \cap W_<.$$
We argue that the two games have the same winner.
\begin{lemma}
    If Player \II has a $k, m$-winning controller in $G$,
    then the same holds in $G'$.
\end{lemma}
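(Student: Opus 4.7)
The plan is to construct a $k, m$-controller $\M'$ for $G'$ that simulates $\M$'s state evolution along the induced $G$-play $\phi(\pi')$. The memory of $\M'$ will augment $\M$'s location with enough information to reconstruct $\M$'s clock region in $G$-time, while the $k$ clocks of $\M'$ will be used to detect the $G$-elapse between successive non-stuttering moves. The crucial arithmetic observation is that, by definition of $\phi$, the $G$-elapse between two consecutive $f=1$ moves equals the $G'$-elapse between them; so if $\M'$ resets all its clocks at every such move, its clock region at the next $f=1$ move correctly encodes the $G$-elapse $\delta_i$ of that move.

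First I would define the transition function of $\M'$ case-by-case. On a stuttering input $(a, 0)$, $\M'$ applies $\M$'s transition with zero $G$-elapse (using the stored $\M$-region), updates memory for any resets $\M$ prescribes, and leaves its own clocks untouched. On a non-stuttering input $(a, 1)$ with observed clock region $\varphi$, $\M'$ computes the successor $\M$-region obtained by elapsing by the amount encoded in $\varphi$, applies $\M$'s transition, updates memory, and resets all its own clocks. Next I would prove by induction the invariant that at every step the stored $\M$-state coincides with $\M$'s actual state on $\phi(\pi')$; this gives that $\phi(\pi')$ is $\M$-conform in $G$ whenever $\pi'$ is $\M'$-conform in $G'$.

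To conclude I split on $\pi' \in W_<$: if $\pi' \notin W_<$ then $\pi' \notin W'$ directly; otherwise, since $\M$ is winning in $G$, we have $\phi(\pi') \notin W$, hence $\pi' \notin \phi^{-1}(W) \supseteq W'$. Either way $\M'$ wins against $\pi'$ in $G'$. The main obstacle I foresee is making the region-successor operation well-defined, since the standard $\M$-region $\r$ together with the diagonal region of $\M'$'s clocks does not always determine the region of the pointwise sum $\mu + \delta$ (fractional carries depend on data not captured by either region alone). The remedy is to let $\M'$'s memory additionally track the joint fractional ordering between $\M$'s clocks and $\M'$'s clocks; this information is finite, updatable along the play, and suffices to determine the required successor region, while leaving the clock count and maximal constant of $\M'$ equal to $k$ and $m$.
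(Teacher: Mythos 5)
Your high-level skeleton coincides with the paper's: augment $\M$'s memory with a region component recording $\M$'s clock region along $\phi(\pi')$, treat $f=0$ inputs as zero-elapse steps of $\M$, and conclude by the dichotomy ``either $\pi'\notin W_<$, or $\phi(\pi')$ is $\M$-conform and hence not in $W$''. The gap is in the clock-handling step, precisely at the point you flag, and your proposed remedy does not close it. Knowing the stored region $\region \X m \mu$, the region of the measured elapse $\delta$ (all your clocks equal $\delta$ after resetting them at the last $f=1$ move), and the ordering of $\fract{\mu(\x)}$ versus $\fract{\delta}$ --- or even the full $2k$-clock region of the pair $(\mu,\delta\cdot\mathbf{1})$, diagonal constraints included --- does \emph{not} determine the region of $\mu+\delta$. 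The missing datum is the carry, i.e.\ whether $\fract{\mu(\x)}+\fract{\delta}\geq 1$: the pairs $(\mu(\x),\delta)=(0.3,0.4)$ and $(0.6,0.7)$ agree on every piece of information you propose to store, yet $\mu(\x)+\delta$ is $0.7$ in one case and $1.3$ in the other, so $\M$ may take different transitions. Worse, the ``joint fractional ordering'' cannot even be maintained by a finite-memory device: between two moves $\mu$ is frozen while your clocks advance by an amount you observe only through their own region, so whether $\fract{\delta}$ has overtaken a frozen $\fract{\mu(\x)}$ is not observable (take $\fract{\mu(\x)}=0.5$ and $\delta$ moving from $0.2$ to either $0.3$ or $0.8$ with unchanged integer part). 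So the update in your non-stuttering case is not well defined, and the induction establishing $\M$-conformance of $\phi(\pi')$ breaks.

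The paper sidesteps this arithmetic entirely: $\M'$ does not reset everything at $f=1$ moves and then try to ``add'' a measured elapse to a stored region; instead it issues exactly the reset sets $\Y$ that $\M$ prescribes, so that at an $f=1$ move the relevant region of $\M$'s virtual clocks is meant to be read off directly from $\M'$'s own clocks via the guard $\varphi'$, while at $f=0$ moves the stored region is used (no update beyond resets is needed since no $G$-time passes there). If you want to repair your construction you must keep the physical clocks pointwise synchronised with $\M$'s virtual clocks rather than reconstructing regions by addition --- and note that even then resets performed during an $f=0$ stretch are delicate, since the physical reset occurs at a strictly later $G'$-instant than the $G$-instant at which the virtual clock is zeroed; this is the same difficulty resurfacing and deserves an explicit argument rather than the region arithmetic you propose.
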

\begin{proof}
    Let $\M = \tuple{A, B, \L, \ell_0, \delta}$ be a $k, m$-winning controller for Player \II in $G$.
    We build a winning controller $\M' = \tuple{A', B, \L', \ell'_0, \delta'}$
    for the same player in $G'$ as follows.
    Control locations are $\L' = \L \cdot \regions k m$,
    the initial location is $\ell'_0 = \tuple{\ell_0, \zeroregion}$,
    and the transition relation $\delta'$ is defined,
    for every input $\tuple{\ell, \varphi}, \tuple{a, f}, \varphi'$, as
    \begin{align*}
        \delta'(\tuple{\ell, \varphi}, \tuple{a, f}, \varphi') = 
        \left\{\begin{array}{ll}
            \tuple{\tuple{\ell', \varphi}, b, \Y}
                &\text{ if $f = 0$ and $\delta(\ell, a, \varphi) = \tuple{\ell', b, \Y}$,} \\
            \tuple{\tuple{\ell', \varphi'}, b, \Y}
                &\text{ if $f = 1$ and $\delta(\ell, a, \varphi') = \tuple{\ell', b, \Y}$.}
        \end{array}\right.
    \end{align*}
    Assume $\pi'$ is an $\M'$-conform play as in \eqref{eq:strict:pi'}.
    If it is not strictly monotonic, then $\pi' \not\in W'$ and we are done.
    Otherwise, assume $\pi'$ is strictly monotonic.
    Towards reaching a contradiction,
    assume $\pi' \in \phi^{-1}(W)$.
    Therefore, $\pi = \phi(\pi') \in W$ as in \eqref{eq:strict:pi}.
    By the definition of $\delta'$,
    $\pi$ is $\M$-conform, contradicting that $\M$ is winning.
\end{proof}

\begin{lemma}
    If Player \II has a $k, m$-winning controller in $G'$, then the same holds in $G$.
\end{lemma}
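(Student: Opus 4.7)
The plan is to construct a winning $k,m$-controller $\M$ in $G$ from $\M'$ in $G'$ by internal simulation. Given any monotonic play $\pi$ in $G$, I associate to it a strictly monotonic lift $\pi'$ in $G'$ obtained by setting $f_i = 1$ with $t'_i = t_i$ for non-tied events ($t_i > t_{i-1}$), and $f_i = 0$ with $t'_i = t'_{i-1} + \varepsilon_i$ for tied events ($t_i = t_{i-1}$), where $\varepsilon_i > 0$ is chosen sufficiently small. For any small enough $\varepsilon_i$, the region of $\M'$'s clocks at $t'_i$ in the tied case is the immediate time-successor of the post-reset region from event $i-1$, a quantity which $\M$ can compute from its memory.

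Concretely, I would take $\M$ to have the same clocks $\X$ as $\M'$ and memory $\L = \L' \cdot \regions \X m$, where the second component stores the expected post-reset region $\r_0$ of $\M'$'s clocks in the lift. The initial memory is $(\ell'_0, \zeroregion)$. On input $((\ell, \r_0), a, \r)$, $\M$'s transition function uses the following dichotomy: if $\r \neq \r_0$ (real time has elapsed, lift as $f = 1$), then $\M$ outputs $\delta'(\ell, (a, 1), \r)$; if $\r = \r_0$ (no time elapsed, lift as $f = 0$), then $\M$ outputs $\delta'(\ell, (a, 0), \r^+)$, where $\r^+$ denotes the time-successor of $\r_0$. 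The clock resets are copied from $\delta'$'s response so that $\M$'s clock valuations stay in sync with those $\M'$ would see at the lifted time, up to a vanishing perturbation.

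The winning property is then verified by induction on rounds: for any Player~I strategy producing an $\M$-conform play $\pi$ in $G$, the corresponding lift $\pi'$ with sufficiently small $\varepsilon_i$ is $\M'$-conform, strictly monotonic, and satisfies $\phi(\pi') = \pi$; hence $\M'$'s winning property $\pi' \notin W'$ yields $\pi = \phi(\pi') \notin W$. The main obstacle is the "false-tie" case where $\r = \r_0$ even though $t_i > t_{i-1}$, which arises exactly when $\M'$ performed no reset at event $i-1$ and the post-reset region is stable under the subsequent short time elapse. In this degenerate situation $\M$'s dichotomy misidentifies the event as a tie, and $\phi(\pi') \neq \pi$. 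I would address this by either inserting a distinguishing internal reset of a shared clock at every event (breaking region stability while carefully bookkeeping to remain within $k$ clocks), or by first preprocessing $\M'$ into an equivalent winning controller whose post-reset region is always a boundary region, so that any positive time elapse provably changes the region.
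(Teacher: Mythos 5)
Your construction is essentially the one the paper uses: the paper's controller $\M$ likewise carries an extra region component in its memory, decides the flag by comparing the freshly observed region against the stored one, and feeds $\M'$ the action $(a,0)$ or $(a,1)$ accordingly; correctness is likewise argued by lifting an $\M$-conform play of $G$ to a strictly monotonic $\M'$-conform play $\pi'$ with $\phi(\pi')$ equal to the original play. So in spirit you have rediscovered the paper's proof. Two remarks on details: the paper stores the input region at the last round played with flag $f=1$ rather than the post-reset region, and in your tied branch the region you feed to $\M'$ should not always be the strict time-successor $\r^+$ of $\r_0$ --- after an arbitrarily small positive perturbation the clocks of $\M'$ lie in the immediate successor of $\r_0$ only when $\r_0$ is a boundary region (some clock has integral value $\le m$); when $\r_0$ is stable under small elapse the correct region is $\r_0$ itself.

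The real problem is the one you flag yourself and then leave unresolved: the ``false tie''. When $t_i > t_{i-1}$ but $\r_0$ is stable under small positive elapse (e.g.\ once every clock exceeds $m$), your $\M$ sees $\r = \r_0$, classifies the round as a tie and consults $\delta'(\ell,(a,0),\cdot)$, whereas the faithful lift of the play would have $f_i=1$ and consult $\delta'(\ell,(a,1),\cdot)$; these answers need not agree, so either the lift is not $\M'$-conform or $\phi(\pi')\neq\pi$, and in the latter case $\pi\in W$ no longer gives $\phi(\pi')\in W$ and the contradiction evaporates. Neither of your proposed repairs is carried out, and neither is free: a sentinel clock reset at every round costs a $(k{+}1)$-st clock (and is unavailable when $k=0$), while ``preprocessing $\M'$ so that the post-reset region is always a boundary region'' is a property of the run, not of the controller, and enforcing it again amounts to resetting a clock every round. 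Since the entire difficulty of the lemma is concentrated in exactly this case, the proposal as written is an incomplete proof. To be fair, the paper's own argument is terse at precisely this point --- it asserts that the lifted play is $\M'$-conform ``by the definition of $\delta$'' --- so you have at least put your finger on the step that deserves the most scrutiny.
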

\begin{proof}
    Let $\M' = \tuple{A', B, \L', \ell'_0, \delta'}$ be a $k, m$-winning controller for Player \II in $G'$.
    We assume \wlg that Player \II remembers the input region when the flag $f = 1$ was played last.
    Thus, locations in $L'$ are of the form $\tuple {\ell, \varphi}$.
    We build a winning controller $\M = \tuple{A, B, \L', \ell'_0, \delta}$ for Player \II in $G$
    where 
    \begin{align*}
        \delta(\tuple{\ell, \varphi}, a, \varphi') = 
        \left\{\begin{array}{ll}
            \tuple{\tuple{\ell', \varphi}, b, \Y}
                &\text{ if $\varphi' = \varphi$ and }
                    \delta'(\tuple{\ell, \varphi}, \tuple{a, 0}, \varphi') = \tuple{\tuple{\ell', \varphi}, b, \Y}, \\
            \tuple{\tuple{\ell', \varphi'}, b, \Y}
                &\text{ if $\varphi' \neq \varphi$ and }
                    \delta'(\tuple{\ell, \varphi}, \tuple{a, 1}, \varphi') = \tuple{\tuple{\ell', \varphi'}, b, \Y}.
        \end{array}\right.
    \end{align*}
    Let $\pi$ be a $\M$-conform play and assume towards a contradiction that $\pi \in W$.
    We can chose sufficiently small increments
    in order to make all sequences of equal timestamps in $\pi$ become strictly monotonic,
    and choose the flags $f_i$ accordingly,
    and obtain a play $\pi'$ \st $\pi = \phi(\pi')$.
    By the definition of $\delta$, $\pi'$ is $\M'$-conform.
    But $\pi' \in W'$, contradicting that $\M'$ is winning in $G'$.
\end{proof}

\subsection{Proof of \Cref{lem:Second:Impl}}
\label{app:Second:Impl}

\lemSecondImpl*

\para{Complete winning controllers}

In what follows we restrict to plays satisfying $W^\I_k$.
For proving \Cref{lem:Second:Impl}, the converse of \cref{lem:km:untimed},
we need to understand the general shape
of any possible untimed winning controller
$\M' = \tuple{A', B', \L', \ell'_0, \delta'}$ in $G'$.
We say that such an $\M'$ is \emph{complete}
if its control locations are of the form
$\L' = \L \cdot \regions \X m \cdot \fregions \X$,
$\ell'_0 = \tuple{\ell_0, \zeroregion, \f_0}$,
and every $\M'$-conform run is of the form
\begin{align}
    \label{eq:complete:run}
%    \rho' \; = \; &
        \tuple{\ell_0, \r_0, \f_0}
        \tuple{(a_1, \f_1), (b_1, \Y_1), t_1, \tuple{\ell_1, \r_1, \f'_1}}
        \tuple{(a_2, \f_2), (b_2, \Y_2), t_2, \tuple{\ell_2, \r_2, \f'_2}} \cdots, % \in \iRun {\M'}
\end{align}
where for each $i\geq 1$, the fractional region $\f'_i$ stored in a location agrees with the region $\r_i$, 
its domain $\dom {\f'_i} = \setof{\x\in\X}{\text{there is an $\x$-request at time $u$ with $t_i - 1 < u \leq t_i$}}$, 
and
$\r_i = \region \X m {\mu_i}$ for the clock valuations $\mu_i$ as defined in~\eqref{eq:mudef}.
%
%\slawek{a location $(l_i, \r_i, \T_i)$ from $\L'$ stores tracked clocks $\T_i$ from previous round. This is inconsistent with the def. of $\delta$ in Lemma \ref{lem:untimed:km}.} 
%where the sequence of clock valuations $\mu_0, \mu_1, \dots$
%is defined as $\mu_0 = \lambda \x \cdot 0$,
%and $\mu_{i+1} = \extend {\Y_i} 0 {(\mu_i + t_i - t_{i-1})}$ (with $t_0 = 0$),
%$\bar \ZZ_0 = \ZZ_{0, 0} = \X$,
%and $\r_i$ agrees with $\bar \ZZ_i$.
%
It is not difficult to see that complete winning controllers suffice in $G'$.

\begin{restatable}{lemma}{lemComplete}
    If there is a winning untimed controller $\M'$ in $G'$,
    then there is a winning complete one.
\end{restatable}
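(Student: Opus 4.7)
The plan is to augment any winning untimed controller $\M' = \tuple{A', B', \L', \ell'_0, \delta'}$ with purely observational bookkeeping so that its new memory explicitly carries a region and fractional region of the required kind. Since these two components can be computed deterministically from the play history, the augmentation does not change the input/output behaviour of $\M'$, hence preserves winning.

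Formally, I would define $\M'' = \tuple{A', B', \L'', \ell''_0, \delta''}$ with
\[
    \L'' \;=\; \L' \cdot \regions \X m \cdot \fregions \X,
    \qquad
    \ell''_0 \;=\; \tuple{\ell'_0, \zeroregion, \f_0},
\]
and update function $\delta''$ that, on input $(a, \f)$ from location $\tuple{\ell, \r, \f'}$, first consults $\delta'(\ell, (a, \f)) = (\ell', (b, \Y))$ for the move of $\M'$, then updates the bookkeeping as follows. If $\f$ does not agree with any successor of $\r$, the run already violates condition~\ref{cond:leq} (hence Player~\II is winning there), and we may set the new region/fractional region arbitrarily. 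Otherwise, compute $\hat\r = \xsuccessor{\X,m}{\r}{\f}$; in an improper round ($a = \tick$) set $\r_{\text{new}} = \hat\r$; in a proper round ($a \in A$) set $\r_{\text{new}} = \extend \Y 0 {\hat\r}$. For $\f'_{\text{new}}$, take the unique fractional region agreeing with $\r_{\text{new}}$ whose domain consists of the clocks for which there is an $\x$-request at some $u$ with $t_{\text{new}} - 1 < u \leq t_{\text{new}}$ (this set can be computed from the previous $\dom {\f'}$, the current $\f$, and $\Y$, by dropping clocks that have expired and adding those freshly requested). Finally, return $(\tuple{\ell', \r_{\text{new}}, \f'_{\text{new}}}, (b, \Y))$.

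Next I would verify the two required properties. First, by forgetting the bookkeeping coordinates, every $\M''$-conform run projects to an $\M'$-conform run with the same induced play $\pi$, so any $\pi \in \Play{\M''}$ belongs to $\Play{\M'}$ and is therefore not in $W'_{k,m}$; hence $\M''$ is winning. Second, for any $\M''$-conform run satisfying $W^\I_k$, an easy induction on $i$ using conditions \ref{cond:x}--\ref{cond:leq} shows that the stored $\r_i$ equals $\region \X m {\mu_i}$ (this is essentially property~(*) from the proof of \cref{lem:km:untimed} but for the valuations $\mu_i$ of \eqref{eq:mudef}), and that $\f'_i$ has the prescribed domain and agrees with $\r_i$. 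Thus $\M''$ is complete in the required syntactic and semantic sense.

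The only delicate point, and the step I expect to require the most care, is the bookkeeping update for $\f'$: it must be a function of the current augmented state and the incoming $(a, \f)$ alone. This works because the domain of $\f'_i$ only requires knowing which currently-tracked clocks were requested strictly after $t_i - 1$, and that information is recoverable from $\dom {\f'_{i-1}}$, the expiries reported by Player~\I (encoded in $\f$), and Player~\II's fresh requests $\Y$; the values of the fractional parts themselves are then determined by agreement with $\r_{\text{new}}$. Everything else is routine.
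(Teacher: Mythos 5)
Your proposal is correct and follows essentially the same route as the paper: simulate $\M'$ unchanged and deterministically maintain the region via $\hat\r = \xsuccessor{\X,m}{\r}{\f}$ (resetting $\Y$ on proper moves) and the fractional region by dropping expired clocks and adding freshly requested ones, observing that this bookkeeping is purely observational and so preserves the winning property. The paper obtains $\f'$ directly by restricting $\f$ to $\dom\f\setminus\one\f$ and resetting $\Y$ rather than by appeal to uniqueness of a fractional region agreeing with $\r'$, but on plays satisfying $W^\I_k$ these coincide, so the difference is cosmetic.
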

%
%\lemComplete*
% this is quite evident, we can skip the detailed proof for now
%
\begin{proof}
    When Player \I plays $a' = \tuple{a, \f}$, %  with $\bar \ZZ = \ZZ_0 \cdots \ZZ_n$,
    the complete controller simulates $\M'$.
    Additionally, it uses  the fractional region $\f$ and current region $\r$
    % the domain $\T = \ZZ_0 \cup \cdots \cup \ZZ_n$ in the control,
    to compute the next region $\r'$ (similarly as in the proof of
    \Cref{lem:km:untimed})
    and the next fractional region $\f'$.
    Let $\hat \r = \xsuccessor {\X, m} \r {\f}$, hence $\f$ agrees with $\hat \r$.
    Then, $\r' = \hat \r$  in improper moves,
    and in proper moves of the form $b' = \tuple{b, \Y}$,
    let $\r' = \extend \Y 0 {\hat \r}$.
    Let $\f''$ be restriction of $\f$ to $\dom \f \setminus \one{\f}$, and let
    $\dom {\f'} = \dom {\f''} \cup \Y$ and
    $\f' = \extend \Y 0 {\f''}$ (thus $\dom{\f'}$ possibly increases
    in the case of proper move).
%    In the former case we have $\Y \subseteq \one{\f}$ thanks to condition \ref{cond:Y} satisfied by $\M'$,
%    and  let $\f'$ be restriction of $\f$ to $\dom \f \setminus \one{\f} \cup \Y$.
%    In the latter case we have $\one{\f} \subseteq \Y$ thanks to greedy resets of $\M$, and
%    $\f' = \extend \Y 0 \f$ (thus $\dom {\f'} = \dom \f \cup \Y$ possibly increases). 
    %
    This ensures that $\f'$ agrees with $\r'$ and $\dom {\f'}$ contains all requested clocks.
\end{proof}

\begin{lemma} \label{lem:untimed:km}
    If there is a complete winning untimed controller $\M'$ in $G'$ then
    there is a winning $k, m$-controller $\M$ in $G$.
\end{lemma}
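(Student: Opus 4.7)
The plan is to construct $\M$ by executing $\M'$ on a particular Player~\I strategy in $G'$, namely the strategy that plays proper rounds at the actual Player~\I move times in $G$ and plays improper (tick) rounds at each integer crossing of a currently tracked clock. Since $\M'$ is complete, its memory locations have the form $\tuple{\ell, \r, \f'}$ where $\r$ coincides with the region of the valuation $\mu$ that resets only on proper rounds (as in the analysis for \Cref{lem:km:untimed}). I take $\M$ to reuse $\M'$'s memory locations, the $k$ actual clocks $\X$, and I maintain the invariant that the stored $\r$ always equals $\region \X m \mu$ for $\M$'s current clock valuation. The clocks of $\M$ are reset only on the proper-round response $\Y^\star$ of $\M'$.

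To define the transition function of $\M$, given memory $\tuple{\ell, \r, \f'}$, an input letter $a \in A$, and an input region $\r^*$ (the region of $\mu$ after elapse), I compute from $\r$ and $\r^*$ the chronological sequence of integer-crossing times at which clocks of $\dom{\f'}$ reach an integer value. For each crossing I invoke $\M'$ on an improper input $(\tick, \f)$ where the fractional region $\f$ puts precisely the crossing clocks in $\one \f$ and is otherwise consistent with the intermediate fractional state; $\M'$ returns $(\tick, \Y)$, updating $\ell$ and modifying $\f'$ per~$\Y$, without effect on $\M$'s clocks. After processing all crossings, I invoke $\M'$ on the proper input $(a, \f^*)$ with $\f^*$ consistent with $\r^*$, obtaining $(b, \Y^\star)$. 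The transition then moves to the final memory, outputs $b$, and resets the clocks in $\Y^\star$.

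Validity as a $k, m$-controller is immediate: $\M$ uses the $k$ clocks of $\X$, and all guards belong to $\regions \X m$. The invariant $\r = \region \X m \mu$ is preserved since both sides evolve identically under time elapse (via the successor regions $\xsuccessor {\X, m} \r \f$ used by $\M'$) and under proper-round resets, and are unaffected by improper rounds. For winning, I take any infinite $\M$-conform run $\rho$ in $G$ producing play $\pi$, and lift it to the $\M'$-conform run $\rho'$ in $G'$ obtained by inserting the simulated improper rounds at their crossing times between the proper rounds. By construction $\phi(\rho') = \pi$ and $\rho' \in W^\I_k$. Since $\M'$ is winning, $\rho' \notin W'_{k,m}$, which together with $\rho' \in W^\I_k$ forces $\rho' \notin \phi^{-1}(W)$; hence $\pi \notin W$ and $\M$ is winning.

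The main obstacle is the degenerate case where an integer crossing coincides with a Player~\I move time in $G$, which would force two rounds at the same timestamp in $\rho'$ and violate strict monotonicity. I plan to resolve this by scheduling any coincident improper round immediately before the proper one; since $\M'$ is deterministic and its decisions depend only on the discrete sequence of fractional regions, this ordering is unambiguous and preserves $\M'$-conformity. A secondary technicality is the initial zero-starting round from~$\rhd$, where $\M'$ expects the opening proper move to set up initial tracking; this is handled by precompiling the state reached after $\M'$'s first transition as $\M$'s initial memory.
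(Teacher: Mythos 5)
Your overall construction is the same as the paper's: between two consecutive proper moves you run $\M'$ through the improper tick rounds corresponding to the integer crossings of the tracked clocks (equivalently, through the chain of immediate-successor fractional regions until agreement with the target region), you maintain the invariant that the stored region equals $\region \X m \mu$ for the controller's actual clock valuation, and you lift every $\M$-conform run to an $\M'$-conform run $\rho'$ with $\phi(\rho')=\pi$ and $\rho'\in W^\I_k$, so that $\M'$ being winning forces $\rho'\notin\phi^{-1}(W)$ and hence $\pi\notin W$. That part matches the paper.

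The gap is in your treatment of the degenerate case, and it is not cosmetic. If an integer crossing of a tracked clock coincides with a Player \I move time $t$, inserting an extra improper round ``immediately before'' the proper one either (a) puts two rounds of $\rho'$ at the same timestamp $t$, so $\rho'$ is not strictly monotonic and hence $\rho'\notin W^\I_k$ --- at which point $\rho'\notin W'_{k,m}$ holds vacuously and you can no longer conclude $\rho'\notin\phi^{-1}(W)$; or (b), if you perturb the improper round to some time $t-\epsilon$, it violates conditions \ref{cond:x} and \ref{cond:leq} of $W^\I_k$ (expiry must be announced exactly one time unit after the request, and the submitted fractional region must be the true one), with the same fatal consequence. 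Nor can you silently drop the coincident tick and claim nothing changes: feeding the deterministic $\M'$ the sequence $(\tick,\f)$ followed by $(a,\f')$ is a different input from feeding it $(a,\f)$ alone, so its memory may diverge and the winning property no longer transfers. The correct resolution --- the one the paper uses in its recursive definition of $\delta$ --- is that no separate improper round is inserted at a coincident crossing: the iteration through immediate successors of $\f$ stops as soon as $\f$ agrees with the (possibly boundary) target region $\hat\r$, so the expiry information $\one{\f}\neq\emptyset$ is carried by the fractional region of the proper move itself, keeping $\rho'$ strictly monotonic and inside $W^\I_k$. With that change your argument goes through.
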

\begin{proof}
    Let $\M' = \tuple{A', B', \L', \ell'_0, \delta'}$
    be a winning complete controller in $G'$
    with $\L' = \L \cdot \regions \X m \cdot \fregions \X$,
    $\ell'_0 = \tuple{\ell_0, \r_0, \f_0}$,
    and update function of the form
    $\delta' : \L' \cdot A' \to \L' \cdot B'$.
    We define a winning $k, m$-controller
    $\M = \tuple{A, B, \L', \ell_0', \delta}$ in $G$
    over the same set of control locations $\L'$,
    and update function $\delta : \L' \cdot A \cdot \regions \X m \to \L' \cdot B \cdot 2^\X$.
    In order to define one step of $\delta$ (which corresponds to a proper move)
    we need to take many steps of $\delta'$
    to skip all improper moves preceding the corresponding proper one.
    Let 
    \begin{align} \label{eq:deltatodef}
        \delta(\tuple{\ell, \r, \f}, a, \hat\r) = \tuple{\tuple{\ell'', \r'', \f''}, b, \Y},
    \end{align}
    for $a\in A$, be recursively defined as follows:
    \begin{enumerate}

        \item In the base case, we have $\r \preceq \hat\r$ and $\f$ agrees with $\hat\r$ (as a special case we may have $\r = \hat \r$).
        We apply the transition function of $\M'$ and obtain directly the r.h.s.~in \eqref{eq:deltatodef} as
        $\tuple{\tuple{\ell'', \r'', \f''}, b, \Y} = \delta'(\tuple{\ell, \r, \f}, \tuple{a, \f})$
        where $\r'' = \extend \Y 0 \r$ and $\f''$ agrees with $\r''$.

        \item In the next case, we have $\r \prec \hat \r$ and $\f$ does not agree with $\hat\r$.
        Let $\f'$ be the immediate successor of $\f$,
        and let
        %
        %\begin{align*}
            $\delta'(\tuple{\ell, \r, \f}, \tuple{\tick, \f'})
                = \tuple{\tuple{\ell', \r', \bar\f'}, \tuple{\tick, \_}}$,
        %\end{align*}
        %
        where necessarily $\r' = \xsuccessor {\X, m} \r {\f'}$,
        and $\r'$ agrees with $\f'$.
        Then, we recursively define the r.h.s.~in \eqref{eq:deltatodef} as
        $\tuple{\tuple{\ell'', \r'', \f''}, b, \Y} = \delta(\tuple{\ell', \r', \bar\f'}, a, \hat \r)$.

        \item In any other case,
        $\hat \r$ is not a successor region of $\r$.
        Thanks to completeness \eqref{eq:complete:run},
        $\r$ is the region of the current clock valuation,
        and thus the controller can be defined arbitrarily 
        because Player \II is already winning, 
        since Player \I is losing due to violation of $W^\I_k$.    

    \end{enumerate}
    The recursion above ends, and thus $\delta$ is well-defined,
    since there are only finitely many regions and $\prec$ is a strict total order on regions.

    Consider an infinite $\M$-conform run $\rho \in \iRun \M$.
    By the definition of $\delta$,
    there is a corresponding $\M'$-conform run $\rho' \in \iRun \M$ as in \eqref{eq:complete:run}
    where Player \I in $G'$ plays optimally (satisfying $W^\I_k$),
    and $\rho$ arises from $\rho'$ by combining together adjacent sequences of improper moves:
    Let the proper moves in $\rho'$ be at indices $1 = i_1 < i_2 < \cdots$.
%    and let $j(i)$ be the largest $j$ \st $i_j \leq i$.
    %
    Then, $\rho$ is of the form
    \begin{align*}
        \rho =
            &\tuple{\tuple{\ell_{0}, \r_{0}, \f_{0}}, \mu_{0}}
            \tuple{a_1, b_1, t_{i_1}, \tuple{\ell_{i_1}, \r_{i_1}, \f_{i_1}}, \mu_{i_1}}
            \tuple{a_2, b_2, t_{i_2}, \tuple{\ell_{i_2}, \r_{i_2}, \f_{i_2}}, \mu_{i_2}}
            \cdots, \text{ where } \\
            &\;a_j = \phi(a_{i_j}') \text{ and }
            b_j = \phi(b_{i_j}').
    \end{align*}
    Since Player \I plays optimally when building $\rho'$,
    the corresponding play
    $\pi' = \rtop {\rho'} = \tuple{a'_1, b'_1, t_1} \tuple{a'_2, b'_2, t_2} \cdots $
    is in $W^\I_k$,
    and since $\M'$ is winning,
    $\pi' \in W^\II_{k, m}$ and $\pi' \notin \phi^{-1}(W)$.
    If the corresponding play
    $\pi = \rtop \rho = \tuple{a_1, b_1, t_{i_1}} \tuple{a_2, b_2, t_{i_2}} \cdots$
    in $G$ was winning for Player \I, which means $\pi \in W$,
    since $\phi(\pi') = \pi$
    we would have $\pi' \in \phi^{-1}(W)$, a contradiction.
\end{proof}

\section{Undecidability of timed synthesis for \kNTA{1} conditions}

In this section we show that the timed synthesis problem is undecidable,
thus complementing the decidability results in \cref{sec:timed:synthesis}
about the $k$-timed synthesis problems
when the number of clocks $k$ available to Player \II is fixed in advance.
We show undecidability already in the case when the winning condition of Player \I is a \kNTA 1 language.

\thmSynUnd*

We reduce from the finiteness problem for lossy counter machines,
which is undecidable \cite[Theorem 13]{Mayr:TCS:2003}.
A \emph{$k$-counter lossy counter machine} (\kLCM k) is a tuple $M = \tuple {C, Q, q_0, \Delta}$,
where $C = \set{c_1, \dots, c_k}$ is a set of $k$ counters,
$Q$ is a finite set of control locations,
$q_0 \in Q$ is the initial control location,
and $\Delta$ is a finite set of instructions of the form $\tuple {p, \op, q}$,
where $\op$ is one of $\incr c$, $\decr c$, and $\ztest c$.
A configuration of an \LCM $M$ is a pair $\tuple {p, u}$,
where $p \in Q$ is a control location,
and $u \in \N^C$ is a counter valuation.
For two counter valuations $u, v \in \N^C$,
we write $u \leq v$ if $u(c) \leq v(c)$ for every counter $c \in C$.
The semantics of an \LCM $M$ is given by a (potentially infinite) transition system over the configurations of $M$
\st there is a transition $\tuple {p, \mu} \goesto \delta \tuple {q, \nu}$,
for
%either $\delta = \loss$, $p = q$, and $v \leq u$, or 
$\delta = \tuple {p, \op, q} \in \Delta$, whenever
%
%\begin{inparaenum}[1)]
\begin{enumerate}[label=\textit{\arabic*)}]
    \item $\op = \incr c$ and $\nu \leq \extend c {\mu(c) + 1} \mu$, or
    \item $\op = \decr c$ and $\nu \leq \extend c {\mu(c) - 1} \mu$, or
    \item $\op = \ztest c$ and $\mu(c) = 0$ and $\nu \leq \mu$.
%\end{inparaenum}
\end{enumerate}
The \emph{finiteness problem} (a.k.a.~space boundedness) for an \LCM $M$
asks to decide whether the reachability set
$$\reachset M = \setof{\tuple{p, \mu}} {\tuple {q_0, \mu_0} \goesto {}^* \tuple{p, \mu}}$$
is finite, where $\mu_0 = \lambda c \st 0$ is the constantly $0$ counter valuation.
\begin{theorem}[\protect{\cite[Theorem 13]{Mayr:TCS:2003}}]
    The \kLCM 4 finiteness problem is undecidable.
\end{theorem}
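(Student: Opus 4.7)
The plan is a reduction from the \kLCM{4} finiteness problem, which is stated undecidable just above the theorem. Given a \kLCM{4} $M = \tuple{C, Q, q_0, \Delta}$, I construct a timed synthesis game $\game A B {W_M}$ in which $W_M$ is a \kNTA 1 language and in which Player~\II has a winning controller if and only if $\reachset M$ is finite; since the finiteness problem is undecidable, so is the timed synthesis problem.

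High-level intuition: any \DTA controller used by Player~\II, regardless of how many clocks $k$ or how large a maximal constant $m$ it uses, has only finitely many internal region-states and can therefore reliably distinguish only boundedly many configurations of $M$. I will set up the game so that Player~\II's task is literally to track $M$'s current configuration and certify each next instruction. Concretely, $A = \Delta \cup \set{\mathsf{snap}}$ and $B = \set{\mathsf{valid}, \mathsf{invalid}}$. Player~\I alternates between playing \LCM instructions and playing $\mathsf{snap}$, which she follows by a burst of auxiliary timed events inside a unit interval whose timing encodes the four counter values (each counter being allotted its own sub-interval). After each instruction, Player~\II commits $\mathsf{valid}$ or $\mathsf{invalid}$ depending on whether she believes the enclosing snapshot-instruction-snapshot triple respects the lossy \LCM semantics.

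The winning condition $W_M$ for Player~\I is the union of two existentially specified ``caught-lying'' events: (i) at some step Player~\II committed $\mathsf{valid}$ while the surrounding triple actually violates $\Delta$'s lossy update rules, or (ii) at some step Player~\II committed $\mathsf{invalid}$ while the triple is realisable as a lossy transition of $M$. Each disjunct is local and is recognised by a \kNTA 1 that nondeterministically guesses the offending step and the offending counter, resets its single clock at a designated reference event in the preceding snapshot, and checks one one-sided timing inequality encoding the update rule for that counter; closure of \kNTA 1 under union supplies a \kNTA 1 for $W_M$. For correctness: if $\reachset M$ is finite, Player~\II wins with a \DTA whose clocks and maximal constant are chosen large enough to discriminate all reachable counter values and whose memory records the current configuration, so she certifies every step correctly; if $\reachset M$ is infinite, then for any $k, m$ chosen by Player~\II her controller has boundedly many region-states, so there exist two reachable configurations that collapse to the same state but require different certifications of some next instruction, and Player~\I plays instructions leading to one such configuration and then plays the discriminating instruction, forcing an incorrect commitment that $W_M$ catches. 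The main obstacle is confining the validity checks to a single clock, which I handle by packing the four counters into disjoint sub-intervals of each snapshot---so that one nondeterministic guess isolates the counter at stake---and by exploiting the lossy semantics of $M$, which turns each counter update into a one-sided inequality and hence a single clock reset plus a single distance test per check suffices.
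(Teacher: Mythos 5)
Your proposal does not prove the stated theorem; it proves a different one. The statement under review is that the finiteness (space-boundedness) problem for four-counter lossy counter machines is undecidable. In the paper this is an external result, cited from Mayr and used as a black box; it is not proved there. What you have written is instead a reduction \emph{from} that very statement \emph{to} the timed synthesis problem, i.e., an argument for \Cref{thm:synUnd} (undecidability of timed synthesis with a \kNTA 1 winning condition). You say so explicitly: ``since the finiteness problem is undecidable, so is the timed synthesis problem.'' Relative to the task, this is circular --- you take the conclusion you were asked to establish as your hypothesis and derive something else. Nothing in your proposal addresses why lossy counter machines, whose reachability problem is decidable by well-structuredness, nevertheless have an undecidable finiteness problem; establishing that requires a genuinely different argument (Mayr's proof simulates a perfect counter machine inside a lossy one using a budget counter, so that boundedness of the lossy system hinges on undecidable properties of the perfect computation), or simply the citation the paper gives.

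As a side remark, the reduction you sketch is close in spirit to the paper's actual proof of \Cref{thm:synUnd}: there too Player~\I emits a timed encoding of an \LCM run (counter values encoded as events inside unit intervals, with matching occurrences one time unit apart), Player~\II either certifies correctness or flags a specific, locally checkable error, the winning condition is a union of one-clock-recognisable ``Player~\II was wrong'' events, and the converse direction uses the fact that a $k$-clock deterministic controller can only track boundedly many configurations. So your construction would be a reasonable, if less detailed, sketch for that theorem. It is not, however, a proof of the statement you were given, which can only be discharged by citing or reproducing Mayr's argument.
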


We use the following encoding of \LCM runs
(c.f.~\cite[Definition 4.6]{LW08} for a similar encoding) into timed words.
We assume that there are four lossy counters $C = \set{c_1, c_2, c_3, c_4}$.
A strictly monotonic timed word $u$ (i.e., any two adjacent letters therein occur one strictly after the other)
over alphabet $C$ whose untiming is of the form
$\untime u = c_1^{n_1} c_2^{n_2} c_3^{n_3} c_4^{n_4}$
encodes the counter valuation $\mu \in \N^C$ defined by
$\mu(c_j) = n_j$ for every $j \in \set{1, 2, 3, 4}$.
In this case, we slightly abuse notation and write $u(c_j) = n_j$.
A timed word $\pi_A$ over alphabet $A = Q \cup \Delta \cup C$
is a correct encoding of an \LCM run
\begin{align*}
    \tuple{p_n, u_n}
        \goesto {\delta_{n-1}} \tuple{p_{n-1}, u_{n-1}}
            \goesto {\delta_{n-2}} \cdots
                \goesto {\delta_0} \tuple{p_0, u_0}
\end{align*}
if its untiming is of the form
%\slawek{this condition should be also added to $W$}
%
\begin{align*}
    \untime {\pi_A} = p_0 u_0 \delta_0 \quad \cdots \quad p_{n-1} u_{n-1} \delta_{n-1} \quad p_n u_n
     %\in (Q \cdot \set {c_1}^* \set {c_2}^* \set {c_3}^* \set {c_4}^* \cdot \Delta)^* \cdot \Delta^?,
\end{align*}
and the following conditions are satisfied:
\renewcommand{\labelenumii}{(C\arabic{enumi}.\arabic{enumii})}
\renewcommand{\labelenumiii}{(C\arabic{enumi}.\arabic{enumii}.\arabic{enumiii})}

\begin{enumerate}[label=(C\arabic*)]

    \item \label{cond:1}
    for every $i$,
    $p_i \in Q$, $u_i \in \set {c_1}^* \set {c_2}^* \set {c_3}^* \set {c_4}^*$,
    and $\delta_i$ is a transition of the form $\delta_i = \tuple{p_{i+1}, \op, p_i}$;

    \item \label{cond:2} $p_0$ occurs at time 0;

%    \item \label{cond:2} for every $0 \leq i < n$, $\delta_i$ and the first letter in $u_i$
%    occur at the same time as $p_i$;

    \item \label{cond:3} for every $0 \leq i < n$, $p_{i+1}$ occurs exactly one time unit after $p_i$;

    \item \label{cond:4} $\pi_A$ is strictly monotonic;

%    \slawek{the 'easy' conditions C1-C4 can be added to $W$ as they are, and ERROR1-ERROR4 can be eliminated. Thus Player I is punshed by herself if C1-C4 are violated}

    \item \label{cond:5}
    for every transition $\delta_i = \tuple{p_{i+1}, \op, p_i}$
    and counter $c_j \in \set {c_1, c_2, c_3, c_4}$,
    
    \begin{enumerate}

        \item \label{cond:5:1}
        if $\op = \incr {c_j}$,
        then each occurrence of $c_j$ in $u_i$
        is followed by an occurrence of $c_j$ in $u_{i+1}$ after exactly one time unit,
        perhaps with the exception of the last occurrence of $c_j$ in $u_i$;
        consequently, $u_{i+1}(c_j) \geq u_i(c_j) - 1$.

        \item \label{cond:5:2}
        if $\op = \decr {c_j}$, then
        \begin{enumerate}
            
            \item \label{cond:5:2:1} each occurrence of $c_j$ in $u_i$
            is followed by an occurrence of $c_j$ in $u_{i+1}$
            after exactly one time unit,
            and moreover
            
            \item \label{cond:5:2:2} the last occurrence of $c_j$ in $u_{i+1}$
            does not have a matching occurrence one time unit earlier in $u_i$;
            
        \end{enumerate}
        consequently, $u_{i+1}(c_j) \geq u_i(c_j) + 1$.
            
        \item \label{cond:5:3}
        if $\op = \ztest {c_j}$,
        then $u_{i+1}(c_j) = u_i(c_j) = 0$.
        %L: this extra condition is just a reprhasing of the above%and neither $u_i$ nor $u_{i+1}$ contain any occurrence of $c_j$.

        \item \label{cond:5:4}
        otherwise,
        each occurrence of $c_j$ in $u_i$
        is followed by an occurrence of $c_j$ in $u_{i+1}$ after exactly one time unit;
        consequently, $u_{i+1}(c_j) \geq u_i(c_j)$.

    \end{enumerate}

\end{enumerate}

We design a game where Player \I builds encodings of \LCM runs as above;
accordingly, let her actions be $A$.
Player \II either plays $\OK$ when she believes that the encoding so far does not contain any mistake,
or she will play an action of the form $\ERROR_e$
when she believes that an error of type $e$ occurred (to be explained below),
where
$$e \in \set{1, 2, 3, 4} \cup \set{5.1, 5.2.1, 5.2.2, 5.3, 5.4} \cdot \set{c_1, c_2, c_3, c_4} \cdot \set{T_1, T_2}.$$ 
Let $\pi = a_1 b_1 t_1 \cdots a_i b_i t_i \in (A \cdot B \cdot \Rpos)^*$
be the actions played till the end of round $i$,
and let $\pi_A = a_1 t_1 \cdots a_n t_n \in (A \cdot \Rpos)^*$
be the corresponding purported encoding of (a prefix of) an \LCM run.
Let $a_j$ be the last action of the form $a_j = \delta = \tuple{\_, \op, \_}$.
%We call $\op$ the \emph{current operation} being checked by Player \II.
%
The most common type of error in the encoding is that a $c_j$
does not have a matching occurrence of $c_j$ one time unit later.
There are two possible ways in which such a disappearence may occurr:
\begin{enumerate}
    \item[$T_1$:]
    %Letter $c_j$ occurs at time $t$ and either a letter $\neq c_j$ occurs at time $t+1$
    Letter $c_j$ occurs at time $t = t_i - 1$ and $a_i \neq c_j$.

    \item[$T_2$:]
    %Letter $c_j$ occurs at time $t$ and no letter at all occurs at time $t+1$.
    Letter $c_j$ occurs at some time $t_{i-1} - 1 < t < t_i - 1$.
\end{enumerate}
We require Player \II to specify precisely which variant $X \in \set{T_1, T_2}$
of the error actually occurred.
It will be convenient to define the predicate $P(c_j, X)$
which holds if Player \II incorrectly marks the disappearance of $c_j$,
i.e., either $X = T_1$ and if there is an earlier occurrence of $c_j$ at time $t = t_i - 1$ then $a_i = c_j$, or
$X = T_2$ and there is an earlier occurrence of $c_j$ at time $t \in \set{t_{i-1}, t_i}$
(both conditions are \kNTA 1-recognisable).
%
%For a timed word 
%and an untimed word $\pi_B = b_1 \cdots b_n \in B^*$ of the same length,
%let $\pi_A \tensor \pi_B$ be their timed interleaving $a_1 b_1 t_1 a_2 b_2 t_2 \cdots a_n b_n t_n \in (A \cdot B \cdot \Rpos)^*$
%(which is undefined if their lengths differ).
%
% The winning condition for Player \I
% %\slawek{$W_M$ contains finite words; the winning condition is $\omega$-words that contains such finite prefix}
% is the set of infinite timed words $W_M \subseteq (A \cdot B \cdot \Rpos)^\omega$
% \st every finite prefix thereof 
% satisfies one (or more) of the following conditions:
% %
% Let $\pi = \pi_A \tensor \pi_B$ 
% with $\pi_A$ ending in $a_i t_i$ (last action of Player \I)
% and $\pi_B$ ending in $b_i \in B$ (last action of Player \II).
% = \set{\OK, \ERROR}$.
%\slawek{$B$ undef}
%
%where $\OK$ means that the encoding build so far does not contain mistakes
%and $\ERROR$ means that the last letter produced by Player \I introduces an irrecoverable mistake in the encoding.
%
We are now ready to define the winning condition of the game.
If Player \II plays $\OK$ but $\pi_A$ contains an error
violating one of the conditions \ref{cond:1}--\ref{cond:5},
then the game ends and Player \I wins immediately.
(Plays of this form can be recognised by a \kNTA 1 as in \cite{LW08}).)
If Player \II plays $\ERROR_e$,
then the game ends and Player \I wins iff an error of type $e$ did not occur.
This is the case if any of the following conditions mimicking \ref{cond:1}--\ref{cond:5} holds:

\begin{enumerate}[label=(W\arabic*),ref=(W\arabic*)]
    % \item \label{W:cond:1} Uncaught error:
    % $b_1 = \dots = b_i = \OK$ but
    % %either 
    % %$\untime {\pi_A}$ is not of the form
    % %$(Q \cdot \set {c_1}^* \set {c_2}^* \set {c_3}^* \set {c_4}^* \cdot \Delta)^* \cdot \Delta^?$,
    % %or $a_i = p$ but $a_{i-1}$ was not a transition of the form $\delta = \tuple{p, \_, \_}$,
    % %or $a_i$ is a transition of the form $\delta = \tuple{\_, \_, q}$
    % %but the most recent control location played by Player \I before $a_i$ was not $q$,
    % %or
    % $\pi_A$ violates 

    \item \label{W:cond:1} Player \II played $b_i = \ERROR_1$
    but \ref{cond:1} is satisfied.

    \item \label{W:cond:2} Player \II played $b_i = \ERROR_2$
    but \ref{cond:2} is satisfied.
%    \slawek{$b_1 {=} \ERROR_2$?}
    
    \item \label{W:cond:3} Player \II played $b_i = \ERROR_3$
    but \ref{cond:3} is satisfied.

    \item \label{W:cond:4} Player \II played $b_i = \ERROR_4$
    but \ref{cond:4} is satisfied.
%\slawek{I would remove $\ERROR_1\dots\ERROR_4$ but require cond.1)--4) in $W_M$. I guess they are here to keep $W_M$ 1-NTA? Maybe at least $\ERROR_1, \ERROR_2$ can be removed?}

    % \item \label{W:cond:6} Player \II incorrectly marks the first disappearence of a $c_j$:
    % She played $b_i = \ERROR_{5.1, c_j, \text{first}}$
    % and either the most recent action of the form $\delta = \tuple{\_, \op, \_}$
    % played by Player \I has $\op \neq \incr {c_j}$,
    % or there is an earlier occurrence of $c_j$ at time $t$
    % \st either $t + 1 = t_{i-1}$ and $a_{i-1} = c_j$,
    % or $t + 1 = t_i$ and $a_i = c_j$.

    % \item \label{W:cond:7} Player \II incorrectly marks the second disappearence of a $c_j$:
    % Player \II played $b_i = \ERROR_{5.1, c_j, \text{second}}$
    % and either she did not play $\ERROR_{5.1, c_j, \text{first}}$ earlier in the same configuration,
    % or there is an earlier occurrence of $c_j$ at time $t$
    % \st either $t + 1 = t_{i-1}$ and $a_{i-1} = c_j$,
    % or $t + 1 = t_i$ and $a_i = c_j$.

    \item \label{W:cond:5} 
    Player \II incorrectly marks that condition \ref{cond:5} is not satisfied:
        \begin{enumerate}[label=(W\arabic{enumi}.\arabic*),ref=(W\arabic{enumi}.\arabic*)]

            \item \label{W:cond:5:1}
            Player \II plays $b_i = \ERROR_{5.1, c_j, X}$ % with $X \in \set{A, B}$
            and either $\op \neq \incr {c_j}$,
            or $P(c_j, X)$ holds,
            or there is an occurrence of $c_j$ at some time $t_{i-1} - 1 \leq t \leq t_i - 1$
            which is immediately followed by another occurrence of $c_j$
            (and thus it is not the last one).
%            \slawek{in the last case marking the error is correct!}

            \item \label{W:cond:5:2}
            Payer \II plays $b_i = \ERROR_{5.2.N, c_j, X}$
            % with $N \in \set{1, 2}$ and $X \in \set{A, B}$
            and either $\op \neq \decr {c_j}$, or

            \begin{enumerate}[label=(W\arabic{enumi}.\arabic{enumii}.\arabic*),ref=(W\arabic{enumi}.\arabic{enumii}.\arabic*)]
                \item \label{W:cond:5:2:1}
                $N = 1$ and $P(c_j, X)$, or

                \item \label{W:cond:5:2:1}
                $N = 2$ ($X$ is irrelevant in this case) and:
                either $a_i = c_j$ (thus the last occurrence of $c_j$ has possibly not been seen);
                or $a_{i-1} = c_j, a_i \neq c_j$ (thus the last occurrence has been seen),
                but there is an occurrence of $c_j$ at time $t = t_{i-1}-1$
                (this last occurrence has a match one time unit before).
            \end{enumerate}

            \item \label{W:cond:5:3}
            Player \II plays $b_i = \ERROR_{5.3, c_j, X}$ ($X$ is irrelevant in this case)
            and either $\op \neq \ztest {c_j}$,
            or there is no occurrence of $c_j$ in the last two configurations $u_{i-1}, u_i$.

            \item \label{W:cond:5:4}
            Player \II plays $b_i = \ERROR_{5.4, c_j, X}$ %with $X \in \set{A, B}$
            and either $\op$ involves counter $c_j$, or $P(c_j, X)$.
        \end{enumerate}
\end{enumerate}
Finally, if the game goes on forever, then Player \I loses.
All conditions \ref{W:cond:1}--\ref{W:cond:5} are \kNTA 1 recognisable
(condition \ref{W:cond:5:3} is even untimed),
and so is their disjunction.
The following lemma states correctness of the reduction.
\begin{lemma}
	The set of reachable configurations $\reachset M$ is finite
	if, and only if,
	there is a winning controller for Player \II in the game.
\end{lemma}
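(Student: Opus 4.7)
The proof establishes the two implications separately.

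For the forward direction, I assume $\reachset M$ is finite and construct a winning controller for Player \II. Since $\reachset M$ is finite, counter values in reachable configurations are bounded by some constant $B$. The controller I build maintains in its control location the set $S \subseteq \reachset M$ of all LCM configurations still consistent with Player \I's play so far, a value ranging over a finite family. It uses $O(B)$ clocks to verify the timing conditions of the encoding, in particular condition (C5), which requires each counter letter to have (or not have) a matching occurrence exactly one time unit later depending on the current transition type. At each round, if $S$ remains nonempty and the timing checks succeed, the controller plays $\OK$; otherwise it identifies the first violated condition among (C1)--(C5) and plays the corresponding $\ERROR_e$. If Player \I never deviates from a valid encoding, the game is infinite and Player \II wins by default; if she deviates, her first deviation is detected and Player \II plays $\ERROR_e$ correctly.

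For the converse, I assume $\reachset M$ is infinite and show that no $k,m$-controller suffices for Player \II. Unboundedness of $\reachset M$ implies that reachable configurations have arbitrarily large counter values and hence arbitrarily long encodings. Fix any $k,m$-controller $\M$ for Player \II; the product of its locations with the $\X, m$-regions yields a finite abstract state space. Player \I plays a valid encoding of a run reaching a configuration $(p, u)$ with $\sum_j u(c_j)$ much larger than the size of this abstract state space. If $\M$ ever responds $\ERROR_e$ during this play, that response is false since the encoding is valid, and Player \I wins immediately. Otherwise $\M$ plays $\OK$ throughout, and by pigeonhole within the block encoding $u$ there exist positions $i < j$ at which $\M$ occupies the same abstract state with matching fractional-region information on its clocks. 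Player \I then modifies her play from round $j+1$ onward to replay, with timestamps shifted by $t_j - t_i$, the continuation that followed round $i$; since $\M$ is deterministic and its future responses depend only on its current abstract state and the received abstract inputs, it still plays $\OK$ throughout. Yet the resulting play violates condition (C5) because the shortened block no longer faithfully encodes $u$, so Player \I wins through the incorrect $\OK$ response of $\M$.

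The main obstacle is making the pumping construction rigorous: ensuring that the modified play respects the structural constraints of the encoding (strict monotonicity, one-unit spacing of control letters, integer offsets between matching counter letters) while injecting a detectable error of type (C5) that $\M$'s bounded abstract memory cannot distinguish. I would handle this by a refined pigeonhole argument, choosing $i, j$ that additionally share fractional-region information for clocks so that shifting the continuation by $t_j - t_i$ preserves all region-level data visible to $\M$; only finitely many additional parameters are needed for this refinement once $\sum_j u(c_j)$ is chosen large enough. Together with undecidability of the $4$-counter \LCM finiteness problem, this reduction establishes undecidability of the timed synthesis problem, and the winning condition is expressible as a \kNTA 1 language as described in the construction above.
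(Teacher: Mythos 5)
Your forward direction is essentially the paper's: a bound $B$ on reachable counter values yields a deterministic machine with $O(B)$ clocks (one per position of a block) that recognises correct encodings and outputs $\OK$ or the appropriate $\ERROR_e$. That part is fine.

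The converse is where you diverge from the paper, and where there is a genuine gap. The paper does \emph{not} pump: it turns the hypothetical controller $\M$ into a \DTA $A$ accepting exactly the correct encodings, observes that a $k$-clock automaton reading a block with more than $k$ counter letters must \emph{forget} at least one timestamp $\tuple{c_1,t}$ (no clock stores it), and then perturbs the matching occurrence at $t+1$ to $t'$ with $t'-t\neq 1$; since no clock remembers $t$, the perturbed word traverses the same regions and is still accepted, contradiction. Your argument instead picks two positions $i<j$ inside a long block where $\M$ is in the same location and region, and replays the continuation of round $i$ from round $j+1$ shifted by $t_j-t_i$, claiming $\M$ keeps answering $\OK$. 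The claim does not follow from region equality: region (or fractional-region) equivalence of the clock valuations at $i$ and $j$ is \emph{not} preserved under adding the same real delay to both. If a clock has value $0.3$ at position $i$ and $0.35$ at position $j$ (same region), a delay $\delta$ with $0.3+\delta<1<0.35+\delta$ sends the two valuations into different regions, so the deterministic controller may take different transitions on the shifted suffix and is free to emit a correct $\ERROR_e$. The analogous pumping in the paper (Lemma 5.4) only works because there the repeated timestamps differ by an \emph{integer} $\Delta$ and the relevant reset times all lie in a single unit interval, so an automorphism of $\tuple{\R,\leq,+1}$ can realign them; here $t_j-t_i$ is a non-integer distance between two letters of the same block, and the one-time-unit matching constraints that define the encoding make a non-integer shift impossible to absorb. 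Your "refined pigeonhole on fractional regions" does not address this, because the obstruction is not which pair $(i,j)$ you choose but the fact that a uniform shift is the wrong retiming operation. To close the argument you would need either the paper's forgotten-timestamp perturbation (which sidesteps pumping entirely) or a retiming by an automorphism of $\tuple{\R,\leq,+1}$ that is compatible with all exact one-unit matchings across blocks, which a constant shift by $t_j-t_i$ is not.
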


\begin{proof}
    For the ``only if'' direction, assume that $\reachset M$ is finite.
    There is some $k$ s.t.~every reachable configuration $\tuple {p, \mu}$
    has size $\mu(c_1)+\mu(c_2)+\mu(c_3)+\mu(c_4)+1 \leq k$.
    In this case, the set of correct timed encodings of runs of $M$
    can be recognised by a \kDTA {(k+2)} $A$
    which resets clock $\x_j$ when reading the $j$-th position of block $p_i u_i \delta_i$
    (which is of length $\leq k + 2$).
    From $A$ we can immediately produce a winning controller for Player \II with $k$ clocks:
    The controller reads the word and checks membership in $\lang A$,
    outputting $\OK$ when membership holds and the appropriate error $\ERROR_e$ otherwise.
    The exact error $e$ can deterministically be determined
    by looking at the values of the clocks $\x_1, \dots, \x_{k+2}$ (details omitted).

    For the ``if'' direction, assume that $\reachset M$ is infinite,
    and thus there exist reachable configurations with arbitrarily large counter values.
    Suppose, towards reaching a contradiction,
    that Player \II has a winning controller $\M$ with $k$ clocks.
    We can see $\M$ as a \kDTA k which additionally produces at each step
    an action of the form $\OK$ or $\ERROR_e$
    (in a deterministic manner, just based on the current input and state).
    We can produce a \kDTA k $A$ by removing all transition outputting actions of the form $\ERROR_e$,
    remove the output labelling $\OK$ from the remaining transitions,
    and make all the remaining reachable control locations accepting.
    Since $\M$ is winning, it outputs $\OK$ precisely when the encoding is correct.
    Therefore, the $A$ just constructed recognises precisely the set of correct encodings of runs of $M$.
    We show that this leads to a contradiction, using the fact that $M$ is unbounded.
    There exists a run $\pi$ of $M$ where some counter value exceeds $k$,
    and thus when $A$ reads the reversal-encoding of $\pi$
    it must forget some timestamp (say) $\tuple {c_1, t}$
    from configuration $p_i \delta_i u_i$.
    Since $t$ is forgotten, we can perturb its corresponding $\tuple {c_1, t+1}$
    in $p_{i+1} \delta_{i+1} u_{i+1}$
    to any value $\tuple {c_1, t'}$ \st $t' - t \neq 1$
    and obtain a new word still accepted by $A$,
    but which is no longer the reversal-encoding of a run of $M$,
    thus reaching the sought contradiction.
\end{proof}

\end{document}